\newtheorem{fact}[theorem]{Fact}
\newenvironment{fminipage}%
{\begin{Sbox}\begin{minipage}}%
		{\end{minipage}\end{Sbox}\fbox{\TheSbox}}
\def\prob#1#2{\mbox{Pr}_{#1}\left[ #2 \right]}
\def\expec#1#2{{\mathbb{E}}_{#1}\left[ #2 \right]}
\def\defeq{\stackrel{\mathrm{def}}{=}}
\def\sgn#1{\mathrm{sgn} (#1)}
\def\abs#1{\left|#1  \right|}
\def\norm#1{\left\| #1 \right\|}
\def\calA{\mathcal{A}}
\newcommand\bb{\boldsymbol{\mathit{b}}}
\newcommand\pp{\boldsymbol{\mathit{p}}}
\newcommand\vv{\boldsymbol{\mathit{v}}}
\newcommand\ww{\boldsymbol{\mathit{w}}}
\newcommand\yy{\boldsymbol{\mathit{y}}}
\newcommand\xx{\boldsymbol{\mathit{x}}}
\renewcommand\AA{\boldsymbol{\mathit{A}}}
\newcommand\BB{\boldsymbol{\mathit{B}}}
\newcommand\II{\boldsymbol{\mathit{I}}}
\newcommand\MM{\boldsymbol{\mathit{M}}}
\renewcommand\SS{\boldsymbol{\mathit{S}}}
\newcommand\UU{\boldsymbol{\mathit{U}}}
\newcommand\WW{\boldsymbol{\mathit{W}}}
\newcommand\YY{\boldsymbol{\mathit{Y}}}
\newcommand\ZZ{\boldsymbol{\mathit{Z}}}
\newcommand\Otil{\widetilde{O}}
\newcommand\lw{{\overline{\boldsymbol{w}}}}
\newcommand\LW{{\overline{\boldsymbol{\mathit{W}}}}}
\newcommand\barA{{\overline{\boldsymbol{\mathit{A}}}}}
\newcommand{\comment}[1]{}
\newcommand{\norme}[1]{\norm{#1}_2}
\newcommand{\R}{\mathbb{R}}
\newcommand{\ones}{\mathbbm{1}}
\title[$\ell_1$ Regression using Lewis Weights and SGD]{$\ell_1$ Regression using Lewis Weights Preconditioning and Stochastic Gradient Descent}
\begin{document}

\maketitle

\begin{abstract}
We present preconditioned stochastic gradient descent (SGD) algorithms for the $\ell_1$ minimization problem $\min_{\xx}\|\AA \xx - \bb\|_1$ in the overdetermined case, where there are far more constraints than variables. Specifically, we have $\AA \in \R^{n \times d}$ for $n \gg d$. Commonly known as the Least Absolute Deviations problem, $\ell_1$ regression can be used to solve many important combinatorial problems, such as minimum cut and shortest path. SGD-based algorithms are appealing for their simplicity and practical efficiency.
Our primary insight is that careful preprocessing can yield preconditioned matrices $\tilde{\AA}$ with strong properties (besides good condition number and low-dimension) that allow for faster convergence of gradient descent. In particular, we precondition using Lewis weights to obtain an isotropic matrix with fewer rows and strong upper bounds on all row norms. We leverage these conditions to find a good initialization, which we use along with recent smoothing reductions and accelerated stochastic gradient descent algorithms to achieve $\epsilon$ relative error in $\Otil(nnz(\AA) + d^{2.5} \epsilon^{-2})$ time with high probability, where $nnz(\AA)$ is the number of non-zeros in $\AA$. This improves over the previous best result using gradient descent for $\ell_1$ regression. We also match the best known running times for interior point methods in several settings.

Finally, we also show that if our original matrix $\AA$ is approximately isotropic and the row norms are approximately equal, we can give an algorithm that avoids using fast matrix multiplication and obtains a running time of $\Otil(nnz(\AA) + s d^{1.5}\epsilon^{-2} + d^2\epsilon^{-2})$, where $s$ is the maximum number of non-zeros in a row of $\AA$. In this setting, we beat the best interior point methods for certain parameter regimes.

\end{abstract}

\begin{keywords}
$\ell_1$ regression, stochastic gradient descent, Lewis weights
\end{keywords}

\section{Introduction}\label{sec:intro}


Stochastic gradient descent (SGD) is one of the most widely-used practical algorithms for optimization problems due to its simplicity and practical efficiency
\citep{NesterovV08, NemirovskiJLS09}. We consider SGD methods to solve the unconstrained overdetermined $\ell_1$ regression problem,
commonly known as the Least Absolute Deviations problem,
which is defined as follows:

\begin{align}
\min_{\xx \in \mathbb{R}^d}  \norm{\AA \xx - \bb}_1,\label{eq:l1}
\end{align}
where $\AA \in \mathbb{R}^{n \times d}$, $\bb \in \mathbb{R}^{n}$ and $n \gg d$. Compared to Least Squares ($\ell_2$) regression, the $\ell_1$ regression problem is more robust and is thus useful when outliers are present in the data. Moreover, many important combinatorial problems, such as minimum cut or shortest path,
can be formulated as $\ell_1$ regression problems \citep{CMMP13}, and high accuracy $\ell_1$ regression can be used to solve general linear programs.\footnote{For instance, one can determine if $\{\xx|\AA \xx - \bb, \xx \ge 0\}$ is feasible by writing an objective of the form $\alpha(\norm{\AA \xx - \bb}_1 + \norm{\xx}_1 + \norm{\xx - \beta \ones}_1)$ where $\alpha$ and $\beta$ are sufficiently large polynomials in the input size.} Since \cref{eq:l1} can be formulated as a linear program \citep{PortnoyK97, ChenDS01}, generic methods for solving linear programs, such as the interior-point method (IPM), can be used to solve it \citep{PortnoyK97, Portnoy97, MengM13mapreduce, LeeS15}.

SGD algorithms are popular in practice for $\ell_1$ and other regression problems because they are simple, scalable, and efficient. State-of-the-art algorithms for solving \cref{eq:l1} utilize sketching techniques from randomized linear algebra to achieve $\text{poly}(d,\epsilon^{-1})$ running times, whereas a naive extension of Nesterov acceleration \citep{Nesterov83} to certain classes of non-smooth functions \citep{Nesterov05smooth, Nesterov05gap, Nesterov07, AllenZhuH16} takes $\text{poly}(n,\epsilon^{-1})$ time. This difference is significant because $n \gg d$ in the overdetermined setting. Ideally, the only dependence on $n$ in the running time will be implicitly in an additive $nnz(\AA)$ term.  

Sketching techniques from randomized numerical linear algebra look to find a low-distortion embedding of $\AA$ into a smaller subspace, after which popular techniques for $\ell_1$ regression can be applied on the reduced matrix. Efforts to build these sampled matrices or ``coresets"
have been made using random sampling \citep{Clarkson05},
fast matrix multiplication \citep{SohlerW11},
and ellipsoidal rounding \citep{DasguptaDHKM09, ClarksonDMMMW13}.
All of these methods produce coresets of size 
$\text{poly}(d,\epsilon^{-1}) \times d$ in time $O(n \cdot \text{poly}(d))$.
\cite{MengM13embedding} and \cite{WoodruffZ13} improve these
techniques to produce similar coresets in $O \left( nnz(\AA)+\text{poly}(d) \right)$ time.


In addition to using sketching as a preprocessing step, one can also apply \emph{preconditioners}. Preconditioners transform the input matrix into one with additional desirable properties, such as good condition number, that speed up subsequent computation. For our setting, we will use the term ``preconditioning'' to refer to dimensionality reduction followed by additional processing of the matrix. Of particular interest for our setting is the preconditioning technique of \cite{cohenpeng}, which utilizes a Lewis change of density \citep{Lewis78} to sample rows of $\AA$ with probability proportional to their Lewis weights such that the sampled matrix $\tilde{\AA}$ approximately preserves $\ell_1$ distances, which is to say that $\norm{\AA \xx}_1 \approx ||\tilde{\AA}\xx||_1$ for any vector $\xx$. Lewis weights are in essence the ``correct'' sampling weights for dimensionality reduction in $\ell_1$ regression, and they are used by the previous best SGD-based methods for solving \cref{eq:l1} \citep{YangCRM16}. As it turns out, Lewis weights also lead to nice $\ell_2$ conditions for the sampled matrix. One of the key insights of this paper is to leverage these additional guarantees to obtain significantly faster running times for SGD after Lewis weight preconditioning.

\paragraph{Techniques} \label{subsec:results}

Our techniques for solving the $\ell_1$ regression problem follow the popular paradigm of preconditioning and then using gradient descent methods on the resulting problem. Typically, the preconditioner is a black-box combination of a sketching method with a matrix rotation, which yields a well-conditioned low-dimensional matrix. The crucial idea in this paper is that the sketch can give us some strong properties in addition to the low-dimensional embedding. By more tightly integrating the components of the preconditioner, we obtain faster running times for $\ell_1$ regression.

In particular, preconditioning the given matrix-vector pair $[\AA\;\bb]$ using Lewis weights \citep{cohenpeng}
achieves a low-dimensional $[\tilde{\AA}\;\tilde{\bb}]$ such that
$\norm{\AA \xx - \bb}_1 \approx \|\tilde{\AA} \xx - \tilde{\bb}\|_1$
for every $\xx \in \mathbb{R}^{d}$. 
It is then possible to apply the low-dimensional embedding properties of Lewis weights in a black-box manner to $\ell_1$ regression algorithms, using the fact that this embedding reduces the row-dimension from $n$ to $O(d \epsilon^{-2}\log n)$ and then plugging these new parameters into the runtime guarantees.
Our critical observation will be that sampling by Lewis weights also has the important property that all leverage scores of the new matrix are approximately equal.
Since rotations of a matrix do not change its leverage scores, we are free to rotate $\tilde{\AA}$ to place it into isotropic position, in which case the leverage score condition implies that the row norms are tightly bounded.

The isotropic and row norm conditions yield two essential phenomena. First, a careful choice of initial vector can be shown to be close to optimal. Second, we get strong bounds on the gradient of any row. Using these properties, it is almost immediately implied that standard SGD only requires $O(d^2\epsilon^{-2})$ iterations to arrive at a solution $\hat{x}$  with relative error\footnote{Relative error $\epsilon$ means that $f(\hat{\xx}) - f(\xx^*) \le \epsilon f(\xx^*)$, where $f(\xx) = \AA \xx - \bb$ and $\xx^* = \arg\min_{\xx} f(\xx)$.} $\epsilon$, leading to a total running time of $\Otil(nnz(\AA) + d^3 \epsilon^{-2})$.
These properties can be further applied to smoothing reductions by \cite{AllenZhuH16} and accelerated SGD
algorithms by \cite{AllenZhu17} to improve the runtime to
$\Otil(nnz(\AA) + nd^{\omega - 1} + \sqrt{n}d^{1.5} \epsilon^{-1})$. As previously mentioned, sampling by Lewis weights guarantees that $n \le O(d\epsilon^{-2}\log{n})$, so we also obtain a running time of $\Otil(nnz(\AA) + d^{2.5} \epsilon^{-2})$.
\Cref{fig:mainAlgo} gives the basic framework of our $\ell_1$ solver.

\begin{algorithm2e}[h]
	\caption{General structure of our algorithm \label{fig:mainAlgo}}
	\KwIn{Matrix $\AA \in \R^{n\times d}$, and vector $\bb \in \R^{n}$, along with error parameter $\epsilon > 0$.}
	\begin{enumerate*}
		\item Precondition $[\AA\;\bb]$ by Lewis weight sampling as in \cite{cohenpeng}, along with a matrix rotation.
		\item Initialize $\xx_0$ as the exact or approximate  $\ell_2$ minimizer of the preconditioned problem.
		\item Run a stochastic gradient descent algorithm on the preconditioned problem with starting point $\xx_0$.
	\end{enumerate*}
\end{algorithm2e}

\paragraph{Our Results}
Our first main theorem uses smoothing reductions from \cite{AllenZhuH16} with the accelerated SGD algorithm of \cite{AllenZhu17}:
\begin{restatable}[]{theorem}{applyKatyusha}
  \label{thm:applyKatyusha}
  Given $\AA \in \R^{n\times d}$, $\bb \in \R^n$, assume $\min_{\xx} \norme{\AA \xx - \bb}$ is either 0 or bounded below by $n^{-c}$ for some constant $c>0$. Then for any $\epsilon > 0$, there is a routine that outputs $\tilde{\xx}$ such that with high probability,\footnote{Throughout this paper, we let ``with high probability" mean that $\xi$ is the failure probability and our runtime has dependence on $\log(1/\xi)$, which we ignore for ease of notation.} \[
  \norm{\AA\tilde{\xx} - \bb}_1 \leq (1 + \epsilon)\min_{\xx} \norm{\AA\xx - \bb}_1
  \] 
  with a runtime of $O\left(nnz(\AA)\log^2{n} + d^{2.5}\epsilon^{-2}\log^{1.5}{n}\right)$ whenever $n \geq d\epsilon^{-2}\log{n}$,
  and a runtime of
  $O\left( \sqrt{n}d^2 \epsilon^{-1}\log{n}  + nd^{\omega - 1}\log{n}\right)$ when $n \leq d\epsilon^{-2}\log{n}$.
\end{restatable}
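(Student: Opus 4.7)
The plan is to instantiate the three-step framework of \Cref{fig:mainAlgo}, combining the Lewis-weight preconditioning of \cite{cohenpeng}, a closed-form $\ell_2$ warm start, and the smoothed accelerated SGD of \cite{AllenZhu17}. I would handle the two regimes uniformly: when $n \geq d\epsilon^{-2}\log n$ I first subsample rows via Lewis weights; when $n \leq d\epsilon^{-2}\log n$ I skip this subsampling step and apply the remaining pipeline directly to the full $[\AA\;\bb]$, which explains why the second bound has $n$-dependent instead of $\epsilon^{-2}$-dependent terms.

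First, I would apply the Lewis weight sampling of \cite{cohenpeng} to $[\AA\;\bb]$ to produce $[\tilde{\AA}\;\tilde{\bb}]$ of $m = O(d\epsilon^{-2}\log n)$ rows satisfying $\|\tilde{\AA}\xx - \tilde{\bb}\|_1 \in (1 \pm \epsilon)\|\AA\xx - \bb\|_1$ uniformly in $\xx$, in $O(nnz(\AA)\log^2 n + \poly(d))$ time. Then I would rotate $\tilde{\AA}$ by a $QR$-style decomposition to isotropic position $\tilde{\AA}^\top \tilde{\AA} = \II$ in $O(md^{\omega-1})$ time. The essential property beyond low dimension is that Lewis-weight sampling leaves every leverage score of $\tilde{\AA}$ at $\Theta(d/m)$, and since leverage scores are rotation-invariant, after rotation each row of $\tilde{\AA}$ has Euclidean norm $\Theta(\sqrt{d/m})$.

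Next, I would take $\xx_0 = \tilde{\AA}^\top \tilde{\bb}$, which by isotropy is the exact $\ell_2$ minimizer of $\|\tilde{\AA}\xx - \tilde{\bb}\|_2$. The key lemma, and the main obstacle of the proof, is to show this is a quantitatively good warm start for the $\ell_1$ problem. Writing $\tilde{\bb} = \tilde{\AA}\xx_0 + \rr_0$ with $\rr_0$ orthogonal to the range of $\tilde{\AA}$, the Pythagorean identity in $\ell_2$ combined with isotropy gives $\|\xx_0 - \xx^\star\|_2 \le \|\tilde{\AA}\xx^\star - \tilde{\bb}\|_2 \le \|\tilde{\AA}\xx^\star - \tilde{\bb}\|_1 = \mathrm{opt}_1$; combined with the uniform row-norm bound $O(\sqrt{d/m})$, this translates into controlled bounds on both $f(\xx_0) - f(\xx^\star)$ and $\|\xx_0 - \xx^\star\|_2$ in terms of $\mathrm{opt}_1$, precisely the quantities required by Katyusha's convergence theorem. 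The non-degeneracy hypothesis $\min_\xx \|\AA\xx - \bb\|_2 \ge n^{-c}$ is used here to rule out pathological ratios between the $\ell_2$ and $\ell_1$ optima.

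Finally, I would invoke the smoothing reduction of \cite{AllenZhuH16} to replace the $\ell_1$ objective by a smooth surrogate whose componentwise smoothness constants are governed by the row norms of $\tilde{\AA}$, and run Katyusha \citep{AllenZhu17} from $\xx_0$. The uniform row-norm bound supplies sharp smoothness and stochastic-gradient bounds, and combined with the warm-start estimates, Katyusha's accelerated rate gives iteration count $\Otil(\sqrt{m}\cdot d^{1.5}\epsilon^{-1})$. Substituting $m = O(d\epsilon^{-2}\log n)$ yields SGD cost $\Otil(d^{2.5}\epsilon^{-2})$, which together with the $nnz(\AA)\log^2 n$ preconditioning cost and the subleading $md^{\omega-1}$ rotation cost produces the first stated runtime. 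In the regime $n \le d\epsilon^{-2}\log n$, running the same pipeline without subsampling gives SGD cost $\Otil(\sqrt{n}d^2\epsilon^{-1})$ and rotation cost $O(nd^{\omega-1})$, yielding the second runtime.
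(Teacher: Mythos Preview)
Your overall architecture matches the paper's, but there is a genuine gap in the small-$n$ regime. You propose to ``skip the subsampling step and apply the remaining pipeline directly to the full $[\AA\;\bb]$'' when $n \le d\epsilon^{-2}\log n$. This does not work: the entire downstream analysis---both the warm-start bound and the per-coordinate Lipschitz constants fed into the Allen-Zhu--Hazan reduction---relies on the IRB row-norm bound $\|\tilde{\AA}_{i,:}\|_2^2 = O(d/N)$, and that bound is precisely what Lewis-weight sampling buys you. An arbitrary $\AA$ rotated to isotropic position has $\|\AA_{i,:}\|_2^2 = \tau_i(\AA)$, which can be as large as $1$ for some rows, so the Lipschitz constant $G$ in \Cref{thm:reductionAlgo} degrades from $O(\sqrt{Nd})$ to $O(N)$ and the runtime blows up. The paper instead \emph{still} samples $O(d\epsilon^{-2}\log n)$ rows by Lewis weights even when this exceeds $n$, but observes that the sampled matrix then has at most $n$ distinct rows with multiplicities; every linear-algebra step in the pipeline---forming $\tilde{\AA}^\top\tilde{\AA}$, rotating, and the full-gradient passes inside Katyusha---can be carried out on those $n$ unique rows weighted by their counts (\Cref{subsec:simulatedSplitting}). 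This is what justifies replacing $N$ by $\min\{n, d\epsilon^{-2}\log n\}$ in the cost.

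A secondary issue: your warm-start bound via the Pythagorean identity gives only $\|\xx_0 - \xx^\star\|_2 \le \mathrm{opt}_1$, which is weaker by a factor $\sqrt{N/d}$ than what is needed and would cost an extra $\epsilon^{-1}$ in the final rate. The paper's \Cref{lem:easyInit} instead writes $\xx_0 - \xx^\star = \tilde{\AA}^\top(\tilde{\bb} - \tilde{\AA}\xx^\star)$ and applies $\|\tilde{\AA}^\top \vv\|_2 \le (\max_i\|\tilde{\AA}_{i,:}\|_2)\,\|\vv\|_1$ directly, giving $\|\xx_0 - \xx^\star\|_2^2 \le O(d/N)\,\mathrm{opt}_1^2$. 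You also omit the constant-factor estimate of $\mathrm{opt}_1$ required to set the reduction parameters; the paper handles this by a $\log n$-step geometric search between $f_2^\star$ and $\sqrt{n}f_2^\star$ (\Cref{subsec:binarySearch}), which supplies one of the $\log n$ factors and is where the $n^{-c}$ hypothesis is actually used (to bound $\log(\Delta/\delta)$ with $\delta = \epsilon\,\mathrm{opt}_1$).
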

\noindent Achieving the bounds when $n\le d\epsilon^{-2}\log n$ requires some additional technical work. \Cref{thm:applyKatyusha} is proved in \Cref{sec:SGDforl1}, where we also show our $\Otil(nnz(\AA) + d^3\epsilon^{-2})$ running time for standard SGD.


Our second main theorem is motivated by the fact that the theoretical running time bounds of fast matrix multiplication
are difficult to achieve in practice, and most implementations of algorithms actually use naive matrix multiplication. Thus, it would be ideal for an algorithm's running time to be independent of fast matrix multiplication. It turns out that our only dependence on fast matrix multiplication is during the preconditioning stage. 
Accordingly, if we are given a matrix which is already approximately isotropic with all row norms approximately equal, then we can eliminate the usage
of fast matrix multiplication and still prove the same time bound. Moreover, this method preserves the row-sparsity of $\AA$. The primary difficulty of this approach is in computing an appropriate initialization when $\AA$ is only approximately isotropic. To resolve this issue, we use efficient $\ell_2$ regression solvers that do not rely on fast matrix multiplication.

\begin{restatable}[]{theorem}{noFastMatrixMult}
  \label{thm:noFastMatrixMult}
  Let $\AA \in \R^{n\times d}$ and $\bb \in \R^{n}$ be such that the matrix-vector pair $[\AA\;\bb]$ satisfies $[\AA\;\bb]^T [\AA\;\bb]\approx_{O(1)}\II$ \footnote{As defined in the notation section, we say that $\AA \approx_{\kappa} \BB$ if and only if
  $	\dfrac{1}{\kappa} \BB \preceq \AA \preceq \kappa \BB$.} and for each row $i$ of $[\AA\;\bb]$, $\norme{[\AA\;\bb]_{i,:}}^2 \approx_{O(1)} d/n$. Assume $\norme{\bb}\le n^c$ and $\min_{\xx} \norme{\AA \xx - \bb}$ is either 0 or bounded below by $n^{-c}$ for some constant $c > 0$. Then for any $\epsilon > 0$, there is a routine that computes $\tilde{\xx}$ such that with high probability,
  \[
  \norm{\AA\tilde{\xx} - \bb}_1 \leq (1 + \epsilon)\min_{\xx}\norm{\AA\xx - \bb}_1
  \]
  with a runtime of $O\left(nnz(\AA)\log^2{n} + s \cdot d^{1.5}\epsilon^{-2} \log^{1.5}{n} + d^2 \epsilon^{-2} \log^2{n} \right)$, where $s$ is the maximum number of entries in any row of $\AA$.  
\end{restatable}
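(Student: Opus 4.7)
The plan is to instantiate the framework of \Cref{fig:mainAlgo} but short-circuit its first step: the hypotheses already supply the isotropy and uniform-row-norm properties that Lewis weight sampling is designed to produce, so no sketching or rotation is needed. This immediately preserves the row-sparsity $s$ of $\AA$ and eliminates the only component of the pipeline behind \Cref{thm:applyKatyusha} that invoked fast matrix multiplication. What remains is (i) computing a good initialization $\xx_0$ using only sparse matrix-vector primitives, and (ii) running an accelerated SGD scheme whose per-step cost is $O(s)$.

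For step (i), in the exactly isotropic case the $\ell_2$ minimizer is the closed form $\AA^T \bb$, but under only approximate isotropy I will substitute an iterative $\ell_2$ solver---preconditioned conjugate gradient or Nesterov's accelerated method---applied to the normal equations $\AA^T \AA \xx = \AA^T \bb$. Because $[\AA\;\bb]^T[\AA\;\bb]\approx_{O(1)}\II$ forces $\AA^T \AA$ to have $O(1)$ condition number, such solvers converge to $\poly(1/n)$ relative accuracy in $O(\log n)$ iterations, each performing $O(1)$ matrix-vector products at cost $O(nnz(\AA))$. The polynomial bounds $\norme{\bb}\le n^c$ and the assumption that $\min_{\xx}\norme{\AA\xx-\bb}$ is either $0$ or at least $n^{-c}$ guarantee that polynomial accuracy of $\xx_0$ suffices, yielding initialization cost $O(nnz(\AA)\log^2 n)$.

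For step (ii), I will directly invoke the smoothing reduction of \cite{AllenZhuH16} composed with Katyusha \citep{AllenZhu17}, exactly as in \Cref{thm:applyKatyusha}. The three parameters that drive the iteration complexity of that scheme---a spectral lower bound on $\AA^T\AA$, a uniform upper bound on row norms, and a bound on $\norme{\xx_0-\xx^*}$---are supplied respectively by the approximate-isotropy hypothesis, by the row-norm hypothesis $\norme{[\AA\;\bb]_{i,:}}^2 \approx_{O(1)} d/n$, and by the observation (already exploited in \Cref{thm:applyKatyusha}) that these two conditions together certify that the $\ell_2$ minimizer lies within an $O(\sqrt{d})$ factor of the $\ell_1$ optimum in objective value. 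Katyusha then performs $O(d^{1.5}\epsilon^{-2}\log^{1.5}n)$ stochastic row-gradient steps of cost $O(s)$ each, plus epoch-boundary full-gradient computations whose total cost amortizes to $O(d^2\epsilon^{-2}\log^2 n)$. Summing the three contributions produces the claimed runtime.

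The main obstacle is propagating the $\approx_{O(1)}$ slack in the hypothesis through every parameter appearing in the SGD analysis of \Cref{thm:applyKatyusha}, which is stated for the \emph{exact} guarantees produced by Lewis sampling. Concretely I will need to verify that (a) the strong-convexity and smoothness constants of the smoothed objective only degrade by constant factors, (b) the variance bound for a uniformly sampled stochastic row-gradient degrades by only constants, and (c) using an approximate rather than exact $\ell_2$ minimizer as $\xx_0$ can be absorbed into these constants by tuning the iterative solver tolerance polynomially small using the $n^{-c}$ lower bound on the optimum. None of these points require ideas beyond those already developed for \Cref{thm:applyKatyusha}; they amount to a careful accounting of constants together with the $\ell_2$-initialization substitution above.
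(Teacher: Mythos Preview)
Your plan has a genuine gap: you skip the dimensionality-reduction step entirely, asserting that ``no sketching or rotation is needed,'' but the iteration complexity of the Katyusha analysis carries a $\sqrt{N}$ factor where $N$ is the number of rows of the matrix fed to SGD. Concretely, tracing the argument behind \Cref{thm:applyKatyusha} (Lemma~\ref{lem:applyReduction} composed with \Cref{thm:katyushaRuntime}) on an $N$-row input yields a stochastic-step count of order $N\log(n/\epsilon)+\sqrt{N}\,d/\epsilon$. Your claimed count $O(d^{1.5}\epsilon^{-2}\log^{1.5}n)$ is obtained only by substituting $N=O(d\epsilon^{-2}\log n)$ into $\sqrt{N}\,d/\epsilon$; with $N=n$ you instead get $s\cdot\sqrt{n}\,d/\epsilon$, which for $n\gg d\epsilon^{-2}\log n$ can be arbitrarily larger than the bound stated in the theorem. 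The approximate isotropy and row-norm hypotheses control the \emph{per-row} quantities (Lipschitz constant, initialization radius) but do nothing to shrink the row count itself.

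The paper's proof handles this by inserting a lightweight sampling step before SGD: because $[\AA\;\bb]^T[\AA\;\bb]\approx_{O(1)}\II$ and all row norms are $\approx d/n$, the leverage scores---and hence the $\ell_1$ Lewis weights---are all $\approx d/n$ (Lemma~\ref{lem:almostUniform}). Consequently \emph{uniform} sampling of $N=O(d\epsilon^{-2}\log n)$ rows is a valid instantiation of \Cref{thm:lewisWeights} and produces $[\tilde{\AA}\;\tilde{\bb}]$ that (i) preserves $\ell_1$ distances up to $1+\epsilon$, (ii) remains approximately isotropic with bounded row norms, and (iii) preserves the row sparsity $s$. Your CG-based initialization and the Katyusha analysis are then applied to $\tilde{\AA}$, where the $\sqrt{N}$ factor legitimately becomes $\sqrt{d\epsilon^{-2}\log n}$. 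The rest of your outline---iterative $\ell_2$ initialization to avoid fast matrix multiplication, and constant-factor slack propagation---matches the paper's approach.
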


\noindent The added assumption on $\norme{b}$ in \Cref{thm:noFastMatrixMult} comes from the bounds of the $\ell_2$ solvers. We prove \Cref{thm:noFastMatrixMult} in \Cref{sec:sparsitypreserve}. 


\paragraph{Comparison of our results with previous work}

Our algorithms are significantly faster than the previous best SGD-based results for $\ell_1$ regression, which took $O(nnz(\AA)\log n+ d^{4.5}\epsilon^{-2}\sqrt{\log d})$ time \citep{YangCRM16}. Furthermore, our standard SGD bounds are especially likely to be achievable in practice. \Cref{table:regression} compares the running time of our algorithm to the fastest gradient descent methods \citep{Clarkson05, Nesterov09, YangCRM16}, interior point methods \citep{MengM13mapreduce, LeeS15}, and multiplicative weights update methods \citep{CMMP13}. Since we can apply sampling by Lewis weights prior to any algorithm\footnote{Sampling $\AA$ by Lewis weights creates a dense $(d\epsilon^{-2}\log n) \times d$ matrix $\tilde{\AA}$}, we can replace $n$ with $O(d\epsilon^{-2} \log n)$ and $nnz(\AA)$ with $O(d^2 \epsilon^{-2} \log n)$ at the cost of an additive $\Otil(nnz(\AA) + d^\omega)$ overhead for any running time bound in \Cref{table:regression}, where $d^\omega$ is time to multiply two $d\times d$ matrices.\footnote{The current best value for $\omega$ is approximately $2.373$ \citep{Williams12, DavieS13, LeGall14}.}

Note that we match the theoretical performance of the current best IPM \citep{LeeS15} in several regimes. In low to medium-precision ranges, for example $\epsilon \ge 10^{-3}$, both the best IPM and our algorithm have a running time of $\Otil(nnz(\AA) + d^{2.5})$. If all of the algorithms are implemented with naive matrix multiplication, \cite{LeeS15} takes $\Otil(nnz(\AA) + d^3)$ time, while we prove an identical running time for standard non-accelerated SGD with Lewis weights preconditioning. For general $\epsilon$, if $nnz(\AA) \ge d^2 \epsilon^{-2} \log n$, then \cite{LeeS15} will use Lewis weights sampling and both their algorithm and our algorithm will achieve a running time of $\Otil(nnz(\AA) + d^{2.5}\epsilon^{-2})$. This is significant because our SGD-based algorithms are likely to be far more practical\footnote{Lewis weights preconditioning is also fast in practice}. Finally, we also note that in the setting where $\AA$ is approximately isotropic with approximately equal row norms and $\AA$ is row-sparse, with at most $s$ non-zeros per row, our algorithm in \Cref{thm:noFastMatrixMult} has the best dimensional-dependence out of any existing algorithm for $s < d$. In particular, we beat \cite{LeeS15} whenever $s < d\epsilon^{2}$ or whenever $s<d$ and $nnz(\AA) \ge d^2 \epsilon^{-2} \log n$.

\renewcommand{\arraystretch}{1.3} 

\begin{table}[h]
	\centering
	\begin{tabular}{| l | l |}
		\hline
		\multicolumn{1}{|c|}{\bfseries Solver} & \multicolumn{1}{c|}{\bfseries Running time \tablefootnote{$\Otil$ hides terms polylogarithmic in $d$ and $n$.} } \\
		\hline      
		Subgradient descent \cite{Clarkson05} & $\Otil \left(nd^5 \epsilon^{-2} \log (1/\epsilon) \right)$ \\
		\hline      
		Smoothing and gradient descent \cite{Nesterov09} & $\Otil \left(n^{1.5}d \epsilon^{-1} \right)$  \\
		\hline      
		Randomized IPCPM\tablefootnote{Interior Point Cutting Plane Methods} \cite{MengM13mapreduce} & $\Otil \left(nd^{\omega-1} +  \left( nnz(\AA) +  \text{poly}(d) \right)d \log \left( 1 /\epsilon \right) \right)$ \\
		\hline
		Multiplicative weights \cite{CMMP13} & $\Otil(n^{1/3}(nnz(\AA) + d^2)\epsilon^{-8/3})$\\
		\hline
		IPM and fast inverse maintenance \cite{LeeS15} & $\Otil ( ( nnz(\AA) + d^2 ) \sqrt{d} \log ( 1/\epsilon ) )$\\
		\hline      
		Weighted SGD \cite{YangCRM16} & $\Otil(nnz(\AA) + d^{4.5} \epsilon^{-2})$ \\
		\hline      
		Lewis weights and SGD (this work) & $\Otil \left( nnz(\AA) + d^3 \epsilon^{-2} \right)$ \\
		\hline      
		Lewis weights and accelerated SGD (this work)\tablefootnote{This running time only assumes that $\omega \leq 2.5$} & $\Otil\left( nnz(\AA) + d^{2.5}\epsilon^{-2} \right)$
		\\
		\hline  
	\end{tabular}
	\caption{Running time of several iterative $\ell_1$ regression algorithms.
		All running times are to find a solution with
		$\epsilon$ relative error, with constant failure probability.
		Note that the first three algorithms in the table could also be sped up
		using the preconditioning method we use, i.e.,
		Lewis weights row sampling \citep{cohenpeng}.
		Doing this would need a preconditioning time of $\Otil(nnz(\AA) + d^{\omega})$,
		and enable us to use the fact that $n \le O(d \epsilon^{-2} \log{n} )$ after preconditioning.
		However, this still does not make them faster than later algorithms.
		\label{table:regression}}
\end{table}

Another related work by \cite{BubeckCLL17:arxiv} gives algorithms for $\ell_p$ regression for $p \in (1,\infty)$ that run in time $\Otil(nnz(\AA) \cdot n^{|1/2 - 1/p|} \log(1 / \epsilon) )$ and $\Otil(nnz(\AA) n^{|1/2 - 1/p| - 1/2} d^{1/2} + n^{|1 /2  - 1/p|} d^{2} + d^{\omega}) \log(1 / \epsilon) )$. They also use preconditioning and accelerated stochastic gradient descent techniques as a subroutine. However, they don't give bounds for $\ell_1$ regression. Also, for $p$ close to 1, these bounds are worse than ours. In contrast to \cite{BubeckCLL17:arxiv}, our algorithms are specific to the $\ell_1$ setting. We use the fact that the gradients are bounded for $\ell_1$ regression, which doesn't hold for general $\ell_p$ regression. Moreover, our initialization using an $\ell_2$ minimizer doesn't give good bounds for general $\ell_p$ regression. In this sense, our algorithms really leverage the special structure of the $\ell_1$ problem.

%

\subsection{Organization}
The paper is organized as follows.
\Cref{sec:prelims} contains definitions and basic lemmas
which we will use throughout the paper.
\Cref{sec:SGDforl1} contains our main contribution, i.e.,
once we are given a suitably preconditioned matrix,
it shows how we arrive at an approximate $\ell_1$
minimizer within the claimed time bounds,
for both non-accelerated and accelerated versions
of stochastic gradient descent.
\Cref{sec:sparsitypreserve} shows that
if we restrict our input to slightly weaker
preconditions, then we can eliminate the need for
fast matrix multiplication to achieve the same time bounds.
\Cref{sec:katyushaproof} contains the primary proof details
for our main result in \Cref{sec:SGDforl1}.
In \Cref{sec:lewis}, we show that row sampling using
Lewis weights \cite{cohenpeng},
along with matrix rotation, suffices to give us
a matrix satisfying our precondition requirements.
\Cref{sec:sec4proofs} contains proof
details from \Cref{sec:sparsitypreserve}.
Finally, \Cref{sec:minor_details} will give secondary and straightforward technical details for our main results that we include for completeness.
\section{Preliminaries}\label{sec:prelims}

In this section, we describe some of the notation and important definitions we use in the paper. We represent matrices and vectors using bold variables. We let $\AA_{i,:}$ denote the $i^{th}$ row of a matrix $\AA$, and we use $nnz(\AA)$ to denote the number of non-zero elements in $\AA$.
$\AA^{\dagger}$ refers to the Moore-Penrose pseudoinverse of $\AA$. When $\AA$ has linearly-independent columns,
$\AA^{\dagger} = (\AA^T \AA)^{-1} \AA^T$.
Also, we assume that the input $\AA$ has full rank.

\begin{definition}[$\ell_p$-norm]
	The $\ell_p$ norm of a vector $\vv \in \R^n$
	is defined as
	\begin{align*}
	\textstyle{\norm{\vv}_p \defeq \left(\sum_{i=1}^n \vv_i^p \right)^{1/p}}.
	\end{align*} 
	Accordingly, the $\ell_p$ norm of a matrix $\AA \in \R^{n \times d}$
	is defined as
	\begin{align*}
	\norm{\AA}_p \defeq \sup_{\xx \in \R^d, \xx \neq 0} \dfrac{\norm{\AA \xx}_p}{\norm{\xx}_p}.
	\end{align*} 
\end{definition}

\begin{definition}[Matrix approximation]
	We say that $\AA \approx_{\kappa} \BB$ if and only if
	\begin{align*}
	\dfrac{1}{\kappa} \BB \preceq \AA \preceq \kappa \BB.
	\end{align*}
	Here, $\preceq$ refers to the L\"{o}wner partial ordering of matrices,
	where we say that $\AA \preceq \BB$ if $\BB - \AA$ is positive semi-definite.
\end{definition}

Note that we also use $\approx$ similarly in the case of scalars, as is commonplace.

\begin{definition}[IRB]\label{def:good} A matrix $\AA \in \R^{n\times d}$ with $n \ge d$ is said to be \textit{isotropic row-bounded} (IRB) if the following hold:
	\begin{enumerate}
		\item{$\AA^T \AA = \II$},
		\item{For all rows of $\AA$, $\norme{\AA_{i,:}}^2 \leq O(d/n)$}.
	\end{enumerate}
\end{definition}

\begin{definition}\label{def:levScore}
	Given a matrix $\AA$, we define the \textit{statistical leverage score} of row $\AA_{i,:}$ to be 
	\[\tau_i(\AA) \defeq \AA_{i,:}\left(\AA^T \AA\right)^{-1}\AA_{i,:}^T = \norme{\left(\AA^T\AA\right)^{-1/2}\AA_{i,:}^T}^2. \]
\end{definition}

\begin{definition}\label{def:lewis}
	For a matrix $\AA$, the $\ell_1$ \textit{Lewis weights} $\lw$ are the unique weights such that for each row $i$ we have 
	\[\lw_i = \tau_i\left(\LW^{-1/2}\AA\right) 
	\]
	or equivalently
	\[\lw^2 = \AA_{i,:}\left(\AA^T\LW^{-1}\AA\right)^{-1}\AA_{i,:}^T\]
	where $\LW$ is the diagonal matrix formed by putting the elements of $\lw$ on the diagonal.
\end{definition}

\begin{definition}\label{def:smooth}
	A function $f$ is $L$-\textit{smooth} if for any $x,y \in \mathbb{R}^d$, $||\nabla f(x) - \nabla f(y)||_2 \leq L||x-y||_2.$
\end{definition}

\begin{definition}\label{def:strong_convex}
	A function $f$ is $\sigma$-\textit{strongly convex} if for any $x,y \in \mathbb{R}^d$,
	\begin{equation*}
	\textstyle{f(y) \geq f(x) + \langle \nabla f(x),y-x\rangle + \frac{\sigma}{2}||x-y||_2^2}.
	\end{equation*}
\end{definition}

\begin{definition}\label{def:lipsCont}
	A function $f$ is $G$-\textit{Lipschitz continuous} if for any $x,y \in \mathbb{R}^d$, 	
	\[ ||f(x) - f(y)||_2 \leq G||x-y||_2.\] 	
\end{definition}

\section{Stochastic Gradient Descent for $\ell_1$ Regression}\label{sec:SGDforl1}

In this section, we describe how we achieve the bounds in \Cref{thm:applyKatyusha}. We first introduce the preconditioning technique by \cite{cohenpeng}, which, along with rotating the matrix,
will reduce our problem to $\ell_1$ minimization where the input matrix $\AA$ is isotropic and the norms of all its rows have strong upper bounds, i.e. it is IRB by Definition~\ref{def:good}. We relegate the details and proof of this preconditioning procedure to \Cref{sec:lewis}.

In \Cref{subsec:sgd}, we prove that known stochastic gradient descent algorithms
will run provably faster if we assume that $\AA$ is IRB. In particular, if $\AA$ is IRB, we can find an initialization $\xx_0$ that is close to the optimum $\xx^*$,
which in addition to bounding the gradient of our objective,
will then allow us to plug these bounds into standard stochastic gradient descent algorithms
and achieve a runtime of $\Otil(nnz(\AA) + d^3\epsilon^{-2})$. Finally, in \Cref{subsec:katyusha} we take known smoothing techniques by \cite{AllenZhuH16}
along with the Katyusha accelerated stochastic gradient descent by \cite{AllenZhu17} to achieve a runtime of $\Otil(nnz(\AA) + d^{2.5}\epsilon^{-2})$.

\subsection{Preconditioning with Lewis weights}
\label{sec:LewisPreconditioning}
The primary tool in our preconditioning routine will be a sampling scheme by Lewis weights, introduced in \cite{cohenpeng}, that was shown to approximately preserve the $\ell_1$ norm.
Specifically, we will use the combination of two primary theorems from \cite{cohenpeng}
that approximately compute the Lewis weights of a matrix quickly
and then sample accordingly while still approximately
preserving $\ell_1$ norm distances with high probability.

\begin{theorem}[Theorem 2.3 and 6.1 from \cite{cohenpeng}]
	\label{thm:lewisWeights}
	Given a matrix $\AA \in \R^{n \times d}$
	with $\ell_1$ Lewis weights $\lw$ and
	an error parameter $\epsilon > 0$,
	then for any function $h(n,\epsilon) \geq O(\epsilon^{-2}\log{n})$,
	we can find sampling values
	\[\pp_i \approx_{O(1)} \lw_i h(n,\epsilon)\]
	for each $i \in \{1,2, \ldots, n\}$,
	and generate a matrix $\SS$ with $N = \sum_i\pp_i$ rows,
	each chosen independently as the $i^{th}$ standard basis vector of dimension $n$,
	times $\frac{1}{\pp_i}$ with probability proportional to $\pp_i$,
	such that with high probability we have 
	\[ \textstyle \|\tilde{\AA}\xx\|_1 \approx_{1+\epsilon}\norm{\AA\xx}_1\]
	for all $\xx \in \R^d$, where $\tilde{\AA} = \SS\AA$. 
	Computing these sampling values requires
	$O(nnz(\AA)\log{n} + d^{\omega})$ time.
\end{theorem}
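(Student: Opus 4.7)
The plan is to follow Cohen-Peng's two-step strategy: first approximately compute the Lewis weights $\lw$ of $\AA$, then use them as sampling probabilities and prove that the resulting weighted subsample preserves $\norm{\AA\xx}_1$ uniformly over $\xx$.

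For the computation, I would view \Cref{def:lewis} as a fixed-point equation and iterate $w^{(t+1)}_i \leftarrow \bigl(\AA_{i,:}(\AA^T (W^{(t)})^{-1}\AA)^{-1}\AA_{i,:}^T\bigr)^{1/2}$ starting from $w^{(0)}\equiv 1$. For $p=1$ this map is contractive (rate $1/2$) in a multiplicative metric, so a constant number of iterations produces an $O(1)$-approximation to $\lw$. Each iteration requires forming and inverting the $d\times d$ Gram matrix $\AA^T W^{-1}\AA$, costing $O(nnz(\AA)\cdot d + d^\omega)$, plus evaluating the $n$ quadratic forms $\AA_{i,:}(\AA^T W^{-1}\AA)^{-1}\AA_{i,:}^T$. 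To avoid the naive $O(nd^2)$ cost of the latter, I would apply a Johnson-Lindenstrauss sketch to the pseudoinverse factor, reducing each per-row estimate to $O(\log n)$ inner products and yielding all $n$ estimates in $O(nnz(\AA)\log n)$ time. Summing gives the claimed runtime $O(nnz(\AA)\log n + d^\omega)$.

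For the sampling analysis, the central identity is the Lewis change of density: if $\barA := \LW^{-1/2}\AA$, then $\tau_i(\barA)=\lw_i$, so the leverage scores of $\barA$ sum to $d$. Sampling rows of $\AA$ with probability $p_i\propto\lw_i$ and rescaling by $1/p_i$ is thus exactly $\ell_2$ leverage-score sampling for $\barA$, a well-understood subspace embedding primitive. To upgrade from $\ell_2$ preservation of $\barA$ to $\ell_1$ preservation of $\AA$, I would run a symmetrization plus Dudley entropy integral argument on the Rademacher process
\[
\sup_{\xx}\Bigl|\sum_i \frac{\sigma_i}{p_i} |\tilde{\AA}_{i,:}\xx|\Bigr|
\]
restricted to $\{\xx:\norm{\AA\xx}_1 \le 1\}$, using the Lewis ellipsoid $\{\xx:\norme{\barA\xx}\le 1\}$ (which is equivalent to the $\ell_1$ unit ball up to a $\sqrt{d}$ factor) to control covering numbers. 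Balancing variance against covering entropy yields the $O(\epsilon^{-2}\log n)$ oversampling factor.

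The main obstacle is precisely this $\ell_1$ chaining step: standard matrix Chernoff bounds apply cleanly to $\ell_2$ subspace embeddings but not to $\ell_1$, and one must use the Lewis-ellipsoid structure both to bound the entropy integral and to absorb the $\Theta(1)$ multiplicative slack in $p_i/\lw_i$ into the concentration bound. The fact that the Lewis reweighting equalizes all leverage scores of $\barA$ (to $\lw_i$) is exactly what makes these entropy estimates tight and hence what forces Lewis weights, rather than raw leverage scores of $\AA$, to be the correct $\ell_1$ sampling probabilities.
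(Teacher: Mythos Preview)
This theorem is not proved in the paper at all: it is quoted verbatim as a black-box result from \cite{cohenpeng} (their Theorems~2.3 and~6.1), and the paper only \emph{uses} it as input to the preconditioning routine in \Cref{sec:lewis}. So there is no ``paper's own proof'' to compare against.

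That said, your sketch is a faithful high-level outline of how Cohen--Peng actually establish the result. The fixed-point iteration with contraction rate $1/2$ for $p=1$, the JL-based leverage-score estimation, and the symmetrization/chaining argument over the Lewis ellipsoid are all the right ingredients. One concrete issue in your runtime accounting: you budget $O(nnz(\AA)\cdot d)$ per iteration for forming $\AA^T\WW^{-1}\AA$, which already exceeds the stated $O(nnz(\AA)\log n + d^\omega)$ bound whenever $d\gg\log n$. Cohen--Peng avoid ever forming this Gram matrix directly; instead they first apply an $O(nnz(\AA))$-time sparse subspace embedding to reduce $\WW^{-1/2}\AA$ to an $\tilde O(d)\times d$ matrix, compute a QR/inverse factor of \emph{that} in $O(d^\omega)$, and only then run the JL sketch through this factor to read off all $n$ approximate leverage scores in $O(nnz(\AA)\log n)$. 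With that correction, the per-iteration cost matches the claim and a constant number of iterations suffices.
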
 

In \Cref{sec:lewis}, we will show that this sampling scheme also ensures that
each row of $\tilde{\AA}$ has approximately the same leverage score.
This proof will involve applying known facts about leverage scores and their connections with Lewis weights, along with standard matrix concentration bounds.
Furthermore, we will obtain additional nice properties by rotating $\AA$
and showing that a solution to our reduced problem gives an approximate solution to the original problem,
culminating in the following lemma:

\begin{restatable}{lemma}{lewisAndRotate}
	\label{lem:lewisAndRotate}
	There is a routine that takes
	a matrix $\AA \in \R^{n \times d}$,	a vector $\bb \in \R^{n}$ and $\epsilon > 0$,
	then produces a matrix $[\tilde{\AA},\tilde{\bb}] \in \R^{N \times (d+1)}$
	with $N = O(d\epsilon^{-2}\log{n})$
	and an invertible matrix $\UU \in \R^{d \times d}$
	such that matrix $\tilde{\AA}\UU$ is IRB and
	if $\tilde{\xx}_{\UU}^*$ minimizes $\|\tilde{\AA}\UU\xx - \tilde{b}\|_1$,
	then for any $\tilde{\xx}$ such that 
	\[ \|\tilde{\AA}\UU\tilde{\xx} - \tilde{\bb}\|_1\leq (1 + \delta) \|\tilde{\AA}\UU\tilde{\xx}_{\UU}^* - \tilde{\bb}\|_1, \]
	holds for some $\delta > 0$, we must have 
	\[
	\|\AA(\UU\tilde{\xx}) - \bb\|_1 \leq (1 + \epsilon)^2(1 + \delta)\|\AA \xx^* - \bb \|_1 \]
	with high probability.
	
	Furthermore, the full running time is
	$O(nnz(\AA)\log{n} + d^{\omega-1}\min\{d\epsilon^{-2}\log{n},n\} + \Upsilon)$ where $\Upsilon = \min\{d\epsilon^{-2}\log{n}, (d\epsilon^{-2}\log{n})^{1/2 + o(1)} + n\log^2{n}\}$.
\end{restatable}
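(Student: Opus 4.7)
The plan is to obtain $[\tilde{\AA},\tilde{\bb}]$ by applying Lewis-weight row sampling (\Cref{thm:lewisWeights}) to the concatenated matrix $[\AA\;\bb]\in\R^{n\times (d+1)}$ with $h(n,\epsilon)=\Theta(\epsilon^{-2}\log n)$, producing $N = O(d\epsilon^{-2}\log n)$ rows. I would then define the rotation $\UU = (\tilde{\AA}^T\tilde{\AA})^{-1/2}$ (computed via an eigendecomposition or a Cholesky factorization of the $d\times d$ Gram matrix), so that $(\tilde{\AA}\UU)^T(\tilde{\AA}\UU) = \II$ by construction, giving the isotropy half of IRB for free.

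The $\ell_1$-embedding guarantee of \Cref{thm:lewisWeights}, specialized to $\yy=(\xx,-1)\in\R^{d+1}$, yields $\|\AA\xx - \bb\|_1 \approx_{1+\epsilon} \|\tilde{\AA}\xx - \tilde{\bb}\|_1$ for every $\xx\in\R^d$. The optimality guarantee is then a short chain: using $\UU^{-1}\xx^*$ as a competitor against $\tilde{\xx}_\UU^*$ in the rotated problem,
\[
\|\AA(\UU\tilde{\xx}) - \bb\|_1 \leq (1+\epsilon)\|\tilde{\AA}\UU\tilde{\xx} - \tilde{\bb}\|_1 \leq (1+\epsilon)(1+\delta)\|\tilde{\AA}\xx^* - \tilde{\bb}\|_1 \leq (1+\epsilon)^2(1+\delta)\|\AA\xx^* - \bb\|_1,
\]
where the outer inequalities are the two directions of the $\ell_1$-embedding and the middle one combines optimality of $\tilde{\xx}_\UU^*$ with the competitor argument.

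The main obstacle is the row-norm half of IRB. I would first prove the sharper statement that every row of the sampled matrix $[\tilde{\AA}\;\tilde{\bb}]$ has leverage score $\Theta((d+1)/N)$. Monotonicity of leverage scores under dropping a column (the projector onto a smaller subspace is dominated by the projector onto the larger one) then yields $\tau_k(\tilde{\AA}) \leq O(d/N)$, and combining with isotropy gives $\|(\tilde{\AA}\UU)_{k,:}\|_2^2 = \tilde{\AA}_{k,:}(\tilde{\AA}^T\tilde{\AA})^{-1}\tilde{\AA}_{k,:}^T = \tau_k(\tilde{\AA}) = O(d/N)$, exactly the IRB row bound. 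To establish the leverage-score balance for the sampled matrix, I would use the defining Lewis-weight identity $\lw_i^2 = M_{i,:}(M^T\LW^{-1}M)^{-1}M_{i,:}^T$ for $M = [\AA\;\bb]$, together with a matrix Chernoff bound showing $\tilde{M}^T\tilde{M} \approx_{O(1)} \E[\tilde{M}^T\tilde{M}] \approx_{O(1)} h^{-1}M^T\LW^{-1}M$; substituting into the leverage-score formula at a sampled row $\tilde{M}_{k,:} = M_{i_k,:}/\pp_{i_k}$ with $\pp_{i_k} \approx h\lw_{i_k}$ collapses to $\tau_k(\tilde{M}) \approx h^{-1} = (d+1)/N$, and a union bound combines this concentration event with the $\ell_1$-embedding event to deliver the high-probability conclusion.

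For the running time, the approximate Lewis-weight computation contributes $O(nnz(\AA)\log n + d^\omega)$ by \Cref{thm:lewisWeights}; forming $\tilde{\AA}^T\tilde{\AA}$ and its inverse square root costs $O(d^{\omega-1}\min\{N,n\})$, where we fall back to the original matrix whenever sampling would not actually reduce the row count; and the residual $\Upsilon$ term absorbs the cost of drawing the multinomial sample itself, either by the straightforward $O(N)$ bound or, in the regime where $N$ dwarfs $n$, by an alias-/grouped-sampling data structure achieving $O((d\epsilon^{-2}\log n)^{1/2+o(1)} + n\log^2 n)$. Summing these three contributions gives the stated running time.
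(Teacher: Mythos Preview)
Your proposal is correct and mirrors the paper's proof essentially step for step: Lewis-weight sample $[\AA\;\bb]$ via \Cref{thm:lewisWeights}, rotate by $\UU$ with $\UU^T\UU=(\tilde{\AA}^T\tilde{\AA})^{-1}$, use matrix Chernoff to show $\tilde{M}^T\tilde{M}\approx h^{-1}M^T\LW^{-1}M$ and hence balanced leverage scores $\tau_k(\tilde{M})\approx h^{-1}$, drop the $\tilde{\bb}$ column via leverage-score monotonicity, and chain the two directions of the $\ell_1$-embedding through the competitor $\UU^{-1}\xx^*$ for the $(1+\epsilon)^2(1+\delta)$ bound. One phrasing caveat: when $N>n$ you should not literally ``fall back to the original matrix'' (which lacks the leverage-score balance); the paper instead still samples but exploits that $\tilde{\AA}$ has at most $n$ \emph{distinct} rows, so $\tilde{\AA}^T\tilde{\AA}$ can be formed in $O(nd^{\omega-1})$ time by aggregating duplicate-row multiplicities.
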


As a result, we will assume that all of our matrices $\AA$ are already in the same form as $\tilde{\AA}\UU$, i.e. we assume $\AA$ is IRB, since relative error guarantees for the preconditioned system apply to the original system.

\subsection{Isotropic and Row-Bounded $\AA$ for Stochastic Gradient Descent}\label{subsec:sgd}

To demonstrate the usefulness of the properties of our preconditioned $\AA$,
we consider standard stochastic gradient descent and the bounds on its running time. We let $\xx^* = \arg\min_{\xx} \norm{\AA\xx - \bb}_1$. We will use the following theorem to prove runtime bounds for standard SGD:

\begin{theorem}[\cite{RS86ssgd}]
	\label{thm:standardSGD}
	Given a function $f$ and $\xx_0$ such that
	$\norme{\xx_0 - \xx^*} \leq R$
	and $L$ is an upper bound on the $\ell_2$ norm of the stochastic subgradients of $f$,
	then projected subgradient descent ensures that after $t$ steps:
	\[
	f(\xx_t) - f(\xx^*) \leq O\left(\frac{RL}{\sqrt{t}}\right).
	\]
	where $\xx^* = \arg \min_{\xx} f(\xx)$.
\end{theorem}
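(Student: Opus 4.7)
The plan is to use the standard potential-function argument for projected stochastic subgradient descent, tracking $\Phi_k \defeq \norme{\xx_k - \xx^*}^2$. Fix a constant step size $\eta > 0$ and write the update as $\xx_{k+1} = \Pi_{\calK}(\xx_k - \eta \gg_k)$, where $\gg_k$ is the stochastic subgradient at $\xx_k$ satisfying $\E[\gg_k \mid \xx_k] \in \partial f(\xx_k)$ and $\E \norme{\gg_k}^2 \le L^2$. Since projection onto a convex set containing $\xx^*$ is nonexpansive, it suffices to bound the distance for the unprojected iterate.

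Expanding the square and using the subgradient inequality after taking conditional expectation (so that convexity applies to the true expected subgradient rather than the stochastic sample) yields
\[
  \E\br{\Phi_{k+1} \mid \xx_k} \;\le\; \Phi_k \;-\; 2\eta\pr{f(\xx_k) - f(\xx^*)} \;+\; \eta^2 \norme{\gg_k}^2.
\]
Taking total expectation, rearranging, and summing telescopically from $k=0$ to $t-1$ gives
\[
  \sum_{k=0}^{t-1} \E\br{f(\xx_k) - f(\xx^*)} \;\le\; \frac{\Phi_0}{2\eta} + \frac{\eta L^2 t}{2} \;\le\; \frac{R^2}{2\eta} + \frac{\eta L^2 t}{2}.
\]
By Jensen's inequality applied to the averaged iterate $\bar{\xx}_t = \tfrac{1}{t}\sum_{k<t}\xx_k$, the left-hand side dominates $t \cdot \E[f(\bar{\xx}_t) - f(\xx^*)]$. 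Tuning $\eta = R/(L\sqrt{t})$ balances the two terms and produces the claimed $O(RL/\sqrt{t})$ rate; the notation $\xx_t$ in the theorem statement is understood as this averaged (or a uniformly random) iterate, in keeping with the usual convention for non-strongly-convex SGD.

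The only delicate point is the ordering of the convexity step and the conditional expectation: one has to apply the subgradient inequality $\langle \hat{\gg}_k, \xx_k - \xx^*\rangle \ge f(\xx_k) - f(\xx^*)$ with $\hat{\gg}_k \defeq \E[\gg_k \mid \xx_k]$, and separately control the noise term via the variance bound $\E\norme{\gg_k}^2 \le L^2$. After that, optimization over $\eta$ is routine. Since this is a classical result cited as \cite{RS86ssgd}, the paper presumably invokes it as a black box rather than reproving it, and applies it in \Cref{subsec:sgd} with $R = \norme{\xx_0 - \xx^*}$ and $L$ controlled by the IRB structure of the preconditioned $\AA$.
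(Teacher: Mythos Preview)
Your proposal is correct: this is the classical potential-function argument for stochastic subgradient descent, and the paper indeed does not reprove it but simply cites it as \cite{RS86ssgd} and invokes it as a black box in \Cref{subsec:sgd}, exactly as you anticipated. One minor notational slip: in your displayed recursion the final term should be $\eta^2\,\E\!\br{\norme{\gg_k}^2 \mid \xx_k}$ rather than $\eta^2\norme{\gg_k}^2$, though this is harmless once you pass to total expectation and apply the bound $\E\norme{\gg_k}^2 \le L^2$ (or the almost-sure bound $\norme{\gg_k}\le L$, as the theorem statement can be read).
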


To use this theorem, we must prove bounds on the initialization distance $\norme{\xx_0 -\xx^*}$ and the norm of the stochastic subgradients we use, i.e. $\nabla (n \cdot |\AA_{i,:}\xx - \bb_i|)$ for each $i$. We show that both of these bounds come from our assumptions on $\AA$.

\begin{lemma}\label{lem:easyInit}
	If $\AA$ is IRB, then by setting $\xx_0 = \AA^T\bb$ we have
	\[
	\norme{\xx_0 - \xx^*}^2 \leq O\left(d/n\right)\norm{\AA\xx^*-\bb}_1^2.
	\]
\end{lemma}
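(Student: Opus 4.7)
The plan is to exploit both IRB properties in sequence: isotropy to rewrite $\xx_0 - \xx^*$ in a residual form, and the row-norm bound to convert an $\ell_2$ quantity into an $\ell_1$ quantity via duality.

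First I would use $\AA^T\AA = \II$ to simplify the difference. Since $\xx^* = \AA^T\AA \xx^*$, we get
\[
\xx_0 - \xx^* \;=\; \AA^T \bb - \AA^T\AA \xx^* \;=\; -\AA^T \rr,
\]
where $\rr \defeq \AA\xx^* - \bb$. Thus the quantity we want to bound is $\norme{\AA^T\rr}^2$. Morally $\xx_0 = \AA^T\bb$ is the $\ell_2$ minimizer (since $\AA^T\AA = \II$), so this is really asking how the $\ell_2$ residual after projection onto $\textrm{colspan}(\AA)$ relates to the $\ell_1$ residual of the $\ell_1$ minimizer; the row-norm bound is what allows us to extract the $d/n$ improvement.

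Next I would use the variational characterization of the $\ell_2$ norm together with Hölder's inequality:
\[
\norme{\AA^T\rr} \;=\; \sup_{\yy:\,\norme{\yy}=1} \langle \AA\yy,\, \rr\rangle \;\leq\; \sup_{\yy:\,\norme{\yy}=1} \norm{\AA\yy}_\infty \cdot \norm{\rr}_1.
\]
The final step is to bound $\norm{\AA\yy}_\infty$ for unit $\yy$ using the row-norm condition. By Cauchy-Schwarz, for any row $i$,
\[
|\AA_{i,:}\,\yy| \;\leq\; \norme{\AA_{i,:}} \cdot \norme{\yy} \;\leq\; O(\sqrt{d/n}),
\]
where we used $\norme{\AA_{i,:}}^2 \leq O(d/n)$ from IRB. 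Combining these steps yields
\[
\norme{\xx_0 - \xx^*}^2 \;=\; \norme{\AA^T\rr}^2 \;\leq\; O(d/n)\,\norm{\rr}_1^2 \;=\; O(d/n)\,\norm{\AA\xx^*-\bb}_1^2,
\]
as desired.

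I do not expect any real obstacle here; both IRB conditions slot in cleanly, with isotropy providing the algebraic simplification and the row-norm bound providing the $\ell_\infty \to \ell_2$ conversion factor of $\sqrt{d/n}$. The only subtle point is recognizing that one should apply Hölder with the $(\infty, 1)$ pairing rather than Cauchy-Schwarz directly, since a $(2,2)$ pairing would lose the $d/n$ factor and only give $\norme{\AA^T\rr}^2 \le \norme{\rr}^2 \le \norm{\rr}_1^2$.
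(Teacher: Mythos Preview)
Your proof is correct and essentially the same as the paper's. Both reduce to showing $\norme{\AA^T\rr} \le \max_i \norme{\AA_{i,:}}\cdot \norm{\rr}_1$ after using isotropy; the paper obtains this inequality directly via the triangle inequality/convexity of $\norme{\cdot}$ on the sum $\sum_i \AA_{i,:}^T r_i$, while you obtain the identical bound by the dual route (variational form of $\norme{\cdot}$ plus H\"older), which is a cosmetic rather than substantive difference.
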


\begin{proof}
	\begin{align*}
	\textstyle{\norm{\xx^{\left( 0 \right)} - \xx^{\left( * \right)}}_{2}^2} & = \textstyle{\norm{\AA^T \bb - \xx^{\left( * \right)}}_{2}^2} \\
	& \textstyle{= \norm{\AA^T \left(\bb - \AA\xx^{(*)}\right)}_2^2} & &\text{by assumption $\AA^T\AA = \II$}\\
	& \textstyle{= \norme{\sum_i \AA_{i,:} \left(\bb - \AA\xx^{(*)}\right)}^2} \\
	& \textstyle{\leq \norm{\AA\xx^*-\bb}_1^2 \cdot \max_i ||\AA_{i,:}||_2^2} & & \text{by convexity of $\norm{\cdot}_2$, also shown in Lemma~\ref{lem:subordinatenorm}}\\
	& = O\left({\dfrac{d}{n}}\right)\norm{\AA\xx^*-\bb}_1^2.
	\end{align*}
\end{proof}

\comment{ The convexity argument (Jensen) does rely on the $1$-norm. So if we relax it back to $p$-norm for some $p > 1$, we lose a factor of $n^{p - 1}$, giving
\begin{align}
	\norme{\xx_0 - \xx^*}^2
	& = \norme{\AA^T\left( \xx_0 - \xx^* \right) }^2\\
	& \leq \norm{\AA\xx^*-\bb}_1^2 \cdot \frac{d}{n}\\
	& \leq O\left(d n^{2p - 3} \right)\norm{\AA\xx^*-\bb}_{p}^2.
\end{align}
Specifically if $\AA \xx^{*} - \bb$ is the all-$1$s vector,
this is actually tight.
On the other hand, for $p = 2$,
we have that $\AA \xx^{*} - \bb$ is orthogonal to $\AA^{T}$,
and should be $0$.
So this kind of algebra seems only to be tight in the $p = 1$ case.
}
\begin{lemma}\label{lem:easyLipscitz}
		If $\AA$ is IRB, then $\norme{\nabla (n \cdot |\AA_{i,:}\xx - \bb_i|)}^2 \leq O(nd)$ for all $i$.
\end{lemma}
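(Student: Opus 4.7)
The plan is to compute the (sub)gradient explicitly and then apply the row-norm bound from the IRB hypothesis. Writing $f_i(\xx) = n \cdot |\AA_{i,:}\xx - \bb_i|$, the chain rule for the absolute value gives a subgradient of the form
\[
\nabla f_i(\xx) = n \cdot \sigma_i \cdot \AA_{i,:}^T,
\]
where $\sigma_i \in [-1,1]$ equals $\sgn{\AA_{i,:}\xx - \bb_i}$ when the argument is nonzero and lies in $[-1,1]$ otherwise. In either case $|\sigma_i| \le 1$.

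Taking squared $\ell_2$ norms,
\[
\norme{\nabla f_i(\xx)}^2 = n^2 \sigma_i^2 \norme{\AA_{i,:}}^2 \le n^2 \norme{\AA_{i,:}}^2.
\]
By the second IRB property (Definition~\ref{def:good}), $\norme{\AA_{i,:}}^2 \le O(d/n)$, so the right-hand side is bounded by $n^2 \cdot O(d/n) = O(nd)$, which gives the claim.

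There is no real obstacle here; the lemma is essentially a one-line consequence of the row-norm bound built into the IRB definition, and it is exactly this combination of the initialization bound from \Cref{lem:easyInit} (with $R^2 = O(d/n) \|\AA \xx^* - \bb\|_1^2$) and the Lipschitz bound $L^2 = O(nd)$ here that, when plugged into \Cref{thm:standardSGD}, will yield the $\Otil(nnz(\AA) + d^3 \epsilon^{-2})$ running time claimed for standard SGD.
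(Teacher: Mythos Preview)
Your proof is correct and follows exactly the same approach as the paper: compute the subgradient $n \cdot \sgn{\AA_{i,:}\xx - \bb_i}\,\AA_{i,:}^T$ and apply the IRB row-norm bound $\norme{\AA_{i,:}}^2 \le O(d/n)$. If anything, your version is slightly more careful in handling the nondifferentiable case via $|\sigma_i|\le 1$.
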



\begin{proof} We see that $\nabla (n \cdot |\AA_{i,:}\xx - \bb_i|) = n \cdot \AA_{i,:}^T\sgn{\AA_{i,:}\xx - \bb_i} $, and $\norme{\AA_{i,:}}^2 \leq O(d/n)$ for all $i$ by our assumption that $\AA$ is IRB. This then implies our desired inequality.
\end{proof}

These bounds, particularly the initialization distance, are stronger than the bounds for general $\AA$, and together will give our first result that improves upon the runtime given by \cite{YangCRM16} by using our preconditioning.

\begin{restatable}[]{theorem}{ourSGD}
	\label{thm:ourSGD}
	Given $\AA \in \R^{n\times d}$, we can find $\tilde{\xx}\in \R^d$ using preconditioning and stochastic gradient descent such that 
	\[
	\norm{\AA\tilde{\xx} - \bb}_1 \leq (1 + \epsilon)\norm{\AA \xx^* - \bb}_1
	\]	
	in time $O(nnz(\AA)\log^2{n} + d^3\epsilon^{-2}\log{n})$.
\end{restatable}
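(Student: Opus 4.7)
}

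The plan is to apply Lemma~\ref{lem:lewisAndRotate} as a black-box preconditioner, reducing the input to an IRB system, and then invoke Theorem~\ref{thm:standardSGD} on the reduced problem using the initialization and gradient bounds from Lemmas~\ref{lem:easyInit} and \ref{lem:easyLipscitz}. First I would run the procedure of Lemma~\ref{lem:lewisAndRotate} with accuracy $\epsilon' = \Theta(\epsilon)$, producing $[\tilde{\AA}\,\UU,\tilde{\bb}]$ where $\tilde{\AA}\UU$ is IRB with $N = O(d\epsilon^{-2}\log n)$ rows; this takes $O(nnz(\AA)\log n + d^{\omega-1}\min\{d\epsilon^{-2}\log n,n\} + \Upsilon)$ time, which is absorbed in the claimed bound since $d^{\omega-1}\cdot d\epsilon^{-2}\log n \le d^{3}\epsilon^{-2}\log n$. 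Because Lemma~\ref{lem:lewisAndRotate} guarantees that a $(1+\delta)$-approximate minimizer for $\|\tilde{\AA}\UU\xx - \tilde{\bb}\|_1$ lifts to a $(1+\epsilon)^2(1+\delta)$-approximate minimizer of the original problem, it suffices to solve the preconditioned instance to relative error $\Theta(\epsilon)$ and take $\tilde{\xx} = \UU\xx_t$.

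Next, I would apply SGD to the preconditioned problem, treating it as minimizing $f(\xx) = \sum_{i} |\tilde{\AA}\UU_{i,:}\xx - \tilde{\bb}_i|$, where each stochastic subgradient is obtained by picking a row $i$ uniformly at random and returning $N \cdot \tilde{\AA}\UU_{i,:}^T\,\mathrm{sgn}(\tilde{\AA}\UU_{i,:}\xx - \tilde{\bb}_i)$. Initializing with $\xx_0 = (\tilde{\AA}\UU)^T \tilde{\bb}$, Lemma~\ref{lem:easyInit} (applied to the IRB matrix $\tilde{\AA}\UU$ with $N$ rows in place of $n$) gives $\|\xx_0 - \xx_{\UU}^*\|_2 \le R$ with $R = O(\sqrt{d/N})\cdot \|\tilde{\AA}\UU\xx_{\UU}^* - \tilde{\bb}\|_1$, and Lemma~\ref{lem:easyLipscitz} bounds the stochastic subgradient norm by $L = O(\sqrt{Nd})$. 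Plugging these into Theorem~\ref{thm:standardSGD} yields
\[
f(\xx_t) - f(\xx_{\UU}^*) \le O\!\left(\frac{RL}{\sqrt{t}}\right) = O\!\left(\frac{d}{\sqrt{t}}\right)\cdot \|\tilde{\AA}\UU\xx_{\UU}^* - \tilde{\bb}\|_1,
\]
where the factors of $N$ cancel. Choosing $t = \Theta(d^2\epsilon^{-2})$ then drives the relative error below $\epsilon$, and each iteration costs $O(d)$ time (one row access plus a vector update), giving a total SGD cost of $O(d^3\epsilon^{-2})$.

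Summing the preconditioning cost, the $O(Nd) = O(d^2\epsilon^{-2}\log n)$ cost of forming $\xx_0$, and the $O(d^3\epsilon^{-2})$ SGD cost produces the claimed running time of $O(nnz(\AA)\log^2 n + d^3\epsilon^{-2}\log n)$, and composing the $(1+\Theta(\epsilon))$ guarantees from the preconditioner and the SGD step gives final relative error $\epsilon$ after appropriate constant adjustments. The main subtlety I expect is the cancellation in step (2): the initialization bound $R$ scales with $\|\tilde{\AA}\UU\xx_{\UU}^* - \tilde{\bb}\|_1$ precisely because $\tilde{\AA}\UU$ is isotropic and row-bounded, and it is this scaling that makes the product $RL$ proportional to the optimal value, turning Theorem~\ref{thm:standardSGD}'s additive bound into a relative one. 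Without both halves of the IRB condition, this cancellation fails, and one would have to absorb an extra $\|\bb\|$- or $\|\xx^*\|$-dependence into the iteration count; the entire argument therefore hinges on being able to pass from $\AA$ to the IRB matrix $\tilde{\AA}\UU$ via Lemma~\ref{lem:lewisAndRotate} while preserving relative-error guarantees.
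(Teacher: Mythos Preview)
Your approach is essentially the paper's: precondition via Lemma~\ref{lem:lewisAndRotate}, then plug the IRB bounds from Lemmas~\ref{lem:easyInit} and~\ref{lem:easyLipscitz} into Theorem~\ref{thm:standardSGD} to get $O(d^2\epsilon^{-2})$ iterations at $O(d)$ each, and lift back via the preconditioner's relative-error guarantee.

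There is one genuine gap. To invoke Theorem~\ref{thm:standardSGD} you must supply the radius $R$ to the algorithm (it is used to set the step size and/or projection), and your $R = O(\sqrt{d/N})\,\|\tilde{\AA}\UU\xx_{\UU}^* - \tilde{\bb}\|_1$ depends on the unknown optimal value. The paper handles this by running the SGD phase $O(\log n)$ times over geometrically spaced guesses for $f(\xx^*)$ (\Cref{subsec:binarySearch}), which is exactly where the extra $\log n$ factor in the theorem's bound comes from. Your final accounting does not reflect this: the costs you list sum to $O(nnz(\AA)\log n + d^3\epsilon^{-2}\log n)$, not $O(nnz(\AA)\log^2 n + d^3\epsilon^{-2}\log n)$, so the claimed match with the theorem statement is unjustified as written. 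Adding the guess-and-verify loop over $\tilde f^*$ closes both the logical gap and the bookkeeping discrepancy.
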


\begin{proof}
	By preconditioning with Lemma~\ref{lem:lewisAndRotate} and error $O(\epsilon)$ we obtain an $N \times d$ matrix $\tilde{\AA}\UU$ in time $O(nnz(\AA)\log{n} + d^{\omega}\epsilon^{-2})$.
	
	By \Cref{thm:standardSGD}, we then need to run $O(d^2 \epsilon^{-2})$ iterations of standard stochastic gradient descent to achieve absolute error of $O(\epsilon\cdot f(\xx^*))$ which is equivalent to relative error of $O(\epsilon)$. The required runtime is then $O(d^3\epsilon^{-2})$. Technically, the input to stochastic gradient descent will require the value $R$, i.e. the upper bound on initialization distance, which requires access to a constant factor approximation of $f(\xx^*)$. We will show in \Cref{subsec:binarySearch} that we can assume that we have such an approximation at the cost of a factor of $\log{n}$ in the running time.
	
	Combining the preconditioning and stochastic gradient descent will produce $\tilde{\xx}$ with $O(\epsilon)$ relative error to the optimal objective function value in time $O(nnz(\AA)\log{n} + d^3\epsilon^{-2})$. Adding the factor $\log{n}$ overhead from estimating $f(\xx^*)$ gives the desired runtime.
\end{proof}

\subsection{Smoothing Reductions and Katyusha Accelerated SGD}
\label{subsec:katyusha}
In order to improve the running time guarantees, we consider whether our strong initialization distance bound will allow us to apply black-box accelerated stochastic gradient descent methods. These methods generally require smoothness and strong convexity of the objective function, neither of which are necessarily true for our objective function. Previous results \citep{Nesterov05smooth, Nesterov07, DuchiBW12, OuyangG12, AllenZhuH16} have addressed this general issue and given reductions from certain classes of objective functions to similar functions with smoothness and strong convexity, while still maintaining specific error and runtime guarantees. Accordingly, we will first show how our initialization distance fits into the reduction of \cite{AllenZhuH16}, then apply Katyusha's accelerated gradient descent algorithm by \cite{AllenZhu17} to their framework. 
We state the theorem below, and relegate the details of its proof to \Cref{sec:katyushaproof}.

\applyKatyusha*

The last bound in \Cref{thm:applyKatyusha} may seem odd, as Lewis weights sampling produces a matrix with row-dimension $d\epsilon^{-2}\log{n}$, which will in fact hurt our running time if $n \ll d\epsilon^{-2}\log{n}$. Moreover, we cannot simply avoid running the Lewis weights sampling because our algorithm critically relies on the resulting leverage score properties.
Instead, if $n \ll d\epsilon^{-2}\log{n}$, we can simulate the sampling procedure in $O(n)$ time and keep a count for each of the $n$ unique rows. Since the simulated sample matrix will look like the original but with duplicated rows, we will be able to carry out the rest of our linear algebraic manipulations in time dependent on $n$ rather than $d\epsilon^{-2}\log{n}$. 
We will address this in \Cref{subsec:simulatedSplitting}.


\section{Row-Sparsity Bounds for $\ell_1$ Regression}\label{sec:sparsitypreserve}
In this section, we explain how to avoid using matrix multiplication, which we use in Lemma~\ref{lem:lewisAndRotate}, to get an IRB matrix, and in Lemma~\ref{lem:easyInit}, to get a good initialization. To avoid both procedures, we assume that our given matrix $\AA \in \R^{n\times d}$ and vector $\bb \in \R^{n }$ are such that $[\AA\;\bb]^T [\AA\;\bb] \approx_{O(1)} \II$ and for each row $i$ of $[\AA\;\bb]$ we have $\norme{[\AA\;\bb]_{i,:}}^2 \approx_{O(1)} d/n$. Notice that these conditions imply $\AA^T\AA \approx_{O(1)} \II$ and $\norme{\AA_{i,:}}^2 \leq O(d/n)$, which are nearly the properties of the preconditioned matrix $\tilde{\AA}\UU$ generated in Lemma~\ref{lem:lewisAndRotate}. However, we still face two new complications: (1) the row count of matrix $\AA$ has not been reduced from $n$ to $O(d\epsilon^{-2}\log{n})$, and (2) $\AA$ is only approximately isotropic.

We will account for the first complication by showing that, under these conditions, the Lewis weights are approximately equal, which implies that uniform row sampling is nearly equivalent to that in \Cref{thm:lewisWeights}. 
We then describe how to find a good initialization point using conjugate gradient methods when $\tilde{\AA}$ is only approximately isotropic. 
Finally, we use the reduction in \cite{AllenZhuH16} and the Katyusha stochastic gradient descent algorithm from \cite{AllenZhu17} to achieve a total running time of $\Otil(nnz(\AA) + sd^{1.5}\epsilon^{-2} + d^2\epsilon^{-2})$, where $s$ is the maximum number of non-zeros in a row of $\AA$. Note that as a byproduct of this algorithm, we achieve row-sparsity-preserving $\ell_1$ minimization.

\subsubsection*{Uniform Sampling of $\AA$}
We deal with the first complication of avoiding Lemma~\ref{lem:lewisAndRotate} by sampling $\AA$ uniformly. In particular, if we uniformly sample $N = O(d\epsilon^{-2} \log d)$ rows of $[\AA\;\bb]$, this yields a smaller matrix $[\tilde{\AA}\;\tilde{\bb}]$ such that $\tilde{\AA}^T\tilde{\AA} \approx_{O(1)} \II$ and $\|\tilde{\AA}_{i,:}\|_2^2 \leq  O(d/N)$ for each row $i$, which is to say $\tilde{\AA}$ is almost IRB. This then culminates in the following lemma whose proof we relegate to \Cref{subsec:unifSample}.

\begin{restatable}[]{lemma}{uniformSampling}
	\label{lem:uniformSampling}
	Suppose we are given a matrix $\AA \in \R^{n \times d}$ such that $\AA^T\AA \approx_{O(1)} \II$ and $\norme{\AA_{i,:}}^2 \approx_{O(1)} d/n$.
	If we uniformly sample $N = O(d\epsilon^{-2}\log{n})$ rows independently and rescale each row by $n/N$ to obtain matrix $\tilde{\AA}$, then with high probability the following properties hold:
	\begin{enumerate}
		\item $\norm{\AA\xx}_1 \approx_{1 + \epsilon} \|\tilde{\AA}\xx\|_1$ for all $\xx \in \R^d$.
		\item $\tilde{\AA}^T\tilde{\AA} \approx_{O(1)} \left(\dfrac{n}{N}\right)\II$.
		\item $\|\tilde{\AA}_{i,:}\|_2^2 \approx_{O(1)} dn/N^2$ for all rows $i \in \{1,2,\ldots, N\}$.
	\end{enumerate}
\end{restatable}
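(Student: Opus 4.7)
The plan is to dispose of the three claims separately, exploiting throughout the two standing assumptions $\AA^T\AA \approx_{O(1)} \II$ and $\norme{\AA_{i,:}}^2 \approx_{O(1)} d/n$. Claim (3) is immediate from the construction: every row of $\tilde{\AA}$ equals $(n/N)\AA_{i,:}$ for some index $i$, so squaring the row-norm assumption yields $\|\tilde{\AA}_{i,:}\|_2^2 \approx_{O(1)} (n/N)^2 \cdot d/n = dn/N^2$.

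For claim (1), the strategy is to show that uniform sampling is, up to constant factors, a special case of the Lewis-weight sampling scheme, so \Cref{thm:lewisWeights} can be invoked as a black box. The key step is to argue that the $\ell_1$ Lewis weights of $\AA$ are themselves approximately constant, satisfying $\lw_i \approx_{O(1)} d/n$. One verifies this by plugging the constant ansatz $\lw_i \equiv d/n$ into Definition~\ref{def:lewis}: with $\LW = (d/n)\II$, we have $\AA^T\LW^{-1}\AA = (n/d)\AA^T\AA \approx_{O(1)} (n/d)\II$, so the right-hand side of the Lewis fixed-point equation evaluates to $(d/n)\norme{\AA_{i,:}}^2 \approx_{O(1)} (d/n)^2$, matching $\lw_i^2$ up to constants; uniqueness and stability of the Lewis fixed point then pin the true weights at $\approx_{O(1)} d/n$. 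With $\lw_i \approx_{O(1)} d/n$, setting $\pp_i \equiv N/n$ for $N = \Theta(d\epsilon^{-2}\log n)$ satisfies the hypothesis $\pp_i \approx_{O(1)} \lw_i h(n,\epsilon)$ of \Cref{thm:lewisWeights} with $h(n,\epsilon) = \Theta(\epsilon^{-2}\log n)$, and the matrix $\SS$ built there picks each row with probability $\pp_i/\sum_j\pp_j = 1/n$ and rescales by $1/\pp_i = n/N$ — exactly our uniform-with-rescaling procedure. The $\ell_1$-preservation conclusion transfers directly.

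For claim (2) I would apply matrix Chernoff. Decompose $\tilde{\AA}^T\tilde{\AA} = \sum_{k=1}^N \XX_k$ with $\XX_k = (n/N)^2 \AA_{i_k,:}^T \AA_{i_k,:}$ a rank-one positive semidefinite random matrix; then $\mathbb{E}[\XX_k] = (n/N^2)\AA^T\AA$ and hence $\mathbb{E}[\tilde{\AA}^T\tilde{\AA}] = (n/N)\AA^T\AA \approx_{O(1)} (n/N)\II$, while $\|\XX_k\|_2 = (n/N)^2\|\AA_{i_k,:}\|_2^2 \leq O(nd/N^2)$. Since the per-summand spectral bound is an $O(d/N)$ fraction of $\|\mathbb{E}[\tilde{\AA}^T\tilde{\AA}]\|_2 = \Theta(n/N)$, the standard matrix Chernoff inequality yields a constant-factor spectral approximation with high probability as soon as $N = \Omega(d\log n)$, which is guaranteed by our choice.

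The main obstacle I anticipate is formalizing cleanly the Lewis-weight uniformity step in claim (1): the defining equation is implicit, and making the $\approx_{O(1)}$ perturbation argument for uniqueness/stability fully rigorous requires some care. If this proves fiddly, a clean alternative is to invoke existing comparison bounds from \cite{cohenpeng} relating $\ell_1$ Lewis weights to leverage scores, and then note that the leverage scores $\tau_i = \AA_{i,:}(\AA^T\AA)^{-1}\AA_{i,:}^T \approx_{O(1)} \norme{\AA_{i,:}}^2 \approx_{O(1)} d/n$ follow directly from the two hypotheses; either route yields the stated row count $N = O(d\epsilon^{-2}\log n)$ and completes the proof.
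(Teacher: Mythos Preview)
Your proposal is correct, and for claim~(1) it matches the paper's route almost exactly: the paper also shows $\lw_i \approx_{O(1)} d/n$ by verifying that the constant assignment $\ww_i = d/n$ is an $\alpha$-almost Lewis weight in the sense of \cite{cohenpeng} and then invoking their stability lemma (Lemma~5.3), which is precisely the ``stability of the Lewis fixed point'' step you flag as the main obstacle. So the resolution you anticipate is exactly what the paper does.

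Where you diverge is in claims~(2) and~(3), and in both cases your arguments are more direct than the paper's. For~(2) the paper reuses its general Lemma~\ref{lem:l2Sampling} (matrix concentration for Lewis-weight sampling) and then specializes via $\LW \approx (d/n)\II$, whereas you apply matrix Chernoff from scratch; both work, but yours avoids the detour through the Lewis-weight machinery. For~(3) the paper takes a surprisingly roundabout path through leverage scores of $\tilde{\AA}$, combining claim~(2) with Lemma~\ref{lem:lewisGivesEqualLevScores} to deduce the row-norm bound, while your one-line observation that each row of $\tilde{\AA}$ is literally $(n/N)\AA_{i,:}$ gives the result deterministically and immediately from the hypothesis $\norme{\AA_{i,:}}^2 \approx_{O(1)} d/n$. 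Your argument for~(3) is strictly cleaner.
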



\subsubsection*{Initialization using Approximate $\ell_2$ Minimizer}\label{subsec:l2Minimizer}
It now remains to show that we can still find a good initialization $\xx_0$ for gradient descent even with our relaxed assumptions on $\AA$.
Previously, when we had $\AA^T \AA = \II$, we used $\xx_0 = \AA^T \bb = \arg \min_{\xx} \norme{\AA \xx - \bb}$.
It turns out that for $\AA^T \AA \approx_{O(1)} \II$, the $\ell_2$ minimizer $\xx_0 = \arg \min_{\xx} \norme{\AA \xx - \bb}$ is still a good initialization point.
But finding an exact $\ell_2$ minimizer would take a prohibitive amount of time or would require matrix multiplication.
However, an approximate $\ell_2$ minimizer suffices, and we can find such a point quickly using the conjugate gradient method.

For this section, we define $\xx^* \defeq \arg\min_{\xx} \norm{\AA\xx - \bb}_1$ and $\xx_0 \defeq \arg \min_{\xx} \norme{\AA\xx - \bb}$. Our main result is the following:

\begin{lemma}[Approximate $\ell_2$ minimizer is close to $x^*$] \label{lem:initMain} Let $\AA \in \R^{n \times d}$ be such that $\AA^T \AA \approx_{O(1)} \II$ and for each row $i$ of $\AA$, $\norme{\AA_{i,:}}^2 \leq O(d/n)$. Assume that $\norme{\bb} \le n^c$ and $\norme{\AA\xx_0 - \bb} \ge n^{-c}$ for some constant $c > 0$.
	\begin{align*}
	\norme{\tilde{\xx}_0 - \xx^*} \le O(\sqrt{d/n}) \norm{\AA \xx^* - \bb}_1.
	\end{align*}
	Moreover, $\tilde{\xx}_0$ can be computed in  $O( (t_{\AA^T \AA} + d) \log (n/ \epsilon))$ time, where $t_{\AA^T \AA}$ denotes the time to multiply a vector by $\AA^T \AA$. 
\end{lemma}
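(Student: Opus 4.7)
The plan is to bound $\norme{\xx_0 - \xx^*}$ for the exact $\ell_2$ minimizer, then use conjugate gradient to obtain a $\tilde{\xx}_0$ so close to $\xx_0$ that the triangle inequality delivers the stated bound.

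For the first step, since $\AA$ has full column rank, the normal equations give $\xx_0 = (\AA^T\AA)^{-1}\AA^T\bb$, so
\begin{equation*}
\xx_0 - \xx^* = (\AA^T\AA)^{-1}\AA^T(\bb - \AA\xx^*).
\end{equation*}
From $\AA^T\AA \approx_{O(1)} \II$ we get $\norm{(\AA^T\AA)^{-1}}_2 = O(1)$. Writing $\AA^T(\bb - \AA\xx^*) = \sum_i \AA_{i,:}^T (\bb_i - (\AA\xx^*)_i)$ and applying the triangle inequality with the row-norm bound $\norme{\AA_{i,:}}^2 \le O(d/n)$ (exactly the subordinate-norm trick used in Lemma~\ref{lem:easyInit}) yields $\norme{\xx_0 - \xx^*} \le O(\sqrt{d/n})\,\norm{\AA\xx^* - \bb}_1$.

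For the second step, apply conjugate gradient to the linear system $\AA^T\AA\,\xx = \AA^T\bb$ starting from the origin. Because $\kappa(\AA^T\AA) = O(1)$, the standard CG contraction gives $\norm{\tilde{\xx}_0 - \xx_0}_{\AA^T\AA} \le 2\rho^k \norm{\xx_0}_{\AA^T\AA}$ after $k$ iterations for some absolute $\rho < 1$; using $\AA^T\AA \approx_{O(1)} \II$, this transfers to $\norme{\tilde{\xx}_0 - \xx_0} \le O(\rho^k)\,\norme{\xx_0}$. Each iteration uses one matrix-vector product against $\AA^T\AA$ and $O(d)$ vector arithmetic, costing $O(t_{\AA^T\AA} + d)$ time.

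To size $k$, use the bound $\norme{\xx_0} \le \norm{(\AA^T\AA)^{-1}}_2 \norm{\AA^T}_2 \norme{\bb} = O(n^c)$ together with $\norm{\AA\xx^* - \bb}_1 \ge \norme{\AA\xx^* - \bb} \ge \norme{\AA\xx_0 - \bb} \ge n^{-c}$, where the middle inequality holds because $\xx_0$ minimizes the $\ell_2$ residual. Choosing $k = \Theta(\log(n/\epsilon))$ then forces $\norme{\tilde{\xx}_0 - \xx_0}$ to be much smaller than $\sqrt{d/n}\,\norm{\AA\xx^* - \bb}_1$ (and additionally small enough to support any downstream $(1+\epsilon)$-accuracy requirement), giving the stated $O((t_{\AA^T\AA} + d)\log(n/\epsilon))$ runtime. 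The triangle inequality $\norme{\tilde{\xx}_0 - \xx^*} \le \norme{\tilde{\xx}_0 - \xx_0} + \norme{\xx_0 - \xx^*}$ closes out the distance bound. The main obstacle here is purely bookkeeping: CG yields only relative accuracy, so without the hypothesized polynomial lower bound on $\norme{\AA\xx_0 - \bb}$ the CG residual could dominate the target $\sqrt{d/n}\,\norm{\AA\xx^* - \bb}_1$, and the assumption is precisely what converts logarithmically many rounds of geometric contraction into sufficient absolute accuracy.
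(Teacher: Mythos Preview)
Your proposal is correct and follows essentially the same approach as the paper: it too splits the argument into Lemma~\ref{lem:exactInit} (bounding $\norme{\xx_0-\xx^*}$ via $\norm{(\AA^T\AA)^{-1}}_2=O(1)$ and the row-norm bound) and Lemma~\ref{lem:cg} (CG with $\kappa=O(1)$, converting relative error to absolute error using the polynomial bounds on $\norme{\bb}$ and $\norme{\AA\xx_0-\bb}$), then combines them by the triangle inequality with the chain $\norme{\AA\xx_0-\bb}\le\norme{\AA\xx^*-\bb}\le\norm{\AA\xx^*-\bb}_1$. The only cosmetic difference is that the paper bounds $\norme{(\AA^T\AA)^{-1}\AA^T\vv}$ by writing $\II=\AA^T\AA+\BB$ rather than directly invoking $\norm{(\AA^T\AA)^{-1}}_2=O(1)$ as you do.
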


We prove Lemma~\ref{lem:initMain} using the following two lemmas whose proofs are deferred until \Cref{subsec:initialization_proofs}. Lemma~\ref{lem:exactInit} shows that the $\ell_2$ minimizer is close to the $\ell_1$ minimizer even when $\AA$ is only approximately isotropic, and the proof is similar to that of Lemma~\ref{lem:easyInit}.
Lemma~\ref{lem:cg} shows that we can find a good estimate for the $\ell_2$ minimizer.

\begin{lemma}[Exact $\ell_2$ minimizer is close to $\xx^*$]\label{lem:exactInit}
	Let $\AA \in \R^{n \times d}$ be such that $\AA^T \AA \approx_{O(1)} \II$ and for each row $i$ of $\AA$, $\norme{\AA_{i,:}}^2 \leq O(d/n)$. Then
	\begin{align*}
	\norme{\xx_0 - \xx^*} \le O(\sqrt{d/n}) \norm{\AA \xx^* - \bb}_1.
	\end{align*}
\end{lemma}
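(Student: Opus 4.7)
The plan is to mimic the proof of Lemma~\ref{lem:easyInit} almost verbatim, but with one extra multiplicative factor that comes from inverting $\AA^T\AA$ instead of treating it as the identity. The key observation is that when $\AA^T\AA$ is only approximately $\II$, the exact $\ell_2$ minimizer is $\xx_0 = (\AA^T\AA)^{-1}\AA^T\bb$ rather than $\AA^T\bb$, and the inverse of $\AA^T\AA$ only loses a constant in the spectral norm.

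Concretely, I would first write
\[
\xx_0 - \xx^* \;=\; (\AA^T\AA)^{-1}\AA^T\bb - \xx^* \;=\; (\AA^T\AA)^{-1}\AA^T(\bb - \AA\xx^*),
\]
using $\xx^* = (\AA^T\AA)^{-1}(\AA^T\AA)\xx^*$. Taking $\ell_2$ norms gives the sub-multiplicative bound
\[
\norme{\xx_0 - \xx^*} \;\le\; \norme{(\AA^T\AA)^{-1}}\cdot\norme{\AA^T(\bb - \AA\xx^*)}.
\]

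Next I would bound the two factors separately. From the hypothesis $\AA^T\AA \approx_{O(1)} \II$ it follows that $\norme{(\AA^T\AA)^{-1}} = O(1)$. For the second factor I would reuse exactly the triangle-inequality argument of Lemma~\ref{lem:easyInit}: writing $\cc := \bb - \AA\xx^*$,
\[
\norme{\AA^T\cc} \;=\; \Big\|\sum_i \cc_i \AA_{i,:}^T\Big\|_2 \;\le\; \sum_i |\cc_i|\cdot\norme{\AA_{i,:}} \;\le\; \max_i\norme{\AA_{i,:}}\cdot\norm{\cc}_1,
\]
and then invoke the row-norm hypothesis $\max_i\norme{\AA_{i,:}}^2 \le O(d/n)$ together with $\norm{\cc}_1 = \norm{\AA\xx^* - \bb}_1$. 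Combining the two bounds yields
\[
\norme{\xx_0 - \xx^*} \;\le\; O(1)\cdot O(\sqrt{d/n})\cdot \norm{\AA\xx^* - \bb}_1 \;=\; O(\sqrt{d/n})\,\norm{\AA\xx^* - \bb}_1,
\]
as required.

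There is essentially no obstacle here; the proof is a two-line modification of Lemma~\ref{lem:easyInit}, with the only new ingredient being the spectral bound on $(\AA^T\AA)^{-1}$ that follows immediately from the approximate isotropy hypothesis. The more substantive difficulty is reserved for Lemma~\ref{lem:cg}, which must show that conjugate gradient produces a $\tilde{\xx}_0$ close enough to $\xx_0$ that the bound above survives the approximation; but that is the companion lemma, not this one.
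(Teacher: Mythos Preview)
Your proposal is correct and follows essentially the same route as the paper: write $\xx_0-\xx^*=(\AA^T\AA)^{-1}\AA^T(\bb-\AA\xx^*)$, bound the operator norm of $(\AA^T\AA)^{-1}$ by a constant using approximate isotropy, and bound $\norme{\AA^T(\bb-\AA\xx^*)}$ by $\max_i\norme{\AA_{i,:}}\cdot\norm{\AA\xx^*-\bb}_1$ via the triangle inequality (the paper packages this last step as Lemma~\ref{lem:subordinatenorm}). If anything, your version is slightly more direct: the paper introduces an auxiliary matrix $\BB$ with $\AA^T\AA+\BB=\II$ and bounds $\norme{\II+\BB(\AA^T\AA)^{-1}}$, but that bound itself appeals to $\norme{(\AA^T\AA)^{-1}}\le O(1)$, so the detour is unnecessary.
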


\begin{lemma}[Conjugate gradient finds an approximate $\ell_2$ minimizer]\label{lem:cg}
Let $\AA \in \R^{n \times d}$ be such that $\AA^T \AA \approx_{O(1)} \II$ and for each row $i$ of $\AA$, $\norme{\AA_{i,:}}^2 \leq O(d/n)$. Assume that $\norme{\bb} \le n^c$ and $\norme{\AA\xx_0 - \bb} \ge n^{-c}$.
Then for any $\epsilon > 0$, the conjugate gradient method can find an $\tilde{\xx}_0$ such that
\begin{align*}
\norme{\tilde{\xx}_0 - \xx_0} \le \epsilon \norme{\AA\xx_0 - \bb}.
\end{align*}
Moreover, $\tilde{\xx}_0$ can be found in time $O( (t_{\AA^T \AA} + d) \log (n/ \epsilon))$, where $t_{\AA^T \AA}$ is the time to multiply a vector by $\AA^T \AA$. 
\end{lemma}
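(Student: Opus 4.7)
The plan is to recognize that $\xx_0 = \arg\min_\xx \norme{\AA\xx - \bb}$ is precisely the solution of the normal equations $\AA^T \AA\, \xx_0 = \AA^T \bb$, so we can reduce finding (an approximation to) $\xx_0$ to solving a positive definite linear system with coefficient matrix $\MM \defeq \AA^T\AA$. Since by hypothesis $\MM \approx_{O(1)} \II$, the condition number $\kappa(\MM)$ is $O(1)$, which is exactly the regime where conjugate gradient shines: the classical analysis (see e.g.\ any numerical linear algebra text) gives that starting from $\yy^{(0)} = 0$ and producing iterates $\yy^{(t)}$, we have
\[
\norm{\yy^{(t)} - \xx_0}_{\MM} \;\le\; 2\left(\frac{\sqrt{\kappa(\MM)}-1}{\sqrt{\kappa(\MM)}+1}\right)^{t} \norm{\xx_0}_{\MM},
\]
and each iteration costs one multiplication by $\MM$ (cost $t_{\AA^T\AA}$) plus $O(d)$ extra vector work.

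Next I would translate this $\MM$-norm bound into the required Euclidean-norm bound. Because $\MM \approx_{O(1)} \II$, the $\MM$-norm and $\ell_2$-norm are equivalent up to constants, so $\norme{\yy^{(t)} - \xx_0} \le O(1)\,\rho^{t}\,\norme{\xx_0}$ where $\rho = (\sqrt{\kappa}-1)/(\sqrt{\kappa}+1)$ is a constant strictly less than $1$. It therefore suffices to show the ratio $\norme{\xx_0}/\norme{\AA\xx_0-\bb}$ is bounded by $\poly(n)$, since then choosing $t = \Theta(\log(n/\epsilon))$ iterations drives the CG error below $\epsilon\,\norme{\AA\xx_0 - \bb}$, giving the desired bound and the claimed running time.

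To bound that ratio, I would first use $\xx_0 = \MM^{-1}\AA^T\bb$ and the operator bounds $\smallnorm{\MM^{-1}} = O(1)$ and $\smallnorm{\AA^T} = O(1)$ (both from $\MM \approx_{O(1)} \II$) to conclude $\norme{\xx_0} \le O(1)\norme{\bb} \le O(n^c)$ by hypothesis. For the denominator we directly invoke the assumption $\norme{\AA\xx_0 - \bb} \ge n^{-c}$. Combining, $\norme{\xx_0}/\norme{\AA\xx_0-\bb} \le O(n^{2c})$, and solving $\rho^{t}\cdot n^{2c} \le \epsilon$ yields $t = O(\log(n/\epsilon))$ since $c$ is a constant. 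Multiplying by the per-iteration cost gives the advertised runtime $O((t_{\AA^T\AA}+d)\log(n/\epsilon))$.

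The only mildly subtle point, and the one I would be most careful about, is the interaction between the two polynomial quantities: the upper bound on $\norme{\xx_0}$ comes from $\norme{\bb}$ (not from the optimal residual), while the lower bound on the denominator comes from the stated assumption on the residual. One might worry these conspire to blow up the iteration count, but since both are at most $\poly(n)$ and CG's error decays geometrically in the condition number (which is $O(1)$), we lose only an extra logarithmic factor, which is exactly what the statement permits. Everything else -- constructing the normal equations, implementing CG via the black-box matrix-vector product, and the per-step $O(d)$ overhead -- is standard.
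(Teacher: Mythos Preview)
Your proposal is correct and follows essentially the same approach as the paper: apply conjugate gradient to the normal equations $\AA^T\AA\,\xx = \AA^T\bb$, exploit $\kappa(\AA^T\AA)=O(1)$ to get geometric convergence in the $\MM$-norm, convert to the Euclidean norm via $\MM\approx_{O(1)}\II$, and then use the polynomial bounds $\norme{\bb}\le n^c$ and $\norme{\AA\xx_0-\bb}\ge n^{-c}$ to absorb the ratio $\norme{\xx_0}/\norme{\AA\xx_0-\bb}$ into the $\log(n/\epsilon)$ iteration count. The only cosmetic difference is that the paper bounds $\norme{\xx_0}\le O(\norme{\bb})$ via the triangle inequality $\norme{\bb}\ge\norme{\AA\xx_0-\bb}\ge\norme{\AA\xx_0}-\norme{\bb}$, whereas you use the operator-norm bound $\norme{\xx_0}=\norme{\MM^{-1}\AA^T\bb}\le\smallnorm{\MM^{-1}}\smallnorm{\AA^T}\norme{\bb}$; both reach the same conclusion.
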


\begin{proof}[Proof of Lemma~\ref{lem:initMain}]
	We use conjugate gradient with $\epsilon = \sqrt{d/n}$ to find an $\tilde{\xx}_0$ in $O((t_{\AA^T \AA} + d) \log (n/ \epsilon))$ time by Lemma~\ref{lem:cg} to achieve our desired initialization time bounds.
	Note that by definition of $\xx_0$ and by a standard norm inequality, we have:
	\begin{align*}
	\norme{\AA \xx_0 - \bb} \le \norme{\AA \xx^* - \bb} \le \norm{\AA \xx^* - \bb}_1
	\end{align*}
	
	Then by the triangle inequality and Lemma~\ref{lem:exactInit} we have:
	\begin{align*}
	\norme{\tilde{\xx}_0 - \xx^*} &\le \norme{\tilde{\xx}_0 - \xx_0} + \norme{\xx_0 - \xx^*} \\
	&\le \sqrt{d/n}\norm{\AA \xx^* - \bb}_1 + O(\sqrt{d/n})\norm{\AA \xx^* - \bb}_1
	\end{align*}
	which gives our desired initialization accuracy.
\end{proof}

\subsubsection*{Fast Row-sparsity-preserving $\ell_1$ Minimization}

Finally, we combine the matrix achieved by uniform sampling in Lemma~\ref{lem:uniformSampling} and the initialization from Lemma~\ref{lem:initMain} to achieve fast row-sparsity-preserving $\ell_1$ minimization. This then gives the following main theorem that we prove in \Cref{subsec:achievingSparsity}.

\noFastMatrixMult*


\acks{We are grateful to Richard Peng for many insightful discussions. We would also like to thank Michael Cohen for his helpful comments
	and Junxing Wang for his feedback. We thank the anonymous reviews of an earlier version of this
	manuscript for their very valuable comments, especially with
	regards to the performances of the many existing
	algorithms in the $\ell_1$ case.}

\bibliography{gd}

\appendix
\section{Proof of Theorem~\ref{thm:applyKatyusha}}\label{sec:katyushaproof}

In this section, we prove our primary result, stated in \Cref{thm:applyKatyusha}.
Specifically, we further examine whether our strong initialization distance bound will allow us to improve the running time with black-box accelerated stochastic gradient descent methods.
The first step towards this is to apply smoothing reductions to our objective function.

\subsection{Smoothing the Objective Function and Adding Strong Convexity}

As before, we let $\xx^* = \arg\min_{\xx} \norme{\AA\xx - \bb}$. For clarity, we will borrow some of the notation from \cite{AllenZhuH16} to more clearly convey their black-box reductions.

\begin{definition}
	A function $f(x)$ is $(L,\sigma)$-\textit{smooth-sc} if it is both $L$-\textit{smooth} and $\sigma$-\textit{strongly-convex}.
\end{definition}

\begin{definition}
	An algorithm $\calA(f(x),x_0)$ is a $\textsc{Time}_\calA(L,\sigma)$-\textit{minimizer} if $f(x)$ is  $(L,\sigma)$-\textit{smooth-sc} and $\textsc{Time}_\calA(L,\sigma)$ is the time it takes $\calA$ to produce $x'$ such that $f(x') - f(x^*) \leq \frac{f(x_0) - f(x^*)}{4}$ for any starting $x_0$.
\end{definition}

Allen-Zhu and Hazan assume access to efficient $\textsc{Time}_\calA(L,\sigma)$-\textit{minimizer} algorithms, and show how a certain class of objective functions can be slightly altered to meet the smoothness and strong convexity conditions to apply these algorithms without losing too much in terms of error and runtime.

\begin{theorem}[Theorem C.2 from \cite{AllenZhuH16}]\label{thm:reductionAlgo}
	Consider the problem of minimizing an objective function
	\[f(x) = \frac{1}{n}\sum_{i=1}^n f_i(x) \]
	such that each $f_i(\cdot)$ is $G$-Lipschitz continuous. Let $\xx_0$ be a starting vector such that $f(\xx_0) - f(\xx^*) \leq \Delta$ and $\|\xx_0 - \xx^*\|_2^2 \leq \Theta$. 
	Then there is a routine that takes as input a $\textsc{Time}_\calA(L,\sigma)$-\textit{minimizer}, $\calA$, alongside $f(x)$ and $x_0$, with $\beta_0 = \Delta/G^2$, $\sigma_0 = \Delta/\Theta$ and $T = \log_2(\Delta/\epsilon)$, and produces $x_T$ satisfying $f(x_T) - f(x^*) \leq O(\epsilon)$ in total running time \[\sum_{t=0}^{T-1}\textsc{Time}_{\calA}(2^t/\beta_0,\sigma_0\cdot 2^{-t}).\] 
\end{theorem}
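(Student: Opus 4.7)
The plan is to build a sequence of surrogate objectives $F_t(x) = \tilde{f}_{\beta_t}(x) + \frac{\sigma_t}{2}\|x - x_0\|_2^2$ for $t = 0, 1, \ldots, T$, where $\tilde{f}_\beta$ is a Moreau-type $\beta$-smoothing of $f$ (applied to each $f_i$ separately so that the finite-sum structure is preserved), $\beta_t = \beta_0 \cdot 2^{-t}$, and $\sigma_t = \sigma_0 \cdot 2^{-t}$. Two standard properties of this smoothing drive the argument: $\tilde{f}_{\beta_t}$ inherits the $G$-Lipschitz property and is $O(1/\beta_t) = O(2^t/\beta_0)$-smooth, so $F_t$ is $(L_t, \sigma_t)$-smooth-sc with exactly the parameters appearing in the claimed per-step running time; and the pointwise smoothing error satisfies $|f(x) - \tilde{f}_{\beta_t}(x)| \leq O(\beta_t G^2)$. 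With the prescribed choices $\beta_0 = \Delta/G^2$ and $\sigma_0 = \Delta/\Theta$, both the initial smoothing error and the regularization penalty $\tfrac{\sigma_0}{2}\|x^* - x_0\|_2^2 \leq \tfrac{\sigma_0 \Theta}{2}$ are of order $\Delta$, matching the initial suboptimality.

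The core of the proof is an induction showing that the point $x_t$ produced at round $t$ satisfies
\[
F_t(x_t) - \min_x F_t(x) \leq O(\Delta \cdot 2^{-t}).
\]
To advance one step we feed $F_{t+1}$ and $x_t$ into $\calA$; by the defining guarantee of a $\textsc{Time}_\calA(L_{t+1}, \sigma_{t+1})$-minimizer, the returned $x_{t+1}$ obeys $F_{t+1}(x_{t+1}) - F_{t+1}(x_{t+1}^*) \leq \tfrac{1}{4}\bigl(F_{t+1}(x_t) - F_{t+1}(x_{t+1}^*)\bigr)$. The induction closes once we bound the initial gap $F_{t+1}(x_t) - F_{t+1}(x_{t+1}^*) \leq O(\Delta \cdot 2^{-t})$, using three ingredients: (i) the difference $|\tilde{f}_{\beta_{t+1}} - \tilde{f}_{\beta_t}|$ is controlled by $O(\beta_t G^2) = O(\Delta \cdot 2^{-t})$; (ii) the change in the quadratic regularizer is controlled by bounding $\|x_t - x_0\|_2^2$ via the strong convexity of $F_t$ from the inductive hypothesis, together with $\|x^* - x_0\|_2^2 \leq \Theta$; (iii) the drift of the minimizer $x_t^* \to x_{t+1}^*$ is also $O(\Delta \cdot 2^{-t})$ in function value by the same smoothing and regularization comparisons.

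After $T = \log_2(\Delta/\epsilon)$ rounds the surrogate suboptimality reaches $O(\epsilon)$, and we translate back to $f$ by noting $|f(x_T) - \tilde{f}_{\beta_T}(x_T)| \leq O(\beta_T G^2)$ and $\tfrac{\sigma_T}{2}\|x^* - x_0\|_2^2 \leq \tfrac{\sigma_T \Theta}{2}$, both of which are $O(\epsilon)$ by the geometric decay of $\beta_T$ and $\sigma_T$. Hence $f(x_T) - f(x^*) \leq O(\epsilon)$, and the total running time is the telescoping sum $\sum_{t=0}^{T-1} \textsc{Time}_\calA(2^t/\beta_0, \sigma_0 \cdot 2^{-t})$ of the per-iteration calls to $\calA$.

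The main obstacle is the bookkeeping in the inductive step: one must simultaneously track how the smoothing error, the regularization term, and the shift of the minimizer each change when $\beta_t$ and $\sigma_t$ are halved, and show that all three sources together stay within the $O(\Delta \cdot 2^{-t})$ budget so the factor-of-$4$ gain from $\calA$ beats the factor-of-$2$ we need. The parameter settings $\beta_0 = \Delta/G^2$ and $\sigma_0 = \Delta/\Theta$ are calibrated precisely so that at $t=0$ all three error sources are $\Theta(\Delta)$, and the homogeneous geometric schedule preserves the balance at every subsequent round.
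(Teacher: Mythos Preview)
This theorem is not proved in the present paper: it is quoted verbatim as Theorem~C.2 of \cite{AllenZhuH16} and used as a black box in the proof of Lemma~\ref{lem:applyReduction} and \Cref{thm:applyKatyusha}. Consequently there is no in-paper proof to compare your proposal against.

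That said, your sketch is a faithful reconstruction of the Allen-Zhu--Hazan argument. The two-layer surrogate $F_t = \tilde f_{\beta_t} + \tfrac{\sigma_t}{2}\|\cdot - x_0\|_2^2$, the geometric halving of $\beta_t$ and $\sigma_t$, the invariant $F_t(x_t)-\min F_t \le O(\Delta\cdot 2^{-t})$, and the final translation back to $f$ using $|f-\tilde f_{\beta_T}|\le O(\beta_T G^2)$ and $\tfrac{\sigma_T}{2}\Theta\le O(\epsilon)$ are exactly the ingredients of their reduction. One minor indexing slip: you describe calling $\calA$ on $F_{t+1}$ at round $t$, but the claimed cost $\textsc{Time}_\calA(2^t/\beta_0,\sigma_0\cdot 2^{-t})$ matches the smoothness and strong-convexity parameters of $F_t$, not $F_{t+1}$; the cleaner bookkeeping is to produce $x_t$ by minimizing $F_t$ from $x_{t-1}$ for $t=0,\ldots,T-1$ (with $x_{-1}:=x_0$). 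This is purely cosmetic and does not affect correctness.
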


It is then straightforward to show that our objective function fits the necessary conditions to utilize \Cref{thm:reductionAlgo}.

\begin{lemma}\label{lem:minEqualSumLipsFunctions}
	If $\AA$ is IRB, then the function $\norm{\AA\xx - \bb}_1$ can be written as $\frac{1}{n}\sum_{i=1}^n f_i(\xx)$ such that each $f_i(\cdot)$ is $O(\sqrt{nd})$-Lipschitz continuous.
\end{lemma}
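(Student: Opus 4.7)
The plan is to use the natural decomposition $f_i(\xx) \defeq n \cdot |\AA_{i,:}\xx - \bb_i|$, which immediately gives $\tfrac{1}{n}\sum_{i=1}^n f_i(\xx) = \sum_{i=1}^n |\AA_{i,:}\xx - \bb_i| = \|\AA\xx - \bb\|_1$, matching the required form. So the only real work is to verify the Lipschitz bound on each $f_i$.

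To bound the Lipschitz constant, I would apply the reverse triangle inequality on the absolute value and then Cauchy--Schwarz: for any $\xx,\yy \in \R^d$,
\begin{align*}
|f_i(\xx) - f_i(\yy)| &= n \cdot \bigl| |\AA_{i,:}\xx - \bb_i| - |\AA_{i,:}\yy - \bb_i| \bigr| \\
&\le n \cdot |\AA_{i,:}(\xx - \yy)| \\
&\le n \cdot \norme{\AA_{i,:}} \cdot \norme{\xx - \yy}.
\end{align*}
Using the IRB assumption (Definition~\ref{def:good}), $\norme{\AA_{i,:}}^2 \leq O(d/n)$, so $\norme{\AA_{i,:}} \le O(\sqrt{d/n})$ and thus $n \cdot \norme{\AA_{i,:}} \le O(\sqrt{nd})$. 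This gives exactly the claimed $O(\sqrt{nd})$-Lipschitz bound for each $f_i$.

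There isn't really a main obstacle here; this lemma is essentially a bookkeeping step that repackages the row-norm bound from the IRB property into the Lipschitz constant required to invoke the Allen-Zhu--Hazan smoothing reduction (Theorem~\ref{thm:reductionAlgo}). The only subtlety is the factor of $n$ introduced by writing the $\ell_1$ objective as an average rather than a sum: this is what trades the per-row bound $\norme{\AA_{i,:}} = O(\sqrt{d/n})$ for the global Lipschitz constant $O(\sqrt{nd})$. Note also that the same reasoning already appeared implicitly in the proof of Lemma~\ref{lem:easyLipscitz}, where the squared gradient norm was bounded by $O(nd)$; the present statement simply restates that fact as a Lipschitz constant suitable for plugging into the smoothing framework.
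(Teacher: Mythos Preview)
Your proposal is correct and matches the paper's approach exactly: the paper sets $f_i(\xx) = n \cdot |\AA_{i,:}\xx - \bb_i|$ and then invokes Lemma~\ref{lem:easyLipscitz} for the Lipschitz bound, which is precisely the computation you spelled out via the reverse triangle inequality and Cauchy--Schwarz. There is nothing to add.
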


\begin{proof}
	By the definition of 1-norm,
	\[\norm{\AA\xx - \bb}_1 = \sum_{i=1}^n\abs{\AA_{i,:}\xx-\bb_i}.\]
	We then set $f_i(\xx) = n \cdot \abs{\AA_{i,:}\xx-\bb_i}$ and the result follows from Lemma~\ref{lem:easyLipscitz}.
\end{proof}

We can then incorporate our objective into the routine from \Cref{thm:reductionAlgo}, along with our initialization of $\xx_0$.

\begin{lemma}\label{lem:applyReduction}
	Let $\calA$ be a $\textsc{Time}_\calA(L,\sigma)$-\textit{minimizer}, along with objective $\norm{\AA\xx -\bb}_1$ such that $\AA$ is IRB and $\xx_0 = \AA^T \bb$, then the routine from \Cref{thm:reductionAlgo} produces
	$\xx_T$ satisfying $f(\xx_T) - f(\xx^*) \leq O(\epsilon)$ in total running time \[ \sum_{t=0}^{T-1}\textsc{Time}_\calA\left(O\left(\frac{nd2^t}{\Delta}\right),O\left(\frac{n\Delta}{d \cdot f(\xx^*)^2 2^t}\right)\right).\] 
\end{lemma}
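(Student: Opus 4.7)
The plan is to directly invoke Theorem~\ref{thm:reductionAlgo}, instantiating its generic parameters $G$, $\Theta$ with the concrete bounds our IRB hypothesis supplies.

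First I would verify the hypotheses of Theorem~\ref{thm:reductionAlgo} for $f(\xx) = \norm{\AA\xx - \bb}_1$. Lemma~\ref{lem:minEqualSumLipsFunctions} already expresses $f$ as $\tfrac{1}{n}\sum_{i=1}^n f_i(\xx)$ with each $f_i(\xx) = n\cdot|\AA_{i,:}\xx - \bb_i|$ being $G$-Lipschitz continuous for $G = O(\sqrt{nd})$, so the decomposition precondition is met. Next, from Lemma~\ref{lem:easyInit} applied to $\xx_0 = \AA^T\bb$, we obtain
\[
\|\xx_0 - \xx^*\|_2^2 \;\le\; O\!\left(\frac{d}{n}\right)\norm{\AA\xx^*-\bb}_1^{\,2} \;=\; O\!\left(\frac{d}{n}\right) f(\xx^*)^2,
\]
so I may take $\Theta = O(d/n)\cdot f(\xx^*)^2$. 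The value $\Delta$ is inherited as an assumed upper bound on $f(\xx_0) - f(\xx^*)$, exactly as in Theorem~\ref{thm:reductionAlgo}.

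With these parameters fixed, the reduction sets $\beta_0 = \Delta/G^2$ and $\sigma_0 = \Delta/\Theta$, and runs for $T = \log_2(\Delta/\epsilon)$ outer phases, producing $\xx_T$ with $f(\xx_T) - f(\xx^*) \le O(\epsilon)$. The total running time is $\sum_{t=0}^{T-1}\textsc{Time}_\calA(2^t/\beta_0,\,\sigma_0 \cdot 2^{-t})$. Substituting $G^2 = O(nd)$ yields
\[
\frac{2^t}{\beta_0} \;=\; \frac{2^t G^2}{\Delta} \;=\; O\!\left(\frac{nd\,2^t}{\Delta}\right),
\]
and substituting our bound on $\Theta$ yields
\[
\sigma_0\cdot 2^{-t} \;=\; \frac{\Delta}{\Theta\cdot 2^t} \;=\; O\!\left(\frac{n\Delta}{d\cdot f(\xx^*)^2\, 2^t}\right).
\]
Plugging these two expressions into the sum gives exactly the runtime claimed in the lemma statement.

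The proof is essentially bookkeeping: the real content resides in Lemmas~\ref{lem:easyInit} and~\ref{lem:easyLipscitz}, whose bounds are what make this reduction yield an improved runtime over what one would get for a generic Lipschitz convex objective. The only subtle point I anticipate is ensuring that the generic parameter $\Delta$ is handled cleanly — it is not computed here but rather carried through as an assumed upper bound on the initial suboptimality, and downstream callers will eventually instantiate it (for instance, via the same Lipschitz bound $\|\xx_0 - \xx^*\|\le O(\sqrt{d/n}) f(\xx^*)$ composed with $\|\AA\|_{1\to 1}$) together with a constant-factor estimate of $f(\xx^*)$ obtained by the binary-search trick referenced in Section~\ref{subsec:sgd}.
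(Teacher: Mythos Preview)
Your proposal is correct and follows essentially the same approach as the paper: invoke Lemma~\ref{lem:minEqualSumLipsFunctions} for $G = O(\sqrt{nd})$, Lemma~\ref{lem:easyInit} for $\Theta = O(d/n)\,f(\xx^*)^2$, then substitute into $\beta_0 = \Delta/G^2$ and $\sigma_0 = \Delta/\Theta$ from Theorem~\ref{thm:reductionAlgo}. The paper's proof is just a terser version of exactly this computation.
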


\begin{proof}
	Lemma~\ref{lem:minEqualSumLipsFunctions} implies that we can apply \Cref{thm:reductionAlgo} where $G = O(\sqrt{nd})$, and Lemma~\ref{lem:easyInit} gives $\Theta = O(\frac{d}{n}f(x^*)^2)$.
	We then obtain $\beta_0 = O(\frac{\Delta}{nd})$ and $\sigma_0 = O(\frac{n\Delta}{d f(x^*)^2})$ and substitute these values in the running time of \Cref{thm:reductionAlgo}.
\end{proof}

\subsection{Applying Katyusha Accelerated SGD}\label{subsec:applykatyusha}

Now that we have shown how our initialization can be plugged into the smoothing construction of \cite{AllenZhuH16}, we simply need an efficient $\textsc{Time}_\calA(L,\sigma)$-\textit{minimizer} to obtain all the necessary pieces to prove our primary result.

\begin{theorem}[Corollary 3.8 in \cite{AllenZhu17}]\label{thm:katyushaRuntime}
	There is a routine that is a $\textsc{Time}_\calA(L,\sigma)$-\textit{minimizer} where $\textsc{Time}_\calA(L,\sigma) = d \cdot O(n + \sqrt{nL/\sigma})$.
\end{theorem}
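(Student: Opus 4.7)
The plan is to instantiate Katyusha, the accelerated variance-reduced method of \cite{AllenZhu17}, and show it achieves the required constant-factor reduction on the objective gap in $d\cdot O(n+\sqrt{nL/\sigma})$ time. The algorithm maintains three coupled iterates $x_k,y_k,z_k\in\R^d$ together with a periodically refreshed snapshot $\tilde x$. At each inner step I would define
\[
x_{k+1} \;=\; \tau_1 z_k + \tau_2 \tilde x + (1-\tau_1-\tau_2) y_k,
\]
use the SVRG-style variance-reduced estimator
\[
\tilde\nabla_{k+1} \;=\; \nabla f(\tilde x) + \nabla f_{i}(x_{k+1}) - \nabla f_{i}(\tilde x)
\quad\text{for $i$ sampled uniformly from $[n]$,}
\]
and then update $z_{k+1}$ by a mirror-descent step driven by $\tilde\nabla_{k+1}$ (with step size governed by $\sigma$) and $y_{k+1}$ by a proximal-gradient step. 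Snapshots are refreshed every $m=\Theta(n)$ inner iterations at a cost of one full gradient evaluation; each inner iteration uses only $O(1)$ stochastic gradients and $O(d)$ vector arithmetic, so one epoch costs $O(d(n+m))=O(dn)$ time.

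The core of the argument is a potential-function analysis. Define
\[
\Phi_k \;=\; \alpha_k\bigl(f(y_k)-f(x^*)\bigr) + \tfrac{\sigma}{2}\|z_k - x^*\|_2^2,
\]
with $\alpha_k$ growing geometrically at rate $1+\Theta(\sqrt{\sigma/(nL)})$. The standard "three-point" mirror-descent inequality on the $z$-step, combined with the smoothness-based progress inequality on the $y$-step, gives a per-iteration recursion of the form $\E[\Phi_{k+1}]\le (1-\Theta(\sqrt{\sigma/(nL)}))\Phi_k + \text{variance term}$. The Katyusha (``negative'') momentum term $\tau_2(\tilde x - x_{k+1})$ is precisely what allows the variance term to be bounded by a telescoping multiple of $f(\tilde x)-f(x^*)$, which in turn telescopes across one epoch of $m$ inner steps. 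Summing the recursion over an epoch and plugging in $\tau_1=\Theta(\sqrt{\sigma n/L})$, $\tau_2=\tfrac12$, step size $\eta=\Theta(1/(\tau_1 L))$, and $m=\Theta(n)$ yields
\[
\E\bigl[f(y_{K})-f(x^*)\bigr] \;\le\; (1-\Theta(\min\{1,\sqrt{\sigma n/L}\}))^{K/m}\bigl(f(x_0)-f(x^*)\bigr).
\]
Hence $K = O(n + \sqrt{nL/\sigma})$ stochastic gradient calls suffice to reduce the expected gap by a factor $4$; Markov's inequality then yields the ``$\textsc{Time}_\calA$-minimizer'' guarantee (with constant failure probability that can be boosted by repetition). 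Multiplying by the $O(d)$ per-iteration vector-arithmetic cost yields the claimed $d\cdot O(n+\sqrt{nL/\sigma})$ runtime.

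The main technical obstacle is the variance-term bookkeeping: one must show that the variance of $\tilde\nabla_{k+1}$, which depends on $\|x_{k+1}-\tilde x\|_2^2$, can be offset exactly by the progress terms coming from the Katyusha momentum and the smoothness inequality. A naive acceleration of SVRG fails here because accelerated schemes allow $x_{k+1}$ to drift far from the snapshot, blowing up the variance; the negative momentum $\tau_2\tilde x$ pulls $x_{k+1}$ back toward $\tilde x$ just enough that, when combined with convexity at $\tilde x$, the variance is dominated by $f(\tilde x)-f(x^*)$, which can be absorbed into the potential since $\tilde x$ is itself a past iterate. Getting the constants to line up so that the net drop per epoch is strictly positive and gives the $\sqrt{L/\sigma}$ acceleration (rather than the $L/\sigma$ rate of SVRG) is the delicate part of the argument.
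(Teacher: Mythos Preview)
The paper does not prove this statement at all: it is quoted verbatim as Corollary~3.8 of \cite{AllenZhu17} and used entirely as a black box inside the proof of \Cref{thm:applyKatyusha}. There is therefore no ``paper's own proof'' to compare your proposal against.

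Your sketch is a reasonable high-level outline of the Katyusha convergence analysis from \cite{AllenZhu17} (three coupled iterates, SVRG snapshot, negative momentum with $\tau_2=\tfrac12$, potential-function argument yielding the $O(n+\sqrt{nL/\sigma})$ iteration bound), and nothing in it is obviously wrong as a summary. But for the purposes of this paper, no such argument is needed or expected: the authors simply invoke the cited result and plug $\textsc{Time}_{\calA}(L,\sigma)=d\cdot O(n+\sqrt{nL/\sigma})$ into \Cref{lem:applyReduction}.
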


We can then precondition the matrix to give our strong bounds on the initialization distance of $\xx_0$ from the optimal $\xx^*$, which allows us to apply the smoothing reduction and Katyusha accelerated gradient descent more efficiently. 

\applyKatyusha*
\begin{proof}	
	Once again, by preconditioning with Lemma~\ref{lem:lewisAndRotate} and error $O(\epsilon)$ we obtain a matrix $\tilde{\AA}\UU \in \R^{N \times d}$ and a vector $\tilde{\bb} \in \R^n$ in time $O(nnz(\AA)\log{n} + d^{\omega-1}\min\{d\epsilon^{-2}\log{n},n\})$.
	We utilize the routine in \Cref{thm:katyushaRuntime} as the $\textsc{Time}_\calA(L,\sigma)$-\textit{minimizer} for Lemma~\ref{lem:applyReduction}, and plug the time bounds in to achieve an absolute error of $O(\delta)$ in the preconditioned objective function with the following running time:
	\begin{align*} 
	d \cdot O\left( \sum_{t=0}^{T-1}N + \sqrt{N\cdot\left(\frac{Nd 2^{t}}{\Delta}\right)\left(\frac{d\cdot f(x^*)^2 2^{t}}{N\Delta}\right)}\right) 
	&= d \cdot O\left( N\log{\frac{\Delta}{\delta}} +  \frac{d\sqrt{N}\cdot f(x^*)}{\Delta}\sum_{t=0}^{T-1} 2^{t}\right) \\ &= d \cdot O\left( N\log{\frac{\Delta}{\delta}} + \frac{d\sqrt{N}\cdot f(x^*)}{\delta}\right) .
	\end{align*}

	To achieve our desired relative error of $\epsilon$ we need to set $\delta = O(\epsilon f(\xx^*))$. Technically, this means that the input to gradient descent will require at least a constant factor approximation to $f(\xx^*)$. We will show in \Cref{subsec:binarySearch} that we can assume that we have such an approximation at the cost of a factor of $\log{n}$ in the running time. We assume that $f(\xx^*) \geq n^{-c}$ for some fixed constant\footnote{Note that if $f(\xx^*) = 0$, then our initialization $\xx_0 = \AA^T \bb$ will be equal to $\xx^*$.} $c$ in order to upper bound $\log\frac{\Delta}{\delta}$, which gives a runtime of $O\left(d N\log (n\epsilon^{-1}) + \frac{d\sqrt{N}\cdot }{\epsilon}\right)$.
		
	Here, we used the fact that $N = O(d\epsilon^{-2}\log{n})$, but can also assume that computationally, $N \leq n$, as will be addressed in \Cref{subsec:simulatedSplitting}. This gives a runtime of 
	\[
	O\left(\min\{d^{2.5} \epsilon^{-2} \sqrt{\log {n}} , nd\log (n/\epsilon) + \sqrt{n}d^2\epsilon^{-1} \}\right),
	\] which, combined with our preconditioning runtime (where $\Upsilon$ is a lower order term if we assume the $\epsilon$ is at most polynomially small in $n$) and the factor $\log{n}$ overhead from estimating $f(\xx^*)$ which we address in \Cref{subsec:binarySearch}, gives the desired runtime. Furthermore, since the error in our preconditioning was $O(\epsilon)$, by Lemma~\ref{lem:lewisAndRotate} we have achieved a solution with $O(\epsilon)$ relative error in the original problem.
\end{proof}

\section{Preconditioning with Lewis Weights and Rotation}\label{sec:lewis}

In this section, we show how to precondition a given matrix $\AA \in \R^{n \times d}$ into a ``good" matrix, primarily
using techniques by \cite{cohenpeng}, and will ultimately prove Lemma~\ref{lem:lewisAndRotate}.
Recall that our overall goal was to efficiently transform $\AA$ into a matrix $\tilde{\AA}$ such that the $\ell_1$ norm is approximately maintained for all $\xx$, along with $\tilde{\AA}$ being isotropic and having all row norms approximately equal. 

Accordingly, our preconditioning will be done in the following two primary steps:
\begin{enumerate}
	\item We sample $N = O(d \epsilon^{-2} \log d )$ rows from $\AA$ according to Lewis weights \citep{cohenpeng}
	to construct a matrix $\tilde{\AA} \in \R^{N \times d}$. The guarantees of \cite{cohenpeng} ensure that
	for all $x \in \R^d$, $||\tilde{\AA} x||_1 \approx_{1+\epsilon} \norm{\AA x}_1$ with high probability. We then further show that this sampling scheme gives $\tau_i (\tilde{\AA}) = O(d/N)$ for all $1 \leq i \leq N$ with high probability.
	\item We then find an invertible matrix $\UU$ such that $\tilde{\AA}\UU$ still has the two necessary properties from Lewis weight sampling and is also isotropic.
\end{enumerate}

The matrix $\tilde{\AA}\UU$ then has all the prerequisite properties
to run our $\ell_1$ minimization algorithms, and it only becomes necessary to show that running an $\ell_1$-minimization routine on $\tilde{\AA}\UU$
will help us find an approximate solution to the original problem.

In \Cref{subsec:lewis}, we show that Lewis weight sampling gives a matrix with approximately equal leverage scores.
In \Cref{subsec:rotate}, we find the invertible matrix $\UU$ that makes $\tilde{\AA}\UU$ isotropic while preserving other properties.
In \Cref{subsec:translateToOriginal}, we show that an approximate solution with respect to the preconditioned matrix will give an approximate solution with respect to the original matrix.
Finally, we prove our primary preconditioning result, Lemma~\ref{lem:lewisAndRotate}, in \Cref{subsec:proof}.

Before we do this, the following facts are useful. 

\begin{fact}[Foster's theorem \citep{Foster53}]
	\label{fact:fosters}
	For a matrix $\AA \in \R^{n \times d}$,
	\[\sum_{i=1}^n \tau_i(\AA) = d.\]
\end{fact}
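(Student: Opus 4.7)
The plan is to prove Foster's theorem directly from the definition of leverage scores via a trace manipulation. Since each $\tau_i(\AA) = \AA_{i,:}(\AA^T\AA)^{-1}\AA_{i,:}^T$ is a scalar, I can rewrite it as its own trace and then use the cyclic property of the trace to move $\AA_{i,:}^T \AA_{i,:}$ to the right. After that, summing over $i$ collapses the $\AA_{i,:}^T \AA_{i,:}$ terms into $\AA^T \AA$, which cancels with its inverse.

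Concretely, I would first write
\[
\tau_i(\AA) = \mathrm{tr}\bigl(\AA_{i,:}(\AA^T\AA)^{-1}\AA_{i,:}^T\bigr) = \mathrm{tr}\bigl((\AA^T\AA)^{-1}\AA_{i,:}^T \AA_{i,:}\bigr),
\]
where the first equality uses that the trace of a scalar is itself and the second uses cyclicity of the trace. Summing over $i$ and pulling the sum inside the trace (by linearity),
\[
\sum_{i=1}^n \tau_i(\AA) = \mathrm{tr}\Bigl((\AA^T\AA)^{-1}\sum_{i=1}^n \AA_{i,:}^T \AA_{i,:}\Bigr) = \mathrm{tr}\bigl((\AA^T\AA)^{-1}\AA^T\AA\bigr) = \mathrm{tr}(\II_d) = d.
\]
The second equality uses the identity $\sum_{i=1}^n \AA_{i,:}^T \AA_{i,:} = \AA^T \AA$, which follows immediately from the definition of matrix multiplication, and the third uses the standing assumption that $\AA$ has full column rank, so $(\AA^T \AA)^{-1}$ is well-defined.

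There is no real obstacle here; the one thing worth double-checking is that the full-rank assumption on $\AA$ from the Preliminaries is genuinely needed so that $(\AA^T\AA)^{-1}$ exists in Definition~\ref{def:levScore}. In the rank-deficient case, the same argument goes through with the Moore--Penrose pseudoinverse and yields $\mathrm{rank}(\AA)$ in place of $d$, but for the purposes of this paper we only need the full-rank version. Thus the whole proof is essentially a two-line trace calculation.
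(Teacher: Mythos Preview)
Your proof is correct and is the standard trace argument for this identity. The paper does not actually supply a proof of this fact; it simply states it as Foster's theorem with a citation to \cite{Foster53}, so there is no paper proof to compare against.
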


\begin{fact}[Lemma 2 in \cite{CohenLMMPS15}]
	\label{fact:boundLev}
	Given a matrix $\AA$, for all rows $i$,
	\[\tau_i(\AA) = \min_{\AA^T\xx = \AA_{i,:}^T} \norme{\xx}^2.\]
\end{fact}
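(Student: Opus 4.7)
The plan is to reduce the constrained optimization $\min_{\AA^T\xx = \AA_{i,:}^T} \norme{\xx}^2$ to an explicit linear-algebraic expression, using the classical fact that the minimum-norm solution to a consistent underdetermined linear system lies in the row space of the coefficient matrix. I would proceed by orthogonal decomposition: the feasible set $\{\xx : \AA^T\xx = \AA_{i,:}^T\}$ is an affine translate of $\mathrm{null}(\AA^T) = \mathrm{col}(\AA)^{\perp}$. Writing any feasible $\xx = \xx_1 + \xx_2$ with $\xx_1 \in \mathrm{col}(\AA)$ and $\xx_2 \in \mathrm{col}(\AA)^{\perp}$, only the component $\xx_1$ is pinned down by the constraint, while shifts in $\xx_2$ remain feasible. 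Since $\norme{\xx}^2 = \norme{\xx_1}^2 + \norme{\xx_2}^2$ by the Pythagorean theorem, the minimizer $\xx^{\star}$ must have $\xx_2 = 0$, i.e., it lies in $\mathrm{col}(\AA)$.

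Next, I would parametrize $\xx^{\star} = \AA \yy$ for some $\yy \in \R^d$. The constraint $\AA^T \xx^{\star} = \AA_{i,:}^T$ then becomes $\AA^T \AA \yy = \AA_{i,:}^T$. Because the paper assumes throughout that $\AA$ has linearly independent columns, $\AA^T\AA$ is invertible, so $\yy = (\AA^T\AA)^{-1} \AA_{i,:}^T$ and hence $\xx^{\star} = \AA(\AA^T\AA)^{-1}\AA_{i,:}^T$.

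Finally, I would compute the squared norm directly:
\[ \norme{\xx^{\star}}^2 = \yy^T \AA^T \AA \yy = \AA_{i,:}(\AA^T\AA)^{-1}(\AA^T\AA)(\AA^T\AA)^{-1}\AA_{i,:}^T = \AA_{i,:}(\AA^T\AA)^{-1}\AA_{i,:}^T, \]
which matches $\tau_i(\AA)$ by Definition~\ref{def:levScore}. There is no real obstacle here: the full-column-rank hypothesis on $\AA$ is the only nontrivial input, and the conclusion is essentially a restatement of the Moore--Penrose pseudoinverse formula $(\AA^T)^{\dagger} = \AA(\AA^T\AA)^{-1}$ for the minimum-norm solution of an underdetermined linear system. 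A Lagrangian/KKT derivation (stationarity gives $\xx = \AA \lambda / 2$, substitution into the constraint determines $\lambda$) would yield the same closed form via a slightly different route.
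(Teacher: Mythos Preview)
Your proof is correct. The paper does not supply its own proof of this statement; it is quoted as a fact from \cite{CohenLMMPS15}, so there is nothing to compare against, but your orthogonal-decomposition argument is the standard derivation and matches what one finds in that reference.
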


\subsection{Lewis Weight Sampling gives Approximately Equal Leverage Scores}\label{subsec:lewis}


In this section, we prove that sampling according to Lewis weights gives a matrix with approximately equal leverage scores. 
This proof will largely rely on showing that, up to row rescaling, the sampled matrix $\tilde{\AA}$ is such that $\tilde{\AA}^T\tilde{\AA}$ is spectrally close to $\AA^T\AA$.
This proof will boil down to a standard application of matrix concentration bounds for sampling according to leverage scores.
Our primary lemma in this section will then mostly follow from the following lemma, which will be proven at the end of this section.

\begin{lemma}
	\label{lem:l2Sampling}
	Given a matrix $\AA$ that is sampled according to \Cref{thm:lewisWeights} with error $\epsilon$ and gives matrix $\tilde{\AA}$, then
	\[ \tilde{\AA}^T\tilde{\AA} \approx_{O(1)} \frac{1}{h(n,\epsilon)}\AA^T\LW^{-1}\AA \]
	with high probability.
	
\end{lemma}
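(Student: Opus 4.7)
The plan is a direct application of matrix Chernoff after a change of basis. From Theorem~\ref{thm:lewisWeights}, each of the $N$ rows of $\tilde{\AA}=\SS\AA$ is independently of the form $\frac{1}{\pp_{i_j}}\AA_{i_j,:}$, where the random index $i_j$ takes value $i$ with probability $\pp_i/N$. So the quantity of interest decomposes as
\[
\tilde{\AA}^T\tilde{\AA} \;=\; \sum_{j=1}^{N}\XX_j, \qquad \XX_j \defeq \frac{1}{\pp_{i_j}^2}\AA_{i_j,:}^T\AA_{i_j,:}.
\]

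First I would handle the expectation. A one-line calculation gives $E[\XX_j] = \frac{1}{N}\sum_i\frac{1}{\pp_i}\AA_{i,:}^T\AA_{i,:}$, so summing over $j$ and using $\pp_i \approx_{O(1)} \lw_i h(n,\epsilon)$ yields $E[\tilde{\AA}^T\tilde{\AA}]\approx_{O(1)} \frac{1}{h(n,\epsilon)}\AA^T\LW^{-1}\AA$. This identifies the correct centering; only concentration remains.

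Next I would set $\MM\defeq\AA^T\LW^{-1}\AA$ and work with the rescaled summands $\ZZ_j\defeq\MM^{-1/2}\XX_j\MM^{-1/2}$, each of which is PSD of rank one. The key step is to bound their spectral norm uniformly. Here I would invoke the defining identity of Lewis weights (Definition~\ref{def:lewis}), $\lw_i^2 = \AA_{i,:}(\AA^T\LW^{-1}\AA)^{-1}\AA_{i,:}^T = \AA_{i,:}\MM^{-1}\AA_{i,:}^T$, which gives
\[
\|\ZZ_j\|_2 \;=\; \frac{\AA_{i_j,:}\MM^{-1}\AA_{i_j,:}^T}{\pp_{i_j}^2} \;=\; \frac{\lw_{i_j}^2}{\pp_{i_j}^2} \;\le\; O\!\left(\frac{1}{h(n,\epsilon)^{2}}\right).
\]
Combined with $E[\sum_j\ZZ_j]\approx_{O(1)}\frac{1}{h(n,\epsilon)}\II$, a standard matrix Chernoff inequality (in the style of Tropp) gives deviation at most a constant factor of $\frac{1}{h(n,\epsilon)}\II$ with failure probability at most $d\cdot e^{-\Omega(h(n,\epsilon))} = n^{-\Omega(1)}$, since $h(n,\epsilon)\ge\Omega(\epsilon^{-2}\log n)$.

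Finally, conjugating back by $\MM^{1/2}$ converts the spectral bound on $\sum_j\ZZ_j$ into the claimed $\tilde{\AA}^T\tilde{\AA}\approx_{O(1)}\frac{1}{h(n,\epsilon)}\MM$. I do not expect any individual step to be a substantive obstacle --- once the Lewis-weight identity is used to bound $\AA_{i,:}\MM^{-1}\AA_{i,:}^T$ by $\lw_i^2$, this is essentially textbook leverage-score matrix Chernoff. The only point requiring care is tracking that all losses remain multiplicative constants: the $\approx_{O(1)}$ slack in $\pp_i$ enters $E[\ZZ_j]$ and $\|\ZZ_j\|_2$ only multiplicatively, so the final bound stays $\approx_{O(1)}$ rather than blowing up.
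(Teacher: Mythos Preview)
Your proposal is correct and follows essentially the same route as the paper: set up $\tilde{\AA}^T\tilde{\AA}$ as a sum of i.i.d.\ rank-one PSD terms, identify the expectation as $\approx_{O(1)}\frac{1}{h(n,\epsilon)}\AA^T\LW^{-1}\AA$, use the Lewis-weight identity $\lw_i^2=\AA_{i,:}(\AA^T\LW^{-1}\AA)^{-1}\AA_{i,:}^T$ to bound each summand, and apply matrix Chernoff. The only cosmetic difference is that you conjugate by $\MM^{-1/2}$ to reduce to the near-isotropic case before invoking Tropp-style Chernoff, whereas the paper applies the Harvey variant (Lemma~\ref{lem:matrixConcentration}) that directly compares $\YY_i\preceq R\cdot\ZZ$ without normalizing --- these are equivalent formulations, and your care about the $O(1)$ slack in $\pp_i$ propagating only multiplicatively is exactly the bookkeeping the paper does as well.
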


Using this, we can prove our key lemma.

\begin{lemma}
	\label{lem:lewisGivesEqualLevScores}
	Given a matrix $\AA \in \R^{n \times d}$ that is sampled according to \Cref{thm:lewisWeights} and gives matrix $\tilde{\AA}$, then for all rows $i$ of $\tilde{\AA}$,
	\[ \tau_i(\tilde{\AA}) \approx_{O(1)} \frac{d}{N}\]
	with high probability.
\end{lemma}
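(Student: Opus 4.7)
The plan is to directly compute $\tau_i(\tilde{\AA})$ using the definition and substitute in the guarantee from Lemma~\ref{lem:l2Sampling}. Let $\tilde{\AA}_{i,:}$ correspond to row $j$ of $\AA$, rescaled by $1/\pp_j$ (as described in Theorem~\ref{thm:lewisWeights}). Then by Definition~\ref{def:levScore},
\[
\tau_i(\tilde{\AA}) = \tilde{\AA}_{i,:}(\tilde{\AA}^T\tilde{\AA})^{-1}\tilde{\AA}_{i,:}^T = \frac{1}{\pp_j^2}\AA_{j,:}(\tilde{\AA}^T\tilde{\AA})^{-1}\AA_{j,:}^T.
\]

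Next I would invert the spectral approximation of Lemma~\ref{lem:l2Sampling}, which (with high probability) yields $(\tilde{\AA}^T\tilde{\AA})^{-1} \approx_{O(1)} h(n,\epsilon)\bigl(\AA^T\LW^{-1}\AA\bigr)^{-1}$. Plugging this in gives
\[
\tau_i(\tilde{\AA}) \approx_{O(1)} \frac{h(n,\epsilon)}{\pp_j^2}\,\AA_{j,:}\bigl(\AA^T\LW^{-1}\AA\bigr)^{-1}\AA_{j,:}^T \;=\; \frac{h(n,\epsilon)\,\lw_j^2}{\pp_j^2},
\]
where the last equality is just the defining identity for Lewis weights in Definition~\ref{def:lewis}. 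Since Theorem~\ref{thm:lewisWeights} guarantees $\pp_j \approx_{O(1)} \lw_j h(n,\epsilon)$, the ratio $\lw_j^2/\pp_j^2$ is $\approx_{O(1)} 1/h(n,\epsilon)^2$, so $\tau_i(\tilde{\AA}) \approx_{O(1)} 1/h(n,\epsilon)$.

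It then remains to identify $1/h(n,\epsilon)$ with $d/N$. Using that $\lw_j = \tau_j(\LW^{-1/2}\AA)$ (Definition~\ref{def:lewis}), Foster's theorem (Fact~\ref{fact:fosters}) applied to $\LW^{-1/2}\AA$ gives $\sum_j \lw_j = d$. Combining with $\pp_j \approx_{O(1)} \lw_j h(n,\epsilon)$, we get $N = \sum_j \pp_j \approx_{O(1)} d\cdot h(n,\epsilon)$, i.e.\ $1/h(n,\epsilon) \approx_{O(1)} d/N$, which finishes the proof.

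The main obstacle is really just Lemma~\ref{lem:l2Sampling}, which is where the actual concentration happens and gives the two-sided spectral approximation of $\tilde{\AA}^T\tilde{\AA}$; once that is in hand, the rest is algebraic manipulation using the definition of Lewis weights and Foster's identity. A minor nuance is that the bound must hold simultaneously for all sampled rows with high probability, but this is automatic because it follows from the single spectral approximation event in Lemma~\ref{lem:l2Sampling} together with the deterministic Lewis identity applied to each $j$.
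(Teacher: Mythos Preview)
Your proposal is correct and follows essentially the same approach as the paper's proof: apply Lemma~\ref{lem:l2Sampling} to replace $(\tilde{\AA}^T\tilde{\AA})^{-1}$ by $h(n,\epsilon)(\AA^T\LW^{-1}\AA)^{-1}$ up to constants, invoke the Lewis-weight identity $\lw_j^2 = \AA_{j,:}(\AA^T\LW^{-1}\AA)^{-1}\AA_{j,:}^T$ together with $\pp_j \approx_{O(1)} \lw_j h(n,\epsilon)$ to reduce to $1/h(n,\epsilon)$, and then use Foster's theorem on $\LW^{-1/2}\AA$ to identify $1/h(n,\epsilon) \approx_{O(1)} d/N$. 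The ordering of substitutions differs slightly, but the logical content is identical.
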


\begin{proof}
	Lemma~\ref{lem:l2Sampling} implies that 
	\[ \tau_i(\tilde{\AA}) = \tilde{\AA}_{i,:}\left(\tilde{\AA}^T\tilde{\AA}\right)^{-1}\tilde{\AA}_{i,:}^T \approx_{O(1)} h(n,\epsilon)\cdot \tilde{\AA}_{i,:}\left(\AA^T\LW^{-1}\AA\right)^{-1}\tilde{\AA}_{i,:}^T 
	\] 
	with high probability. 
	\Cref{thm:lewisWeights} implies that every row $i$ of $\tilde{\AA}$ is simply some row $j$ of $\AA$, scaled by $\frac{1}{\pp_j}$. 
	Therefore, for any row $i$ of $\tilde{\AA}$ we must have 
	\[\tau_i(\tilde{\AA}) \approx_{O(1)} h(n,\epsilon) \cdot \tilde{\AA}_{i,:}\left(\AA^T\LW^{-1}\AA\right)^{-1}\tilde{\AA}_{i,:}^T  =  h(n,\epsilon) \cdot \frac{\AA_{j,:}}{\pp_j}\left(\AA^T\LW^{-1}\AA\right)^{-1}\frac{\AA_{j,:}^T}{\pp_j}.
	\]
	From Definition~\ref{def:lewis} we have 
	\[\lw_j^2 = \AA_{j,:}\left(\AA^T\LW^{-1}\AA\right)^{-1}\AA_{j,:}^T\]
	which along with the fact that $\pp_j \approx_{O(1)} \lw_j \cdot h(n,\epsilon)$ reduces the leverage score to
	\[ \tau_i(\tilde{\AA}) \approx_{O(1)} h(n,\epsilon) \cdot \frac{\lw_j^2}{\pp_j^2} \approx_{O(1)}\frac{1}{h(n,\epsilon)}.
	\]
	Finally \Cref{fact:fosters} gives us that the sum of Lewis weights must be $d$ because they are leverage scores of $\LW^{-1/2}\AA$, which implies $\frac{1}{h(n,\epsilon)} \approx_{O(1)} \frac{d}{N}$ by our definition of $N = \sum_i \pp_i$.
\end{proof}

It now remains to prove Lemma~\ref{lem:l2Sampling}.
The proof follows similarly to the proof of Lemma 4 in \cite{CohenLMMPS15}, except that their leverage score sampling scheme draws each row without replacement, and we need a fixed number of sampled rows with replacement. Accordingly, we will also use the following matrix concentration result from \cite{Harvey12}, which is a variant of Corollary 5.2 in \cite{Tropp12}:

\begin{lemma} [\cite{Harvey12}]
	\label{lem:matrixConcentration}
	Let $\YY_1...\YY_k$ be independent random positive semidefinite matrices of size $d \times d$. Let $\YY= \sum_{i=1}^k \YY_i$, and let $\ZZ = \expec{ }{\YY}$. If $\YY_i \preceq R\cdot\ZZ$ then
	\[ \prob{}{\sum_{i=1}^k \YY_i \preceq \left( 1 - \epsilon \right) \ZZ} \leq de^{\frac{-\epsilon^2}{2R}} \]
	and 
	\[ \prob{}{\sum_{i=1}^k \YY_i \succeq \left( 1 + \epsilon \right) \ZZ} \leq de^{\frac{-\epsilon^2}{3R}}. \]
\end{lemma}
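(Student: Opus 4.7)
This is a matrix Chernoff bound of Ahlswede--Winter / Tropp type, and the natural route is the matrix Laplace transform method. First I would renormalize by setting $\XX_i \defeq \ZZ^{-1/2}\YY_i\ZZ^{-1/2}$ (passing to the range of $\ZZ$ if it is singular), so that $\expec{}{\sum_i \XX_i} = \II$ and the hypothesis $\YY_i \pleq R\,\ZZ$ becomes $0 \pleq \XX_i \pleq R\,\II$. By Loewner monotonicity, the two tail events in the statement are equivalent to the scalar eigenvalue events $\lambda_{\max}(\sum_i \XX_i) \geq 1+\epsilon$ and $\lambda_{\min}(\sum_i \XX_i) \leq 1-\epsilon$.

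For the upper tail I would execute three standard steps. Step 1 (spectral Markov): using $\lambda_{\max}(\MM) \leq \log\mathrm{tr}\,e^{\MM}$ together with Markov's inequality gives, for any $\theta > 0$,
\[
\prob{}{\lambda_{\max}(\textstyle\sum_i \XX_i) \geq 1+\epsilon} \;\leq\; e^{-\theta(1+\epsilon)}\,\expec{}{\mathrm{tr}\,\exp(\theta \textstyle\sum_i \XX_i)}.
\]
Step 2 (matrix CGF subadditivity): by inducting on $i$, freezing $\XX_1,\dots,\XX_{i-1}$, taking conditional expectation of the $i$-th summand, and applying Lieb's concavity theorem for $A\mapsto \mathrm{tr}\,\exp(H+\log A)$, I would obtain $\expec{}{\mathrm{tr}\,\exp(\sum_i \theta\XX_i)} \leq \mathrm{tr}\,\exp(\sum_i \log \expec{}{e^{\theta\XX_i}})$. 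Step 3 (per-summand MGF bound): the scalar inequality $e^{\theta x} \leq 1 + \tfrac{e^{\theta R}-1}{R}x$ on $[0,R]$ lifts (via the functional calculus on the spectrum of $\XX_i \in [0,R\II]$) to $\expec{}{e^{\theta\XX_i}} \pleq \II + \tfrac{e^{\theta R}-1}{R}\expec{}{\XX_i}$, and then $\log(\II+\MM) \pleq \MM$ yields $\log\expec{}{e^{\theta\XX_i}} \pleq \tfrac{e^{\theta R}-1}{R}\expec{}{\XX_i}$. Summing in $i$ and using $\sum_i \expec{}{\XX_i} = \II$ collapses the outer matrix exponential to $d\,\exp(\tfrac{e^{\theta R}-1}{R})$.

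Chaining the three steps gives the scalar bound $d\,\exp\bigl(\tfrac{e^{\theta R}-1}{R} - \theta(1+\epsilon)\bigr)$. I would then optimize: choosing $\theta = \tfrac{1}{R}\log(1+\epsilon)$ makes the exponent equal to $\tfrac{1}{R}\bigl(\epsilon - (1+\epsilon)\log(1+\epsilon)\bigr)$, and the elementary estimate $(1+\epsilon)\log(1+\epsilon)-\epsilon \geq \epsilon^2/3$ on $[0,1]$ delivers the claimed $d\,e^{-\epsilon^2/3R}$ bound. The lower tail is symmetric: use $-\theta$ in Step~1, the scalar inequality $e^{-\theta x} \leq 1 - \tfrac{1-e^{-\theta R}}{R}x$ on $[0,R]$ in Step~3, optimize at $\theta = -\tfrac{1}{R}\log(1-\epsilon)$, and apply the sharper estimate $(1-\epsilon)\log(1-\epsilon)+\epsilon \geq \epsilon^2/2$ to get $d\,e^{-\epsilon^2/2R}$.

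The main obstacle is Step~2: independence of the $\YY_i$ does \emph{not} yield a product formula for $\expec{}{e^{\sum_i \theta\XX_i}}$ because matrix exponentials do not factor over non-commuting summands, so the clean scalar Chernoff argument does not directly transplant. The fix is Lieb's concavity theorem, which is the only genuinely nontrivial input to the whole proof; once the resulting matrix-CGF subadditivity is in hand, Steps~1 and~3, the operator-calculus lifts, and the one-parameter optimization are all routine.
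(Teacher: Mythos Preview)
The paper does not prove this lemma: it is quoted verbatim from \cite{Harvey12} (noted there as a variant of Corollary~5.2 in \cite{Tropp12}) and used as a black box in the proof of Lemma~\ref{lem:l2Sampling}. So there is no ``paper's own proof'' to compare against.

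That said, your plan is correct and is precisely the standard Tropp argument underlying the cited result: normalize so the mean is $\II$, apply the matrix Laplace transform bound, use Lieb's concavity to get CGF subadditivity, bound each summand's MGF via the chord inequality on $[0,R]$ lifted by the functional calculus, and optimize in $\theta$. Two small remarks. First, when you pass from $\expec{}{e^{\theta\XX_i}} \pleq \II + c\,\expec{}{\XX_i}$ to a bound on $\log\expec{}{e^{\theta\XX_i}}$ you are implicitly using operator monotonicity of $\log$, not just $\log(\II+\MM)\pleq\MM$; it is worth saying so explicitly. Second, the scalar estimate $(1+\epsilon)\log(1+\epsilon)-\epsilon \geq \epsilon^2/3$ holds only on $[0,1]$ (it fails for large $\epsilon$), so your upper-tail argument as written needs $\epsilon\leq 1$; this is the regime the paper cares about, but you should state the restriction.
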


\begin{proof}[Proof of Lemma~\ref{lem:l2Sampling}]
	First, we define $\barA \defeq \LW^{-1/2}\AA$.
	Then, by Definition~\ref{def:lewis}, $\lw_i = \tau_i(\barA)$.
	Since $\LW$ is the diagonal matrix of Lewis weights $\lw$,
	each row of $\barA$ is simply $\barA_{i,:} = \lw_i^{-1/2} \AA_{i,:}$.
	
	By construction of our random $\tilde{\AA}$ in \Cref{thm:lewisWeights} we choose a row $j$ of $\AA$ with probability $\frac{\pp_j}{N}$ and scale by $\frac{1}{\pp_j}$. Therefore, if we let $\YY_i$ be the random variable
	\[
	\YY_i = 
	\begin{cases}
	\frac{\AA_{j,:}\AA_{j,:}^T}{\pp_j^2},& \text{with probability } \frac{\pp_j}{N} \text{ for each } j
	\end{cases}
	\]
then,
	\[
	\YY = \sum_{i=1}^N \YY_i = \sum_{i=1}^N \tilde{\AA}_{i,:}\tilde{\AA}_{i,:}^T = \tilde{\AA}^T\tilde{\AA}.
	\]
	
	Furthermore, we can substitute $\barA_{j,:}\sqrt{\lw_i}$ for $\AA_{j,:}$ and use the fact that $\pp_j \approx_{O(1)} \lw_j \cdot h(n,\epsilon)$ to obtain
	\[
		\frac{\AA_{j,:}\AA_{j,:}^T}{\pp_j^2} \approx_{O(1)} \frac{\barA_{j,:}\barA_{j,:}^T}{\pp_j \cdot h(n,\epsilon)}.
	\] 	
	As a result, we have
	\begin{align*}
	\ZZ = \expec{}{\sum_{i=1}^N \YY_i} &= \sum_{i=1}^N \expec{}{\YY_i} \\
	&\approx_{O(1)} \sum_{i=1}^N \sum_{j=1}^n \frac{\barA_{j,:}\barA_{j,:}^T}{N \cdot h(n,\epsilon)} \\
	&= \frac{1}{h(n,\epsilon)} \sum_{j=1}^n \barA_{j,:}\barA_{j,:}^T = \frac{1}{h(n,\epsilon)}\AA^T\LW^{-1}\AA.
	\end{align*}
	In order to apply Lemma~\ref{lem:matrixConcentration} we need to find $R$ such that $\YY_i \preceq R \cdot \ZZ$, which by our construction of $\YY_i$ requires 
	\[\frac{\AA_{j,:}\AA_{j,:}^T}{\pp_j^2} \preceq R\cdot \ZZ
	\]
	for all $j$. We use our constant factor approximations of $\ZZ$ and $\frac{\AA_{j,:}\AA_{j,:}^T}{\pp_j^2}$ to see that it also suffices to show
	\[
	\frac{\barA_{j,:}\barA_{j,:}^T}{\pp_j \cdot h(n,\epsilon)} \preceq \frac{R}{O(1)} \cdot \frac{1}{h(n,\epsilon)}\barA^T\barA.
	\]
	Given that $\tau_j(\barA) = \lw_j$ and $\pp_j \approx_{O(1)} \lw_j \cdot h(n,\epsilon)$, we have
	\[\frac{\barA_{j,:}\barA_{j,:}^T}{\pp_j \cdot h(n,\epsilon)} \preceq \frac{O(1)\barA_{j,:}\barA_{j,:}^T}{\tau_j(\barA) \cdot h(n,\epsilon)^2}
	\]
	which along with the fact (Equation 12 in the proof of Lemma 4 from \cite{CohenLMMPS15}) that 
	\[ \frac{\barA_{j,:}\barA_{j,:}^T}{\tau_j(\barA)} \preceq \barA^T\barA
	\]
	implies that
	\[\YY_i \preceq \frac{O(1)}{h(n,\epsilon)}\ZZ.
	\]
	By Theorem~\ref{thm:lewisWeights} we know that $h(n,\epsilon) \geq c \epsilon^{-2}\log{n}$ for some constant $c$. Plugging this in for $R$ in Lemma~\ref{lem:matrixConcentration} gives that 
	\[
	\YY \approx_{1 + \epsilon} \ZZ
	\]
	or, substituting our values of $\YY$ and $\ZZ$,
	\[ \tilde{\AA}^T\tilde{\AA} \approx_{O(1)} \frac{1}{h(n,\epsilon)}\AA^T\LW^{-1}\AA \]
	with probability at least $1 - 2de^{-\frac{\epsilon^{-2}}{3R}} \geq 1 - 2de^{-\frac{c\log{n}}{O(1)}} \geq 1 - 2dn^{-c/O(1)}$.
	This implies that the statement in the lemma is true with high probability for $c$ bigger than $O(1)$ (where the $O(1)$ comes from our $\pp_i$ approximation of $\lw_i\cdot h(n,\epsilon)$) and our assumption on $n \geq d$.
\end{proof}

\subsection{Rotating the Matrix to Achieve Isotropic Position}\label{subsec:rotate}

Now that we have sampled by Lewis weights and achieved all leverage scores to be approximately equal, we will show that we can efficiently rotate the matrix into isotropic position while still preserving the fact that all leverage scores are approximately equal.

\begin{lemma}
	\label{lem:rotate}
	If $\UU \in \R^{d \times d}$ is an invertible matrix and $\UU^T\UU = \left(\AA^T\AA\right)^{-1}$ then
	\begin{enumerate}
		\item $\left(\AA\UU\right)^T\left(\AA\UU\right) = \II$.
		\item For all rows $i$,
		$ \tau_i(\AA) = \tau_i(\AA\UU)$.
	\end{enumerate}
	
\end{lemma}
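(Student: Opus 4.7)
The plan is to verify both claims by direct algebraic manipulation, since the hypothesis on $\UU$ is designed precisely to make the computations collapse. The intended $\UU$ is (a symmetric square root like) $\UU = (\AA^T\AA)^{-1/2}$, for which $\UU^T\UU$ and $\UU\UU^T$ both equal $(\AA^T\AA)^{-1}$, so I will freely use either form.

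For part~1, I would start by expanding
\[
(\AA\UU)^T(\AA\UU) \;=\; \UU^T (\AA^T \AA)\, \UU.
\]
Inverting the hypothesis gives $\AA^T \AA = (\UU^T \UU)^{-1} = \UU^{-1}\UU^{-T}$, and substituting yields $\UU^T \UU^{-1}\UU^{-T}\UU = \II$ (equivalently, one can observe $\UU^T(\AA^T\AA)\UU = (\AA^T\AA)^{1/2}(\AA^T\AA)^{-1}(\AA^T\AA)^{1/2} = \II$ for the symmetric choice). This is the whole argument for part~1.

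For part~2, I would plug part~1 into Definition~\ref{def:levScore}. Namely,
\[
\tau_i(\AA\UU) \;=\; (\AA\UU)_{i,:}\bigl((\AA\UU)^T(\AA\UU)\bigr)^{-1}(\AA\UU)_{i,:}^T \;=\; (\AA_{i,:}\UU)(\AA_{i,:}\UU)^T \;=\; \AA_{i,:}\,\UU\UU^T\,\AA_{i,:}^T,
\]
where the middle equality uses part~1 to replace the inner matrix by $\II$. Then I would apply the hypothesis once more, now in the form $\UU\UU^T = (\AA^T\AA)^{-1}$, to conclude
\[
\tau_i(\AA\UU) \;=\; \AA_{i,:}(\AA^T\AA)^{-1}\AA_{i,:}^T \;=\; \tau_i(\AA),
\]
again by Definition~\ref{def:levScore}.

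There is no real obstacle: both parts reduce to one-line computations. The only point to handle carefully is the bookkeeping between $\UU^T\UU$ and $\UU\UU^T$, which is trivialized by taking $\UU$ to be the symmetric positive-definite square root of $(\AA^T\AA)^{-1}$; this is the natural choice anyway, and it is implicit in the statement. Conceptually, the lemma just records the standard fact that right-multiplying by an invertible matrix is a change of basis on the column space, which does not alter the orthogonal projector $\AA(\AA^T\AA)^{-1}\AA^T$ whose diagonal entries are the leverage scores.
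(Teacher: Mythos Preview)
Your argument for part~1 is fine and matches the paper. The issue is in part~2: after using part~1 you arrive at $\tau_i(\AA\UU)=\AA_{i,:}\,\UU\UU^T\,\AA_{i,:}^T$, and then you invoke $\UU\UU^T=(\AA^T\AA)^{-1}$. But the hypothesis only gives $\UU^T\UU=(\AA^T\AA)^{-1}$, and for a general invertible $\UU$ these two products differ. Concretely, any $\UU=Q(\AA^T\AA)^{-1/2}$ with $Q$ orthogonal satisfies the hypothesis, yet $\UU\UU^T=Q(\AA^T\AA)^{-1}Q^T$, which need not equal $(\AA^T\AA)^{-1}$. Your proposed workaround --- ``take $\UU$ symmetric'' --- is not implicit in the statement; the lemma is asserted for every invertible $\UU$ with $\UU^T\UU=(\AA^T\AA)^{-1}$, and indeed the companion routine in Lemma~\ref{lem:rotateRoutine} produces $\UU$ via a QR-type factorization, not by taking a symmetric square root.

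The fix is immediate and is exactly what the paper does: do not pass through part~1 at all. Expand the inner inverse directly as
\[
\bigl((\AA\UU)^T(\AA\UU)\bigr)^{-1}=\bigl(\UU^T\AA^T\AA\,\UU\bigr)^{-1}=\UU^{-1}(\AA^T\AA)^{-1}(\UU^T)^{-1},
\]
so that
\[
\tau_i(\AA\UU)=\AA_{i,:}\UU\cdot \UU^{-1}(\AA^T\AA)^{-1}(\UU^T)^{-1}\cdot \UU^T\AA_{i,:}^T=\AA_{i,:}(\AA^T\AA)^{-1}\AA_{i,:}^T=\tau_i(\AA).
\]
This uses only invertibility of $\UU$ and avoids the $\UU\UU^T$ vs.\ $\UU^T\UU$ confusion entirely; it is also the cleaner way to see your closing remark that the projector $\AA(\AA^T\AA)^{-1}\AA^T$ is unchanged under right-multiplication by any invertible matrix.
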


\begin{proof}
	For the first condition, we see that 
	\[\UU^T\AA^T\AA\UU = \II \iff \AA^T\AA = (\UU^T)^{-1}\UU^{-1} \iff (\AA^T\AA)^{-1} = \UU^T\UU.\]
	
	For the second condition, the $i$th row of $\AA\UU$ will be $\AA_{i,:}\UU$,
	which by the definition of leverage scores then gives,
	\begin{align*}
	\tau_i(\AA\UU) & = \AA_{i,:}\UU\left((\AA\UU)^T(\AA\UU)\right)^{-1}\left(\AA_{i,:}\UU\right)^T \\
	& = \AA_{i,:}\UU \UU^{-1}\left(\AA^T\AA\right)^{-1}(\UU^T)^{-1}\UU^T\AA_{i,:}^T \\
	& = \AA_{i,:}\left(\AA^T \AA\right)^{-1}\AA_{i,:}^T \\
	& = \tau_i(\AA). 
	\end{align*}
\end{proof}

It is clear then that we want to rotate our matrix by $\UU$ as above, so it only remains to efficiently compute such a $\UU$.

\begin{lemma}
	\label{lem:rotateRoutine}
	Given a full rank matrix $\tilde{\AA} \in \R^{N \times d}$,
	there is a routine $\textsc{Rotate}$ that can find an invertible $\UU$
	such that $\UU^T\UU = \left(\tilde{\AA}^T\tilde{\AA}\right)^{-1}$
	in time $O(Nd^{\omega-1} + d^{\omega})$.
\end{lemma}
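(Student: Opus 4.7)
The plan is to reduce the problem to standard dense linear algebra on a $d \times d$ matrix, so that the only $N$-dependent cost is in forming $\tilde{\AA}^T \tilde{\AA}$ itself. Since $\tilde{\AA}$ has full column rank, $\tilde{\AA}^T \tilde{\AA}$ is symmetric positive definite, and hence admits a Cholesky factorization, which is exactly the right tool to produce a matrix whose transpose--product reconstructs a prescribed SPD matrix.

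First I would compute $\MM \defeq \tilde{\AA}^T \tilde{\AA} \in \R^{d \times d}$. Using fast matrix multiplication on $d \times d$ blocks of $\tilde{\AA}$ (treating $\tilde{\AA}^T$ as $d \times N$ and $\tilde{\AA}$ as $N \times d$, and partitioning the inner dimension into $\lceil N/d \rceil$ blocks of size $d$), this takes $O((N/d) \cdot d^{\omega}) = O(N d^{\omega-1})$ time. Next, compute the Cholesky factorization $\MM = \LL \LL^T$ with $\LL$ lower triangular; this is a standard $O(d^\omega)$ procedure (and $\MM$ is SPD by the full-rank assumption on $\tilde{\AA}$). Finally, invert the triangular matrix $\LL$ to obtain $\LL^{-1}$, another $O(d^\omega)$ operation, and set $\UU \defeq \LL^{-1}$.

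To verify correctness, observe that
\begin{equation*}
\UU^T \UU \;=\; (\LL^{-1})^T \LL^{-1} \;=\; \LL^{-T} \LL^{-1} \;=\; (\LL \LL^T)^{-1} \;=\; \MM^{-1} \;=\; \bigl(\tilde{\AA}^T \tilde{\AA}\bigr)^{-1},
\end{equation*}
as required, and $\UU$ is invertible because $\LL$ is. Summing the three stages gives a total running time of $O(N d^{\omega-1} + d^\omega)$.

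There is no real obstacle here: the only subtlety is that we must avoid doing any $N$-dependent work beyond the single matrix product $\tilde{\AA}^T \tilde{\AA}$, which is why we work through $\MM$ rather than, say, performing a QR decomposition of $\tilde{\AA}$ directly (which would also work but requires a slightly more careful accounting). All subsequent factorization and inversion steps are on $d \times d$ matrices and thus contribute only the additive $O(d^\omega)$ term.
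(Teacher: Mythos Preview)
Your proof is correct and follows the same overall structure as the paper's: form $\tilde{\AA}^T\tilde{\AA}$ in $O(Nd^{\omega-1})$ time, then do $O(d^\omega)$ dense linear algebra on the resulting $d\times d$ SPD matrix. The only difference is in the choice of factorization: you Cholesky-factor $\MM=\LL\LL^T$ and set $\UU=\LL^{-1}$, whereas the paper first inverts $\tilde{\AA}^T\tilde{\AA}$ and then extracts $\UU$ via a QR-type decomposition (with a footnote pointing to computing the symmetric square root $(\tilde{\AA}^T\tilde{\AA})^{-1/2}$ through a Schur decomposition in $O(d^\omega)$ time). Your Cholesky route is arguably cleaner and more direct---it avoids the separate inversion step and the verification $\UU^T\UU=(\LL\LL^T)^{-1}$ is immediate---while the paper's square-root choice yields a symmetric $\UU$, which is not needed here. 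Both arrive at the same running time and the same conclusion.
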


\begin{proof}
	Computing $\tilde{\AA}^T \tilde{\AA}$ can be done in $O(Nd^{\omega-1})$ time using fast matrix multiplication. Inverting $\tilde{\AA}^T\tilde{\AA}$, a $d \times d$ matrix that must have an inverse because $\tilde{\AA}$ is full rank, will require $O(d^{\omega})$ time.
	Finally, we perform a QR-decomposition of $\left(\tilde{\AA}^T\tilde{\AA}\right)^{-1}$ in $O \left(d^{\omega}\right)$ time to obtain our square invertible matrix $\UU$.\footnote{For an invertible matrix $\MM \in \R^{d \times d}$, it is easy to see that $\MM(\MM^T \MM)^{-1/2}$ is an orthonormal basis for $\MM$. We can compute $(\MM^T \MM)^{-1}$ using Schur decomposition in $O(d^{\omega})$ time, and by careful analysis of that algorithm, we can also compute $(\MM^T \MM)^{-1/2}$ in the same amount of time.}
\end{proof}

Lastly, we want to ensure that by rotating our matrix, we can still use an approximate solution to the rotated matrix to obtain an approximate solution of the original matrix.

\begin{lemma}\label{lem:approxStillGoodWithRotation}
	Given a matrix-vector pair $\AA \in \R^{n \times d}, \bb \in \R^{n}$,
	another matrix-vector pair $\tilde{\AA} \in \R^{N \times d}, \tilde{\bb} \in \R^{N}$,
	and an invertible matrix $\UU \in \R^{d \times d}$,
	\[\norm{[\AA,\bb]\xx}_1 \approx_{1+\epsilon} \|[\tilde{\AA},\tilde{\bb}]\xx\|_1 \forall \xx \in \R^{d+1} \iff  \norm{[\AA\UU,\bb]\yy}_1 \approx_{1+\epsilon} \|[\tilde{\AA}\UU,\tilde{\bb}]\yy\|_1 \forall \yy \in \R^{d+1}.
	\]
\end{lemma}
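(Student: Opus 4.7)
The plan is to establish the equivalence via a single change-of-variables argument, exploiting the fact that $\UU$ only acts on the first $d$ coordinates of the vectors being multiplied into $[\AA,\bb]$ and $[\tilde{\AA},\tilde{\bb}]$. Concretely, I would split any vector in $\R^{d+1}$ as a pair $(\zz, \zeta)$ with $\zz \in \R^d$ and $\zeta \in \R$, observe that $[\AA,\bb](\zz,\zeta) = \AA\zz + \zeta\bb$, and likewise $[\AA\UU,\bb](\zz,\zeta) = \AA\UU\zz + \zeta\bb$. So $\UU$ only mixes the first $d$ coordinates, leaving the coefficient of $\bb$ (resp. $\tilde{\bb}$) untouched.

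The central step is to define the linear bijection $\phi:\R^{d+1}\to\R^{d+1}$ by $\phi(\yy',y_{d+1}) = (\UU\yy',y_{d+1})$, which is invertible because $\UU$ is, with inverse $\phi^{-1}(\xx',x_{d+1}) = (\UU^{-1}\xx',x_{d+1})$. By construction, for every $\yy \in \R^{d+1}$,
\begin{align*}
[\AA,\bb]\,\phi(\yy) &= \AA\UU\yy' + y_{d+1}\bb = [\AA\UU,\bb]\yy,\\
[\tilde{\AA},\tilde{\bb}]\,\phi(\yy) &= \tilde{\AA}\UU\yy' + y_{d+1}\tilde{\bb} = [\tilde{\AA}\UU,\tilde{\bb}]\yy.
\end{align*}
Hence the two quadruples of norms are pairwise equal under $\xx \leftrightarrow \phi(\yy)$.

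For the forward implication, assume $\norm{[\AA,\bb]\xx}_1 \approx_{1+\epsilon} \|[\tilde{\AA},\tilde{\bb}]\xx\|_1$ for all $\xx$. Given any $\yy$, apply the assumption at $\xx = \phi(\yy)$ and substitute the two identities above to obtain $\norm{[\AA\UU,\bb]\yy}_1 \approx_{1+\epsilon} \|[\tilde{\AA}\UU,\tilde{\bb}]\yy\|_1$. The reverse implication is identical, applying the hypothesis at $\yy = \phi^{-1}(\xx)$ and using that $\phi$ is a bijection so that $\yy$ ranges over all of $\R^{d+1}$ as $\xx$ does.

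There is essentially no hard step: the proof is a one-line change of variables, and the only point worth being careful about is making sure the bijection $\phi$ acts only on the first $d$ coordinates so that the $\bb$ and $\tilde{\bb}$ columns are preserved exactly. No quantitative analysis of $\epsilon$ is required since the two norms are literally equal under the substitution, so the approximation factor $1+\epsilon$ transfers without loss.
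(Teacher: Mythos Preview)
Your proposal is correct and matches the paper's proof essentially verbatim: the paper also defines the bijection $\yy_{[1,d]} = \UU^{-1}\xx_{[1,d]}$, $\yy_{d+1} = \xx_{d+1}$ (your $\phi^{-1}$) and observes that the two sides are identical under this substitution. There is nothing to add.
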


\begin{proof}
	This follows immediately from the fact that for any $\xx$ satisfying the LHS,
	there exists a $\yy$ satisfying the RHS, and vice versa.
	Specifically $\yy_{[1,d]} = \UU^{-1}\xx_{[1,d]}$ and $\yy_{d+1} = \xx_{d+1}$, 
	and equivalently $\UU\yy_{[1,d]} = \xx_{[1,d]}$ and $\yy_{d+1} = \xx_{d+1}$.	
\end{proof}

\subsection{Translating between Preconditioned and Original Matrix Solutions}\label{subsec:translateToOriginal}
Our preconditioning combination of Lewis weights and rotating the matrix gives our desired conditions, specifically an IRB matrix, but it remains to be seen that we can take a solution to this preconditioned matrix and translate it back into an approximate solution of the original matrix. In the following lemma we will show that this is in fact true.

\begin{lemma}\label{lem:objectiveApproxAfterLewis}
	Given a matrix-vector pair $\AA \in \R^{n \times d}, \bb \in \R^{n}$,
	another matrix-vector pair $\tilde{\AA} \in \R^{N \times d}, \tilde{\bb} \in \R^{N}$,
	and an invertible matrix $\UU \in \R^{d \times d}$;
	if
	\[ \big\|[\tilde{\AA}\;\tilde{\bb}]\yy\big\|_1 \approx_{1+\epsilon}\big\|[\AA\;\bb]\yy\big\|_1\]
	for all $\yy \in \R^{d+1}$, and
	if $\tilde{\xx}_{\UU}^*$ minimizes $\|\tilde{\AA}\UU\xx - \tilde{\bb}\|_1$, then for any $\tilde{\xx}$ such that 
	\[ \|\tilde{\AA}\UU\tilde{\xx} - \tilde{\bb}\|_1\leq (1 + \delta) \|\tilde{\AA}\UU\tilde{\xx}_{\UU}^* - \tilde{\bb}\|_1 \]
	we must have
	\[
	\|\AA(\UU\tilde{\xx}) - \bb\|_1 \leq (1 + \epsilon)^2(1 + \delta)\|\AA \xx^* - \bb \|_1 \]
	with high probability.	
\end{lemma}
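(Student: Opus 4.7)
The plan is to chain three inequalities together: one application of the $\ell_1$-embedding at the candidate solution $\UU\tilde{\xx}$, the hypothesized $(1+\delta)$-near-optimality in the rotated sampled problem, and then a second application of the $\ell_1$-embedding at the true optimum $\xx^*$. The only subtlety is relating the optimum of the rotated sampled problem to the value attained at $\xx^*$, which is easy once one observes that invertibility of $\UU$ makes the rotation irrelevant to the optimal value.

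First, I would specialize the $\ell_1$-embedding hypothesis from $\R^{d+1}$ to ``residual vectors'' by plugging in $\yy = (\vv;-1)$ for any $\vv \in \R^d$, yielding the pointwise statement
\[
\|\tilde{\AA}\vv - \tilde{\bb}\|_1 \approx_{1+\epsilon} \|\AA\vv - \bb\|_1 \quad \text{for all } \vv \in \R^d.
\]
Applied at $\vv = \UU\tilde{\xx}$, this gives $\|\AA\UU\tilde{\xx} - \bb\|_1 \le (1+\epsilon)\|\tilde{\AA}\UU\tilde{\xx} - \tilde{\bb}\|_1$. Next, the hypothesis that $\tilde{\xx}$ is $(1+\delta)$-optimal for the rotated sampled problem gives
\[
\|\tilde{\AA}\UU\tilde{\xx} - \tilde{\bb}\|_1 \le (1+\delta)\|\tilde{\AA}\UU\tilde{\xx}^*_\UU - \tilde{\bb}\|_1.
\]

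The key observation is that since $\UU$ is invertible, the change of variables $\zz = \UU\xx$ is a bijection of $\R^d$, so
\[
\min_{\xx \in \R^d} \|\tilde{\AA}\UU\xx - \tilde{\bb}\|_1 = \min_{\zz \in \R^d} \|\tilde{\AA}\zz - \tilde{\bb}\|_1.
\]
In particular, the value $\|\tilde{\AA}\UU\tilde{\xx}^*_\UU - \tilde{\bb}\|_1$ is no larger than $\|\tilde{\AA}\xx^* - \tilde{\bb}\|_1$, where $\xx^*$ is the original $\ell_1$ minimizer. A second invocation of the embedding (at $\vv = \xx^*$) then gives $\|\tilde{\AA}\xx^* - \tilde{\bb}\|_1 \le (1+\epsilon)\|\AA\xx^* - \bb\|_1$.

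Finally, I would chain the three inequalities to obtain
\[
\|\AA\UU\tilde{\xx} - \bb\|_1 \le (1+\epsilon)(1+\delta)\|\tilde{\AA}\UU\tilde{\xx}^*_\UU - \tilde{\bb}\|_1 \le (1+\epsilon)(1+\delta)\|\tilde{\AA}\xx^* - \tilde{\bb}\|_1 \le (1+\epsilon)^2(1+\delta)\|\AA\xx^* - \bb\|_1,
\]
which is exactly the claim. There is no real obstacle here; the only thing to be careful about is to not confuse the ``rotated'' and ``unrotated'' viewpoints, and to remember that the $\ell_1$-embedding guarantee is a pointwise statement whose application is legitimate at \emph{any} fixed vector (in particular at $\UU\tilde{\xx}$ and at $\xx^*$), so no union bound over $\xx$ is needed and the high-probability qualifier is inherited directly from the hypothesis.
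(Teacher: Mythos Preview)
Your proof is correct and follows essentially the same three-step chain as the paper: apply the embedding at $\UU\tilde{\xx}$, use the $(1+\delta)$-optimality hypothesis together with invertibility of $\UU$ to pass to $\xx^*$, and apply the embedding again at $\xx^*$. The paper additionally invokes Lemma~\ref{lem:approxStillGoodWithRotation} to state the rotated form of the embedding, but as your write-up shows this detour is not actually needed---the unrotated embedding applied at $\vv=\UU\tilde{\xx}$ and $\vv=\xx^*$ suffices.
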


\begin{proof}
	By assumption we have 
	\[ \big\|[\tilde{\AA}\;\tilde{\bb}]\yy \big\|_1 \approx_{1 + \epsilon}\big\|[\AA\;\bb]\yy\big\|_1\]
	for all $\yy \in \R^{d+1}$, and we can then use Lemma~\ref{lem:approxStillGoodWithRotation} to obtain 
	\[ \big\|[\tilde{\AA}\UU\;\tilde{\bb}]\yy\big\|_1 \approx_{1 + \epsilon}\big\|[\AA\UU\;\bb]\yy \big\|_1\]
	for all $\yy \in \R^{d+1}$. 
	By fixing $\yy$ to be $\begin{pmatrix}
	\xx \\ -1
	\end{pmatrix}$, we get
	\begin{align}\label{eqn:approx1norm}
	& \|\tilde{\AA}\xx - \tilde{\bb}\|_1 \approx_{1+\epsilon}\|\AA\xx - \bb\|_1 &  \forall~ \xx \in \R^{d}, \\
	\label{eqn:approx1normrotated} & \|\tilde{\AA}\UU\xx - \tilde{\bb}\|_1 \approx_{1+\epsilon}\|\AA\UU\xx - \bb\|_1 &  \forall~ \xx \in \R^{d}.
	\end{align}
	
	\Cref{eqn:approx1norm} gives
	\[ \norm{{\AA}\UU\tilde{\xx} - {\bb}}_1\leq (1 + \epsilon)\|\tilde{\AA}\UU\tilde{\xx} - \tilde{\bb}\|_1.  \]
	Using our initial assumption and defining $\tilde{\xx}^* \defeq \UU\tilde{\xx}_{\UU}^*$ then gives us
	\[ \norm{{\AA}\UU\tilde{\xx} - {\bb}}_1\leq (1 + \epsilon)(1 + \delta)\|\tilde{\AA}\tilde{\xx}^* - \tilde{\bb}\|_1.  \]
	Notice that if $\tilde{\xx}_{\UU}^*$ minimizes $\|\tilde{\AA}\UU\xx - \tilde{\bb}\|_1$,
	then $\tilde{\xx}^*$ must minimize $\|\tilde{\AA}\xx - \tilde{\bb}\|_1$ because $\UU$ is invertible.
	Therefore, $\|\tilde{\AA}\tilde{\xx}^* - \tilde{\bb}\|_1 \leq \|\tilde{\AA}\xx^* - \tilde{\bb}\|_1$ and we have 
	\[ \|{\AA}\UU\tilde{\xx} - {\bb}\|_1\leq (1 + \epsilon)(1 + \delta)\|\tilde{\AA}{\xx}^* - \tilde{\bb}\|_1.  \]
	Finally, applying \Cref{eqn:approx1normrotated} gives
	\[ \|{\AA}\UU\tilde{\xx} - {\bb}\|_1\leq (1 + \epsilon)^2(1 + \delta)\|{\AA}{\xx}^* - {\bb}\|_1.
	\]
\end{proof}

\subsection{Proof of Lemma~\ref{lem:lewisAndRotate}}\label{subsec:proof}

We now have all the necessary pieces to prove our primary preconditioning lemma, which we will now restate and prove.

\lewisAndRotate*
\begin{proof}
	From Theorem~\ref{thm:lewisWeights} we have that
	\[ \big\|[\tilde{\AA}\;\tilde{\bb}]\yy\big\|_1 \approx_{1+\epsilon}\big\|[\AA\;\bb]\yy\big\|_1\]
	for all $\yy \in \R^{d+1}$ with high probability. Lemma~\ref{lem:objectiveApproxAfterLewis} then gives
	\[
	\norm{\AA(\UU\tilde{\xx}) - \bb}_1 \leq (1 + \epsilon)^2(1 + \delta)\norm{\AA \xx^* - \bb }_1 \]
	by our assumption on $\tilde{x}$.
	
	Lemma~\ref{lem:l2Sampling} and the assumption that $\AA$ is full rank imply that $\tilde{\AA}$ is full rank with high probability.
	Our use of $\textsc{Rotate}$ to generate $\UU$, such that $\UU^T\UU = (\tilde{\AA}^T\tilde{\AA})^{-1}$, along with Lemma~\ref{lem:rotate}, gives $(\tilde{\AA}\UU)^T\tilde{\AA}\UU = \II$ and also that $\tau_i(\tilde{\AA}\UU) = \tau_i(\tilde{\AA})$ for all $i$.
	\Cref{fact:boundLev} gives $\tau_i(\tilde{\AA}) \leq \tau_i([\tilde{\AA}\;\tilde{\bb}])$, which along with Lemma~\ref{lem:lewisGivesEqualLevScores}, implies $\tau_i(\tilde{\AA}\UU) \leq O(d/N)$ for all $i$.
	Finally, by Definition~\ref{def:levScore} and the fact that $(\tilde{\AA}\UU)^T\tilde{\AA}\UU = \II$, we then have 
	\[\tau_i(\tilde{\AA}\UU) = \bigg\|\left(\tilde{\AA}\UU\right)_{i,:}\bigg\|_2.
	\]
	
	The sampling of $\AA$ is done according to the technique by \cite{cohenpeng}, which requires $O(nnz(\AA)\log{n} + d^{\omega})$ time to obtain the sampling probabilities. Then the actual sampling requires $O(\min\{d\epsilon^{-2}\log{n}, (d\epsilon^{-2}\log{n})^{1/2 + o(1)} + n\log^2{n}\})$-time according to Corollary~\ref{cor:sampleEfficiently} shown in \Cref{subsec:simulatedSplitting}. Computing the invertible matrix $\UU$ for input $\tilde{\AA}$ takes $O(Nd^{\omega-1} + d^{\omega})$ time from Lemma~\ref{lem:rotateRoutine}, and the number of rows of $\tilde{\AA}$ is $N = O(d\epsilon^{-2}\log{n})$.
	Finally, Lemma~\ref{lem:duplicateRowsForAtransposeA} and Corollary~\ref{cor:duplicateRowsRotAndInit} in \Cref{subsec:simulatedSplitting} show that this computation time can also be bounded with $N \leq n$, which then gives our desired runtime.
\end{proof}

\section{Proofs from Section 4}\label{sec:sec4proofs}

In this section we provide the omitted proofs from \Cref{sec:sparsitypreserve}. 
\subsection{Proof of Lemma~\ref{lem:uniformSampling}}
\label{subsec:unifSample}

In this section we reduce the number of rows in $\AA$ by uniform sampling while still preserving certain guarantees. Note that we will ultimately sample from $[\AA\;\bb]$, but for simplicity in notation, we will just use $\AA$ here.


To prove Lemma~\ref{lem:uniformSampling}, we need the following lemma, which states the key fact that the conditions on $\AA$ ensure approximately uniform Lewis weights.

\begin{lemma}[Almost-uniform leverage scores imply almost-uniform Lewis weights]\label{lem:almostUniform}
	Consider a matrix $\AA \in \R^{n \times d}$ such that $\AA^T\AA \approx_{O(1)} \II$ and $\norme{\AA_{i,:}}^2 \approx_{O(1)} d/n$.
	Let $\lw$ denote the $\ell_1$ Lewis weights for $\AA$.
	Then for each row $i$, we have $\lw_i \approx_{O(1)} d/n$.
\end{lemma}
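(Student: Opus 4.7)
The plan is to exploit the defining equation
\[
\lw_i^2 = \AA_{i,:}(\AA^T \LW^{-1}\AA)^{-1}\AA_{i,:}^T
\]
from \Cref{def:lewis} and derive two-sided bounds on $\lw_i$ purely from bounds on the extremes $\max_j \lw_j$ and $\min_j \lw_j$, in a self-bounding manner that forces both extremes to be $\Theta(d/n)$.

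Step one is to sandwich $\MM \defeq \AA^T\LW^{-1}\AA$ spectrally. Since $\LW^{-1}$ is a positive diagonal matrix with entries $1/\lw_j$, we have $\LW^{-1} \succeq (\max_j \lw_j)^{-1}\II$ and $\LW^{-1} \preceq (\min_j \lw_j)^{-1}\II$, so conjugating by $\AA$ and combining with the hypothesis $\AA^T\AA \approx_{O(1)} \II$ yields
\[
\Omega(1)\,(\max_j \lw_j)^{-1}\,\II \preceq \MM \preceq O(1)\,(\min_j \lw_j)^{-1}\,\II.
\]
Inverting this sandwich gives $\Omega(\min_j \lw_j)\,\II \preceq \MM^{-1} \preceq O(\max_j \lw_j)\,\II$.

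Step two is to substitute these spectral bounds on $\MM^{-1}$ into the Lewis-weight equation above, together with the row-norm hypothesis $\|\AA_{i,:}\|_2^2 \approx_{O(1)} d/n$, to obtain
\[
\Omega(\min_j \lw_j)\,(d/n) \;\le\; \lw_i^2 \;\le\; O(\max_j \lw_j)\,(d/n)
\]
for every $i$. Taking $i$ to be the maximizing index on the right gives $(\max_i \lw_i)^2 \le O(\max_j \lw_j)(d/n)$, i.e., $\max_i \lw_i \le O(d/n)$; symmetrically, taking $i$ to be the minimizing index on the left gives $(\min_i \lw_i)^2 \ge \Omega(\min_j \lw_j)(d/n)$, i.e., $\min_i \lw_i \ge \Omega(d/n)$. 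Combining the two bounds yields $\lw_i \approx_{O(1)} d/n$ for all $i$, as required.

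There is no significant obstacle here: the argument is essentially a one-shot self-bounding calculation, exploiting that the fixed-point equation for $\lw$ couples the rows only through the aggregate matrix $\MM$, whose extreme eigenvalues are controlled by $\max_j \lw_j$ and $\min_j \lw_j$. The only care needed is tracking the $O(1)$ constants from $\AA^T\AA \approx_{O(1)} \II$ and $\|\AA_{i,:}\|_2^2 \approx_{O(1)} d/n$ through the sandwich; they compose to give a single final $O(1)$ approximation factor, independent of $n$ and $d$.
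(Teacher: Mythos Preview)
Your proof is correct, but it takes a different route from the paper's. The paper verifies that the uniform assignment $\ww_i = d/n$ satisfies the ``$\alpha$-almost Lewis'' condition $\AA_{i,:}(\AA^T\WW^{-1}\AA)^{-1}\AA_{i,:}^T \approx_{O(1)} \ww_i^2$ (which is immediate from the two hypotheses, since with this $\WW$ the matrix in the middle is just $(d/n)(\AA^T\AA)^{-1}\approx_{O(1)}(d/n)\II$), and then invokes the black-box stability lemma of Cohen--Peng (their Lemma~5.3) to conclude $\lw_i \approx_{O(1)} \ww_i = d/n$.

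Your argument instead works directly with the fixed-point equation for the true Lewis weights and extracts the bound via a self-bounding step on $\max_j\lw_j$ and $\min_j\lw_j$. This is essentially a specialized, one-shot instance of what the Cohen--Peng stability lemma proves in general, so your approach is more elementary and self-contained, at the cost of being slightly longer on the page. The paper's approach is shorter to write but imports an external result; yours would let a reader verify everything without chasing the citation. One small point worth making explicit in your write-up: the division by $\min_j\lw_j$ (resp.\ $\max_j\lw_j$) in the final step requires $\lw_j>0$, which is implicit in \Cref{def:lewis} (otherwise $\LW^{-1}$ is undefined), but stating it removes any doubt.
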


\begin{proof}[Proof of Lemma~\ref{lem:uniformSampling}]
	Note that by Lemma~\ref{lem:almostUniform} we have 
	\[
	\pp_i = \frac{N}{n} \approx_{O(1)} \frac{d \cdot O(\epsilon^{-2}\log{n})}{n}  \approx_{O(1)} \lw_i \cdot O(\epsilon^{-2}\log{n}).
	\]
	Thus, if we use $\pp_i = N/n$ for each $i$ in \Cref{thm:lewisWeights}, we get the first property while avoiding the cost of computing $\pp_i$'s stated in \Cref{thm:lewisWeights}.
	
The second property follows from Lemma~\ref{lem:l2Sampling} in \Cref{sec:lewis}. Specifically, we have 
\[
\tilde{\AA}^T\tilde{\AA} \approx_{O(1)} \frac{1}{O(\epsilon^{-2}\log{n})}\AA^T\LW^{-1}\AA,
\]
which then implies that
\[
\tilde{\AA}^T\tilde{\AA} \approx_{O(1)} \frac{n}{d \cdot O(\epsilon^{-2}\log{n})}\AA^T\AA \approx_{O(1)} \frac{n}{N} \II.
\]

Let $\tau$ denote the leverage scores for $\AA$.
Now, for the third property, it follows from the definition of leverage scores and the second property that
\[
\tau_i(\tilde{\AA}) = \norme{\left(\tilde{\AA}^T\tilde{\AA}\right)^{-1/2}\tilde{\AA}_{i,:}^T}^2 \approx_{O(1)} \norme{\sqrt{N/n}\tilde{\AA}_{i,:}^T}^2.
\]
Furthermore, Lemma~\ref{lem:lewisGivesEqualLevScores} in \Cref{sec:lewis} shows that $\tau_i(\tilde{\AA}) \approx_{O(1)} d/N$.
Factoring this into the equation gives us
\[ 
\norme{\tilde{\AA}_{i,:}^T}^2 \approx_{O(1)} dn/N^2. 
\]
\end{proof}

Now, to prove Lemma~\ref{lem:almostUniform}, we need the following definition and lemma.

\begin{definition}[Definition 5.1 of $\alpha$-almost Lewis weights for $\ell_1$ from \cite{cohenpeng}]
	For a matrix $\AA$, an assignment of weights $\ww$ is $\alpha$-\textit{almost Lewis} if
	\begin{align*}
	\AA_{i,:} (\AA^T \WW^{-1}\AA)^{-1} \AA_{:,i}^T \approx_\alpha \ww_i^2,
	\end{align*}
	where $\WW$ is the diagonal matrix form of $\ww$.
\end{definition}

\begin{lemma}[Definition 5.2 and Lemma 5.3 from \cite{cohenpeng}]\label{lem:LewisStability}
	Any set of $\alpha$-almost Lewis weights satisfy
	\begin{align*}
	\lw_i \approx_{\alpha} \ww_i.
	\end{align*}
\end{lemma}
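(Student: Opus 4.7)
The plan is to prove the stability bound via a one‑shot extremal‑ratio argument, without any fixed‑point iteration. Let $\lw$ denote the true $\ell_1$ Lewis weights (which exist and are unique by Definition~\ref{def:lewis}), and let $\ww$ be an $\alpha$‑almost Lewis weight assignment. Define the ratios $\rr_i := \ww_i / \lw_i$, and set $\rho := \max_i \rr_i$ and $\rho_{\min} := \min_i \rr_i$. To conclude $\ww_i \approx_\alpha \lw_i$ for every $i$, it suffices to show $\rho \le \alpha$ and $\rho_{\min} \ge 1/\alpha$.

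The first step is to convert the pointwise ratio between $\ww$ and $\lw$ into a spectral comparison of the Gram matrices. Writing $\RR$ for the diagonal matrix with entries $\rr_i$, we have $\LW^{-1} = \RR\, \WW^{-1}$ as diagonal matrices, and since $\rho_{\min} \le \rr_i \le \rho$ the PSD sandwich
\[
\rho_{\min}\, \WW^{-1} \preceq \LW^{-1} \preceq \rho\, \WW^{-1}
\]
follows immediately. Multiplying on the left and right by $\AA^T$ and $\AA$ gives $\rho_{\min}\, \AA^T \WW^{-1} \AA \preceq \AA^T \LW^{-1} \AA \preceq \rho\, \AA^T \WW^{-1} \AA$, and inverting flips the direction to
\[
\rho^{-1} (\AA^T \WW^{-1} \AA)^{-1} \preceq (\AA^T \LW^{-1} \AA)^{-1} \preceq \rho_{\min}^{-1} (\AA^T \WW^{-1} \AA)^{-1}.
\]
Applying the row quadratic form $\AA_{i,:} (\cdot)\, \AA_{i,:}^T$ and using the defining identity $\lw_i^2 = \AA_{i,:} (\AA^T \LW^{-1} \AA)^{-1} \AA_{i,:}^T$ together with the $\alpha$‑almost Lewis hypothesis $\AA_{i,:} (\AA^T \WW^{-1} \AA)^{-1} \AA_{i,:}^T \approx_\alpha \ww_i^2$, I obtain simultaneously the two scalar bounds
\[
\alpha^{-1} \rho^{-1}\, \ww_i^2 \;\le\; \lw_i^2 \;\le\; \alpha\, \rho_{\min}^{-1}\, \ww_i^2, \qquad \text{i.e.,} \qquad \rr_i^2 \le \alpha \rho \;\text{ and }\; \rr_i^2 \ge \rho_{\min}/\alpha,
\]
valid for every row $i$ at once.

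The argument closes by plugging the extremal index into each inequality separately. Taking $i = \arg\max_j \rr_j$ in the upper bound yields $\rho^2 \le \alpha \rho$, hence $\rho \le \alpha$; taking $i = \arg\min_j \rr_j$ in the lower bound yields $\rho_{\min}^2 \ge \rho_{\min}/\alpha$, hence $\rho_{\min} \ge 1/\alpha$. Together these give $1/\alpha \le \ww_i/\lw_i \le \alpha$ for all $i$, which is exactly $\lw_i \approx_\alpha \ww_i$.

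There is no serious obstacle here: the argument is just carefully bookkept PSD monotonicity followed by the extremal‑index trick. The only thing one must be careful about is tightness of the constant. Applying the $\alpha$‑almost Lewis hypothesis \emph{after} the spectral sandwich, and extracting the maximum and minimum ratios from the two inequalities \emph{independently}, is what keeps a single factor of $\alpha$ on each side; trying to bound a single ``condition number'' $\rho/\rho_{\min}$ directly would lose a factor of $\alpha^2$ and not give the advertised stability.
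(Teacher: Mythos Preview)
Your proof is correct. Note, however, that the paper does not actually prove this lemma: it is stated as a citation of Definition~5.2 and Lemma~5.3 from \cite{cohenpeng}, with no proof given in the present paper. Your argument---bounding the extremal ratios $\rho = \max_i \ww_i/\lw_i$ and $\rho_{\min} = \min_i \ww_i/\lw_i$ via the PSD sandwich on $(\AA^T\WW^{-1}\AA)^{-1}$ versus $(\AA^T\LW^{-1}\AA)^{-1}$ and then closing with the self-referential inequalities $\rho^2 \le \alpha\rho$ and $\rho_{\min}^2 \ge \rho_{\min}/\alpha$---is precisely the standard stability argument from \cite{cohenpeng}, so there is nothing to contrast.
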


\begin{proof}[Proof of Lemma~\ref{lem:almostUniform}]
We know that $\AA^T \AA \approx_{O(1)} \II$ and for each row $i$, $\norme{\AA_{i,:}}^2 \approx_{O(1)} d/n$. Then,
\begin{align*}
& \tau_i(\AA) = \AA_{i,:} (\AA^T \AA)^{-1} \AA_{i,:}^T \approx_{O(1)} \AA_{i,:} \AA_{i,:}^T\\
\implies & \tau_i(\AA) \approx_{O(1)} d/n.
\end{align*}

That is, all of the leverage scores are approximately equal. Then we can show that $\ww = (d/n)\ones$, where $\ones$ is the all ones vector. Then,
\begin{align*}
\AA_{i,:} (\AA^T \WW^{-1}\AA)^{-1} \AA_{:,i}^T &= (d/n)\AA_{i,:} (\AA^T  \AA)^{-1} \AA_{:,i}^T \approx_{O(1)} d^2/n^2 = \ww_i^2.
\end{align*}
Thus, $\ww$ is $O(1)$-almost Lewis. The result then follows by Lemma~\ref{lem:LewisStability}.
\end{proof}

\subsection{Proof of Lemma~\ref{lem:exactInit} and Lemma~\ref{lem:cg}}
\label{subsec:initialization_proofs}

To prove Lemma~\ref{lem:exactInit}, we use the following lemma:
\begin{lemma}\label{lem:subordinatenorm}
	Let $\vv \in \R^n$ be a vector with $\norm{\vv}_1 = 1$.
	Then, for a matrix $\AA \in \R^{n \times d}$,
	\begin{align*}
	\norme{\AA^T \vv} \leq \max_i \norme{\AA_{i,:}}.
	\end{align*}
\end{lemma}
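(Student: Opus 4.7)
The plan is to write $\AA^T \vv$ as a $\vv$-weighted combination of the row vectors of $\AA$ and then bound its Euclidean norm by the triangle inequality. Specifically, treating each row $\AA_{i,:}$ as a vector in $\R^d$, we have the identity
\[
\AA^T \vv \;=\; \sum_{i=1}^{n} \vv_i \, \AA_{i,:}^T,
\]
which is immediate from the definition of matrix-vector multiplication. This reduces the lemma to an elementary inequality about weighted sums of vectors.

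Next, I would apply the triangle inequality for $\norme{\cdot}$ to obtain $\norme{\AA^T \vv} \le \sum_i |\vv_i| \, \norme{\AA_{i,:}}$, and then upper-bound each $\norme{\AA_{i,:}}$ by $\max_j \norme{\AA_{j,:}}$. Pulling the maximum outside the sum leaves exactly $\bigl(\max_j \norme{\AA_{j,:}}\bigr) \cdot \norm{\vv}_1$, and the hypothesis $\norm{\vv}_1 = 1$ finishes the argument. Formally:
\[
\norme{\AA^T \vv} \;\le\; \sum_{i=1}^{n} |\vv_i|\,\norme{\AA_{i,:}} \;\le\; \Bigl(\max_i \norme{\AA_{i,:}}\Bigr) \sum_{i=1}^{n} |\vv_i| \;=\; \max_i \norme{\AA_{i,:}}.
\]

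There is no real obstacle here; this is just the standard fact that the operator norm $\|\AA^T\|_{\ell_1 \to \ell_2}$ equals the maximum $\ell_2$ norm of a row of $\AA$ (equivalently, of a column of $\AA^T$). It is recorded as a separate lemma only because it will be invoked immediately afterward in the proof of \Cref{lem:exactInit}, where $\vv$ will be taken to be a normalized residual vector $(\AA\xx^* - \bb)/\norm{\AA\xx^* - \bb}_1$ so that the conclusion upgrades the Jensen-style bound used earlier in \Cref{lem:easyInit} to the setting where $\AA$ is only approximately isotropic.
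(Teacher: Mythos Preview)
Your proof is correct and follows essentially the same approach as the paper: write $\AA^T\vv$ as the $\vv$-weighted combination $\sum_i \vv_i \AA_{i,:}$ and bound its norm using $\sum_i|\vv_i|=1$. The only cosmetic difference is that the paper phrases the key step as ``convexity of $\norme{\cdot}$'' rather than the triangle inequality, but these are the same argument here.
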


\begin{proof}
	\begin{align*}
	\norme{\AA^T \vv} & = \norme{\sum_i \AA_{i,:} \vv_i} \leq \max_i \norme{\AA_{i,:}}.
	\end{align*}
	where the inequality follows by the convexity of $\norm{\cdot}_2$ and since $\sum_i \abs{\vv_i} = 1$,
\end{proof}

\begin{proof}[Proof of Lemma~\ref{lem:exactInit}]
	By our assumptions on $\AA$, we have $\AA^T\AA + \BB= \II$ for some symmetric $\BB$ where $\norme{\BB}\le O(1)$. Since $\xx_0 = \arg\min_{\xx} \norm{\AA\xx - \bb}_2$, we have $\xx_0 = \AA^\dagger \bb = (\AA^T \AA)^{-1} \AA^T \bb$.
	\begin{align*}
	\norme{\xx_0-\xx^*} &= \norme{(\AA^T \AA)^{-1} \AA^T \bb - \xx^*}\\
	&= \norme{(\AA^T \AA)^{-1} \AA^T \bb - (\AA^T \AA)^{-1} \AA^T \AA \xx^*}\\
	&= \norme{(\AA^T \AA)^{-1} \AA^T ( \bb - \AA\xx^*)}.
	\end{align*}
	Let $\vv = (\AA \xx^* - \bb)/\norm{\AA \xx^* - \bb}_1$.
	\begin{align*}
	\norme{\xx_0 -\xx^*} &= \norme{(\AA^T \AA)^{-1} \AA^T \vv}\norm{\bb - \AA\xx^*}_1\\
	&= \norme{(\AA^T \AA + \BB)(\AA^T \AA)^{-1} \AA^T \vv}\norm{\bb - \AA\xx^*}_1\\
	&= \norme{(\II + \BB (\AA^T \AA)^{-1}) \AA^T \vv}\norm{\bb - \AA\xx^*}_1\\
	&\le \norme{\II + \BB (\AA^T \AA)^{-1}}\norme{\AA^T \vv}\norm{\bb - \AA\xx^*}_1.
	\end{align*}
	Now note:
	\begin{align*}
	\norme{\II + \BB (\AA^T \AA)^{-1}}&\le \norme{\II} + \norme{\BB}\norme{(\AA^T \AA)^{-1}}\\
	&\le O(1).
	\end{align*}
	Also, by Lemma~\ref{lem:subordinatenorm} and the assumptions on $\AA$,
	\begin{align*}
	\norme{\AA^T \vv} \leq O(\sqrt{d/n}).
	\end{align*}
	Thus, we have:
	\begin{align*}
	& \norme{\xx_0 - \xx^*} \le O(\sqrt{d/n}) \norm{\bb - \AA\xx^*}_1. 
	\end{align*}
\end{proof}

To prove Lemma~\ref{lem:cg}, we use the following theorem from \cite{Sachdeva2014faster}:

\begin{theorem}[Theorem 9.1 from \cite{Sachdeva2014faster}]\label{thm:cg}
	Given an symmetric positive definite matrix $\MM \in \R^{n \times n}$ and a vector $\yy\in \R^{n}$, the Conjugate Gradient method can find a vector $\xx$ such that $\norm{\xx - \MM^{-1} \yy}_{\MM} \le \delta \norm{\MM^{-1} \yy}_{\MM}$ in time $O( (t_{\MM} + n) \cdot \sqrt{\kappa(\MM)} \log (1/\delta))$, where $t_{\MM}$ is the time required to multiply $\MM$ with a given vector and $\kappa(\MM)$ is the condition number of $\MM$.
\end{theorem}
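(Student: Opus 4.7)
The plan is to prove the Conjugate Gradient runtime/convergence bound via the standard polynomial approximation argument, going through three ingredients: (i) the variational characterization of CG iterates, (ii) reduction to a Chebyshev best-approximation problem, and (iii) an accounting of the per-iteration arithmetic cost.

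First I would set up notation. Let $\xx^\star = \MM^{-1}\yy$ and let $\calK_k(\MM,\yy) = \Span{\yy,\MM\yy,\dots,\MM^{k-1}\yy}$ be the $k$-th Krylov subspace. Running $k$ steps of CG from $\xx_0 = 0$ produces $\xx_k \in \calK_k(\MM,\yy)$, and a clean first step of the proof is to recall (or verify, via the three-term recursion / conjugacy of the search directions) the variational identity
\[
\xx_k \;=\; \arg\min_{\xx \in \calK_k(\MM,\yy)} \norm{\xx-\xx^\star}_{\MM}.
\]
Because every $\xx \in \calK_k(\MM,\yy)$ has the form $\xx = p(\MM)\yy = p(\MM)\MM\xx^\star$ for some polynomial $p$ of degree $<k$, the error $\xx_k-\xx^\star$ equals $q(\MM)\xx^\star$ where $q(t) = tp(t)-1$ is a polynomial of degree at most $k$ with $q(0) = -1$. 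Decomposing $\xx^\star$ in the eigenbasis of $\MM$ and using $\norm{\cdot}_{\MM}$ then yields
\[
\norm{\xx_k-\xx^\star}_{\MM} \;\le\; \Big(\min_{\substack{\deg q \le k \\ q(0)=-1}} \max_{\lambda \in [\lambda_{\min},\lambda_{\max}]} \abs{q(\lambda)}\Big)\, \norm{\xx^\star}_{\MM},
\]
which reduces the convergence bound to a purely scalar polynomial problem on the spectrum of $\MM$.

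Next I would solve the min-max problem using Chebyshev polynomials of the first kind. The extremal polynomial is an affinely rescaled and renormalized $T_k$, namely $q^\star(\lambda) = -T_k\!\left(\frac{\lambda_{\max}+\lambda_{\min}-2\lambda}{\lambda_{\max}-\lambda_{\min}}\right)/T_k\!\left(\frac{\lambda_{\max}+\lambda_{\min}}{\lambda_{\max}-\lambda_{\min}}\right)$, so that on $[\lambda_{\min},\lambda_{\max}]$ we have $\abs{q^\star} \le 1/T_k\!\left(\frac{\kappa+1}{\kappa-1}\right)$ where $\kappa = \kappa(\MM)$. Using the closed form of $T_k$ on $[1,\infty)$ gives
\[
T_k\!\left(\tfrac{\kappa+1}{\kappa-1}\right) \;\ge\; \tfrac{1}{2}\!\left(\tfrac{\sqrt{\kappa}+1}{\sqrt{\kappa}-1}\right)^{\!k},
\]
so $\norm{\xx_k-\xx^\star}_{\MM} \le 2\bigl(\tfrac{\sqrt{\kappa}-1}{\sqrt{\kappa}+1}\bigr)^k \norm{\xx^\star}_{\MM}$. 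To drive this below $\delta \norm{\xx^\star}_{\MM}$ it suffices to take $k = O\!\left(\sqrt{\kappa}\,\log(1/\delta)\right)$, since $\log\!\bigl(\tfrac{\sqrt{\kappa}+1}{\sqrt{\kappa}-1}\bigr) = \Theta(1/\sqrt{\kappa})$.

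Finally, I would do the per-iteration accounting: in the standard CG recurrence each step performs exactly one matrix-vector multiplication $\MM\pp_k$ (cost $t_{\MM}$), a constant number of vector AXPYs and inner products (cost $O(n)$), so each iteration takes $O(t_{\MM}+n)$ time. Multiplying by the iteration bound yields the claimed $O((t_{\MM}+n)\sqrt{\kappa}\log(1/\delta))$ total time. The main obstacle is the first step, namely proving the variational characterization of $\xx_k$; this requires verifying that the residuals generated by CG are mutually orthogonal and the search directions are $\MM$-conjugate, which is a careful but standard induction on the three-term recurrence. Once that is in hand, the Chebyshev step and runtime accounting are routine, and one can cite the textbook derivation (e.g.\ from \cite{Sachdeva2014faster}) to complete the argument.
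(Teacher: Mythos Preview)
Your proof sketch is correct and is exactly the standard Krylov/Chebyshev argument for the conjugate gradient convergence rate. However, note that the paper does not actually prove this theorem: it is quoted verbatim as Theorem~9.1 from \cite{Sachdeva2014faster} and used as a black box to establish Lemma~\ref{lem:cg}. So there is no ``paper's own proof'' to compare against; the paper simply cites the result. Your derivation via the variational characterization of CG iterates, the Chebyshev min-max bound on $[\lambda_{\min},\lambda_{\max}]$, and the $O(t_{\MM}+n)$ per-iteration accounting is precisely the argument one finds in the cited reference (and in standard numerical linear algebra texts), so nothing further is needed.
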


\begin{proof}[Proof of Lemma~\ref{lem:cg}]
Let $\MM = \AA^T \AA$ and $\yy = \AA^T \bb$. Then by \Cref{thm:cg},
the conjugate gradient method finds a vector $\tilde{\xx}_0$ such that
$\norm{\tilde{\xx}_0}_{\AA^T \AA} \le \delta \norm{(\AA^T \AA)^{-1} \AA^T \bb}_{\AA^T \AA}$
in time $O((t_{\AA^T \AA} + d) \log (1/\delta))$.
Noting that $\AA^T \AA \approx_{O(1)} \II$, we get
\begin{align*}
\norme{\tilde{\xx}_0 - \xx_0} \le  O(\delta) \norme{\xx_0}.
\end{align*}

Next, we note:
\begin{align*}
& \norme{\bb} \ge \norme{\AA \xx_0 - \bb} \ge \norme{\AA \xx_0} - \norme{\bb} \\
\implies & \norme{\xx_0} \le O(\norme{\bb}).
\end{align*}

Now, since we assume that $\norme{\bb} \le n^c$ and $\norme{\AA \xx_0 - \bb} \ge 1/n^c$ for some $c$,
we can set $\delta = O(\epsilon/(n^c))$ to get:
\begin{align*}
& \norme{\tilde{\xx}_0 - \xx_0} \le \epsilon/n^c \le \epsilon \norme{\AA\xx_0 - \bb}. 
\end{align*}
\end{proof}

\subsection{Proof of \Cref{thm:noFastMatrixMult}}\label{subsec:achievingSparsity}


\noFastMatrixMult*

\begin{proof}
	We first prove correctness.
	We apply Lemma~\ref{lem:uniformSampling} to $[\AA\;\bb]$ and rescale $[\tilde{\AA}\;\tilde{\bb}]$ by $\sqrt{N/n}$.
	Note that rescaling does not change the relative error of our output $\tilde{x}$.
	From this rescaling, we have $\tilde{[\AA\;\bb]}^T\tilde{[\AA\;\bb]} \approx_{O(1)} \II$ and thus $\tilde{\AA}^T\tilde{\AA} \approx_{O(1)} \II$.
	This implies that $\tau_i([\tilde{\AA}\;\tilde{\bb}]) \approx_{O(1)} \|[\tilde{\AA}\;\tilde{\bb}]_{i,:}\|_2^2$ and $\tau_i(\tilde{\AA}) \approx_{O(1)} \|\tilde{\AA}_{i,:}\|_2^2$.
	By \Cref{fact:boundLev}, we have $\tau_i(\tilde{\AA}) \leq \tau_i([\tilde{\AA}\;\tilde{\bb}])$, so we have $\|\tilde{\AA}_{i,:}\|_2^2 \leq O(d/N)$ for all rows $i$.
	As a result, we can find $\tilde{\xx}_0$ according to Lemma~\ref{lem:initMain}.
	The rest of the correctness follows exactly as in the proof of \Cref{thm:applyKatyusha}.
	
	We now examine the running time and note that uniform sampling will take $O(nnz(\AA) + d^2\epsilon^{-2}\log{n})$ time to produce $[\tilde{\AA}\;\tilde{\bb}]$.
	By Lemma~\ref{lem:initMain}, we can then find $\tilde{\xx}_0$ in time $O(d^2\epsilon^{-2}\log{n})$ because $\tilde{\AA}$ is a $d\epsilon^{-2}\log{n} \times d$ matrix, so $t_{\tilde{\AA}^T\tilde{\AA}} = O(d^2\epsilon^{-2}\log{n})$.
	Finally from the analysis of \Cref{thm:applyKatyusha} we know that accelerated stochastic gradient descent requires $O(d^{2.5}\log^{1/2}{n}\cdot \epsilon^{-2})$ time.
	However, we note that the extra factor of $d$ came from \Cref{thm:katyushaRuntime} where we substituted $d$ for the time per iteration of stochastic gradient descent.
	This value can actually be upper bounded by the maximum number of entries in any row of $\tilde{\AA}$, which because of our uniform sampling is upper bounded by the maximum number of entries in any row of $\AA$.
	Adding a runtime overhead of $\log{n}$ for computing an approximation of the optimal objective, as in \Cref{subsec:binarySearch}, gives the desired runtime.
\end{proof}

\section{Secondary technical details for our main results}\label{sec:minor_details}

In this section, we address a couple assumptions that were made in the proofs of our main results. These assumptions were minor details, but we now include proofs for completeness.
First, we always assumed that our row dimension after preprocessing was $O(\min(n,d\epsilon^{-2}\log n))$, and we will address this in \Cref{subsec:simulatedSplitting}. Second, we required a constant factor approximation of the optimal objective value for which we give the procedure in \Cref{subsec:binarySearch}.
\subsection{Simulated Sampling of $\AA$}\label{subsec:simulatedSplitting}

In Lemma~\ref{lem:lewisAndRotate}, our primary preconditioning lemma, we set $N$ to be the minimum of $n$ and $O(d\epsilon^{-2}\log{n})$.
However, all of our sampling above assumed that $O(d\epsilon^{-2}\log{n})$ rows were sampled to achieve certain matrix concentration results.
Accordingly, we will still assume that $O(d\epsilon^{-2}\log{n})$ rows are sampled, but show that we can reduce the computational cost of any duplicate rows to $O(1)$, and hence the computation factor of $N$ can be assumed to be $\min\{n,O(d\epsilon^{-2}\log{n})\}$. The sampling procedure itself can be done in about $O(n)$ time. At the end of this section, we explain how the running time of Katyusha can be made to depend on $n$, rather than $d\epsilon^{-2}\log n$.

Ultimately, our proof of Lemma~\ref{lem:lewisAndRotate} will critically use the fact that $\tilde{\AA}$ has $O(d\epsilon^{-2}\log{n})$ rows in several places.
The following lemmas will then show how we can reduce this computation for duplicate rows, allowing us to substitute $n$ for $O(d\epsilon^{-2}\log{n})$ in the running time when $n \ll d\epsilon^{-2}\log{n}$.

\begin{lemma}\label{lem:duplicateRowsForAtransposeA}
	Let $\tilde{\AA} \in \R^{N \times d}$ be a matrix with at most $n$ unique rows, and for each unique row, we are given the number of copies in $\tilde{\AA}$. Then computing $\tilde{\AA}^T\tilde{\AA}$ takes at most $O(nd^{\omega -1})$ time.
\end{lemma}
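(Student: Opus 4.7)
The plan is to exploit the repetition structure in $\tilde{\AA}$ by factoring the Gram product $\tilde{\AA}^T \tilde{\AA}$ through the $n$ unique rows rather than all $N$ rows. Let $\BB \in \R^{n\times d}$ collect one copy of each unique row, and let $c_1,\dots,c_n$ denote the given multiplicity counts (so $\sum_i c_i = N$). Since rearranging the sum of outer products by unique row gives
\[
\tilde{\AA}^T \tilde{\AA} \;=\; \sum_{j=1}^{N} \tilde{\AA}_{j,:}^T \tilde{\AA}_{j,:} \;=\; \sum_{i=1}^{n} c_i\, \BB_{i,:}^T \BB_{i,:} \;=\; \BB^T \CC \BB,
\]
where $\CC = \mathrm{diag}(c_1,\dots,c_n)$, it suffices to compute the right-hand side quickly.

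First I would form $\CC \BB$, which is just a row-scaling of $\BB$ and costs $O(nd)$ time. The remaining work is the product $\BB^T (\CC \BB)$, a $d \times n$ matrix times an $n \times d$ matrix, yielding the desired $d \times d$ matrix.

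To get the claimed bound, I would invoke fast rectangular matrix multiplication on this product. In the regime of interest $n \geq d$ (which is the case whenever this lemma gives a speedup over the naive $O(N d^{\omega-1})$ bound), I would partition the shared inner dimension into $\lceil n/d \rceil$ consecutive blocks of size at most $d$, write $\BB^T \CC \BB = \sum_{k} (\BB^T)_k \, (\CC \BB)_k$ as a sum of products of two $d \times d$ matrices, and run one $O(d^{\omega})$-time square multiplication per block. Summing the per-block costs gives $O(\lceil n/d \rceil \cdot d^{\omega}) = O(n d^{\omega-1})$, as desired.

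I do not expect a real obstacle here; the main subtlety is just making sure the rectangular multiplication reduction is handled cleanly and that the $O(nd)$ scaling step for $\CC\BB$ is absorbed into $O(nd^{\omega-1})$ since $\omega \geq 2$. In the degenerate regime $n < d$, one could simply pad $\BB$ to a $d \times d$ matrix and perform a single $O(d^\omega)$ square multiplication, which is no worse than the stated bound under the usual convention that the lemma is applied when it represents savings over $O(Nd^{\omega-1})$; alternatively one keeps the stated bound as $O(nd^{\omega-1})$ in the intended $n \geq d$ regime where $\tilde{\AA}$ arises from oversampling with duplicates.
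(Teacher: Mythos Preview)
Your proposal is correct and essentially identical to the paper's proof: the paper replaces each group of $k$ duplicate rows by the single row $\sqrt{k}\,\tilde{\AA}_{i,:}$ to form an $n\times d$ matrix $\tilde{\AA}'$ with $\tilde{\AA}'^T\tilde{\AA}'=\tilde{\AA}^T\tilde{\AA}$, which is exactly your $\BB^T\CC\BB$ factorization with $\tilde{\AA}'=\CC^{1/2}\BB$. Your write-up is in fact more explicit than the paper's about why the resulting rectangular product costs $O(nd^{\omega-1})$.
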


\begin{proof}
	By definition 
	\[
	\tilde{\AA}^T\tilde{\AA} = \sum_i \tilde{\AA}_{i,:}^T\tilde{\AA}_{i,:}.
	\]
	Therefore, if we have $k$ copies of row $\tilde{\AA}_{i,:}$, we know that they contribute $k\tilde{\AA}_{i,:}^T\tilde{\AA}_{i,:}$ to the summation. Accordingly, if we replaced all of them with one row $\sqrt{k}\tilde{\AA}_{i,:}$, then this row would contribute an equivalent amount to the summation. As a result, we can combine all copies of unique rows to achieve an $n \times d$ matrix $\tilde{\AA'}$ and compute $\tilde{\AA'}^T\tilde{\AA'}$ which will be equivalent to $\tilde{\AA}^T\tilde{\AA}$.
\end{proof}

\begin{corollary}\label{cor:duplicateRowsRotAndInit}
	Let $[\tilde{\AA}\; \tilde{\bb}] \in \R^{N \times (d+1)}$ be a matrix with at most $n$ unique rows, and for each unique row, we are given the number of copies in $[\tilde{\AA}\; \tilde{\bb}]$. Then computing $\tilde{\AA}\UU$ where $\UU \in \R^{d\times d}$, and computing $\tilde{\AA}^T\tilde{\bb}$ takes $O(nd^{\omega - 1})$ and $O(nd)$ time, respectively.
\end{corollary}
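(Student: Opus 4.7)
\medskip

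\noindent\textbf{Proof proposal for \Cref{cor:duplicateRowsRotAndInit}.}
The plan is to mimic the reasoning of \Cref{lem:duplicateRowsForAtransposeA}: whenever a computation decomposes as a sum or a row-by-row operation, duplicate rows can be folded together at no loss, so the work is proportional to the number of \emph{unique} rows rather than to $N$. The hypothesis supplies both the list of $n$ unique rows of $[\tilde{\AA}\;\tilde{\bb}]$ and the multiplicity $k_i$ with which each appears, so the reductions below are straightforward bookkeeping.

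For the product $\tilde{\AA}\UU$, observe that matrix multiplication is row-wise: the $j$-th row of $\tilde{\AA}\UU$ is $\tilde{\AA}_{j,:}\UU$, which depends only on $\tilde{\AA}_{j,:}$. Consequently, if two rows of $\tilde{\AA}$ are equal, the corresponding rows of $\tilde{\AA}\UU$ are also equal. It therefore suffices to form the $n\times d$ matrix $\tilde{\AA}'$ consisting of the distinct rows of $\tilde{\AA}$ and compute $\tilde{\AA}'\UU$ once; the full $N\times d$ matrix $\tilde{\AA}\UU$ is then represented implicitly by the unique product rows together with the multiplicities $k_i$. Multiplying an $n\times d$ matrix by a $d\times d$ matrix using block fast matrix multiplication runs in $O(nd^{\omega-1})$ time, which yields the first bound.

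For the product $\tilde{\AA}^T\tilde{\bb}$, I would use linearity:
\[
\tilde{\AA}^T\tilde{\bb} \;=\; \sum_{j=1}^{N} \tilde{\AA}_{j,:}^T\,\tilde{\bb}_j \;=\; \sum_{i=1}^{n} k_i\, \tilde{\AA}'_{i,:}{}^{T}\,\tilde{\bb}'_i,
\]
where $(\tilde{\AA}'_{i,:},\tilde{\bb}'_i)$ is the $i$-th unique row of $[\tilde{\AA}\;\tilde{\bb}]$; again equal rows of $\tilde{\AA}$ correspond to equal entries of $\tilde{\bb}$ because the duplication was over rows of the joint matrix $[\tilde{\AA}\;\tilde{\bb}]$. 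Each of the $n$ summands is a scalar times a $d$-vector, costing $O(d)$ work, so the total is $O(nd)$. This gives the second bound and completes the proof.

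Since both steps are nearly identical to the argument that proves \Cref{lem:duplicateRowsForAtransposeA}, no substantive obstacle arises; the only point that needs care is noticing that the hypothesis has the multiplicities apply to the joint matrix $[\tilde{\AA}\;\tilde{\bb}]$, so that we may legitimately pair each unique $\tilde{\AA}'_{i,:}$ with a single $\tilde{\bb}'_i$ when reducing $\tilde{\AA}^T\tilde{\bb}$ to an $n$-term sum.
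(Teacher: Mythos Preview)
Your proposal is correct and follows essentially the same approach as the paper's proof, which also argues that $\tilde{\AA}_{i,:}\UU$ is the same for all copies of a row and that $\tilde{\AA}^T\tilde{\bb}=\sum_i \tilde{\AA}_{i,:}^T\tilde{\bb}_i$ can be collapsed over duplicates. Your write-up is in fact more explicit than the paper's two-sentence sketch, and your remark that duplicates are taken over rows of the joint matrix $[\tilde{\AA}\;\tilde{\bb}]$ (so each unique $\tilde{\AA}'_{i,:}$ pairs with a single $\tilde{\bb}'_i$) is a useful clarification.
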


\begin{proof}
	We can similarly use the fact that $\tilde{\AA}_{i,:}\UU$ is equivalent for all copies of $\tilde{\AA}_{i,:}$ and combine $k$ copies into the row $k\tilde{\AA}_{i,:}$.
	
	Analogously, we have $\tilde{\AA}^T\tilde{\bb} = \sum_i\tilde{\AA}_{i,:}^T\tilde{\bb}_i$, so we can combine duplicate rows.
\end{proof}

Furthermore, we need to show that we can efficiently sample $O(d\epsilon^{-2}\log{n})$ rows (ideally in $O(n)$-time) even when $O(d\epsilon^{-2}\log{n}) \gg n$. We will achieve this through known results on fast binomial distribution sampling.

\begin{theorem}[Theorem 1.1 in \cite{Farach-Colton2015}]\label{thm:binSampling}
	Given a binomial distribution $B(n,p)$ for $n \in \mathbb{N}$, $p \in \mathbb{Q}$, drawing a sample from it takes $O(\log^2{n})$ time using $O(n^{1/2 + \epsilon})$ space w.h.p., after $O(n^{1/2+\epsilon})$-time preprocessing for small $\epsilon > 0$. The preprocessing does not depend on $p$ and can be used for any $p'\in \mathbb{Q}$ and for any $n' \leq n$.
	
\end{theorem}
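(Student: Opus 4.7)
The theorem is quoted verbatim from Farach-Colton and Tsai, so in the present paper it is invoked as a black box and no proof is actually given. Nonetheless, let me sketch a plausible route to the result. The naive algorithm draws $n$ independent Bernoulli$(p)$'s and sums them, giving $O(n)$ time. To improve this, the plan is to exploit concentration: $B(n,p)$ lies in a window of width $\Otil(\sqrt{n})$ around the mean $np$ except with probability $n^{-\omega(1)}$, so with overwhelming probability only $\Otil(\sqrt{n})$ of the $n+1$ outcomes carry relevant mass. Therefore, if one can quickly evaluate the CDF at points in this window, inverse-transform sampling yields a sublinear algorithm.

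The $p$-independent preprocessing handles the structure that depends only on $n$. The successive ratios $\binom{n}{k+1}/\binom{n}{k} = (n-k)/(k+1)$ depend only on $n$, so one can precompute, at $\Theta(n^{1/2+\epsilon})$ evenly-spaced pivots, the quantities $\log\binom{n}{k}$ (or segment-tree partial products) using $O(n^{1/2+\epsilon})$ time and space. Then $\log\binom{n}{k}$ for any $k$ in the window can be assembled from its nearest pivot in $O(\log n)$ arithmetic operations, since at most $O(n^{1/2-\epsilon})$ ratios must be combined, and appropriate partial-product trees allow this in $O(\log n)$ time.

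For a query with rational $p$, the steps are: (i) identify the mode $\lfloor (n+1)p\rfloor$ in $O(\log n)$ time; (ii) draw uniform $u \in [0,1]$; (iii) binary search over the $\Otil(\sqrt{n})$ points in the concentration window for the unique $k$ with $F(k-1) < u \le F(k)$. Each CDF evaluation combines a preprocessed binomial coefficient with $p^k(1-p)^{n-k}$, which is computable in $O(\log n)$ time by repeated squaring on the rational $p$. The $O(\log n)$ CDF queries in the binary search therefore yield the claimed $O(\log^2 n)$ per sample. The $n^{-\omega(1)}$ fraction of samples that land outside the window is handled by a Chernoff-type fallback that contributes negligible expected cost.

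The main obstacle is numerical: $p^k(1-p)^{n-k}$ can underflow by a factor of $2^{-\Theta(n)}$, so one must carry out all CDF comparisons in log-space or fixed-precision arithmetic of carefully chosen width, while still provably sampling from the \emph{exact} binomial distribution rather than an approximation. Showing that $O(\log n)$-bit precision suffices per CDF evaluation—so that rounding errors cannot flip the inequality $F(k-1) < u \le F(k)$ for the chosen $u$, perhaps after a tiny amount of auxiliary randomness to break ties—is where the real technical work lies, not in the combinatorial data structure itself.
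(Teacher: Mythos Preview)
You correctly identify that the paper invokes this theorem as a black box from \cite{Farach-Colton2015} and supplies no proof of its own; there is nothing in the paper to compare your sketch against. Your supplementary outline goes beyond what the authors attempt, so the comparison task is moot here.
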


This result implies that sampling $m$ items independently can be done more efficiently if $m \gg n$, where we are only concerned with the number of times each item in the state space is sampled.
\begin{corollary}\label{cor:binSampling}
	Given a probability distribution $\mathcal{P} = (p_1,...,p_n)$ over a state space of size $n$, sampling $m$ items independently from $\mathcal{P}$ takes $O(m^{1/2 + \epsilon} + n\log^2{n})$-time.
	
\end{corollary}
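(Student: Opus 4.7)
The plan is to reduce multinomial sampling to a sequence of binomial draws and then invoke \Cref{thm:binSampling} as a black box. Drawing $m$ i.i.d.\ items from $\mathcal{P}$ is equivalent to producing a single sample from $\mathrm{Mult}(m; p_1, \ldots, p_n)$, i.e.\ counts $(k_1, \ldots, k_n)$ with $\sum_i k_i = m$, where $k_i$ is the number of draws landing in state $i$. The standard conditional decomposition of the multinomial lets us generate these counts one state at a time via binomial draws: set $m_1 := m$ and, for $i = 1, \ldots, n-1$, draw $k_i \sim B(m_i, q_i)$ where $q_i := p_i / (1 - \sum_{j < i} p_j)$, then set $m_{i+1} := m_i - k_i$; finally set $k_n := m_n$. (The degenerate case $\sum_{j \le i} p_j = 1$ is handled by setting all remaining $k_j$ to $0$.) A routine induction on $i$ shows that the joint law of $(k_1, \ldots, k_n)$ is exactly $\mathrm{Mult}(m; p_1, \ldots, p_n)$, which is what we need.

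For the running time, note that every $m_i \le m$, so each of the at most $n$ binomial draws has first parameter bounded by $m$. By \Cref{thm:binSampling}, a one-time preprocessing of cost $O(m^{1/2+\epsilon})$ suffices to subsequently draw from any $B(m', p')$ with $m' \le m$ in $O(\log^2 m)$ time; crucially, the preprocessing does not depend on $p'$ and can be reused across all $n$ draws. Summing over the draws gives a total cost of $O(m^{1/2+\epsilon} + n \log^2 m)$, which matches the stated bound $O(m^{1/2+\epsilon} + n \log^2 n)$ in the regime $m \le \poly(n)$ that is relevant for our applications (where $m = O(d \epsilon^{-2} \log n)$).

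The only minor subtlety is that the conditional probabilities $q_i$ need not be rational even when the $p_i$'s are, but truncating each $q_i$ to $O(\log n)$ bits introduces only negligible total-variation error and does not affect the asymptotic runtime. With this cosmetic fix, the corollary reduces to a direct application of the multinomial-to-binomial decomposition and the preprocessed binomial sampler of \Cref{thm:binSampling}; I do not anticipate any serious technical obstacle.
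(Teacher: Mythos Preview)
Your proposal is correct and follows essentially the same approach as the paper: both reduce multinomial sampling to a sequence of $n$ conditional binomial draws with decreasing first parameter and invoke \Cref{thm:binSampling} once for preprocessing and then once per draw. If anything, you are slightly more careful than the paper---you correctly track the per-draw cost as $O(\log^2 m)$ rather than $O(\log^2 n)$ and flag the rationality requirement on the binomial parameter, neither of which the paper addresses explicitly.
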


\begin{proof}
	Note that sampling independently $m$ times is equivalent to determining how many of each item is sampled by using the binomial distribution and updating after each item. More specifically, we can iterate over all $i \in [n]$ and draw $k_i \sim B(m,p_i)$, then update $m$ to be $m - k_i$ and scale up each $p_j$ (where $j > i$) by $(1-p_i)^{-1}$. It is straightforward to make the scaling up of each $p_j$ efficient, and according to Theorem~\ref{thm:binSampling} we can obtain the binomial sample in $O(\log^2{n})$-time.

	Furthermore, because $m$ is decreasing at each iteration, we can use the original preprocessing in Theorem~\ref{thm:binSampling} for each step to achieve our desired running time.	
\end{proof}

\begin{corollary}\label{cor:sampleEfficiently}
	Given a matrix $\AA \in \R^{n \times d}$, with a probability distribution over each row, we can produce a matrix
	$\tilde{\AA} \in \R^{O(d\epsilon^{-2}\log{n})\times d}$ according to the given distribution in time at most
	$O\left(\min(d\epsilon^{-2}\log{n}, (d\epsilon^{-2}\log{n})^{1/2 + o(1)} + n\log^2{n})\right).$
\end{corollary}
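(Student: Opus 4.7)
The plan is to exhibit two sampling procedures and observe that the overall running time is bounded by the minimum of their respective costs.

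For the first bound, I would use the classical alias-method construction. Given the input probability distribution $(p_1,\ldots,p_n)$ over the rows of $\AA$, one can build an alias table in $O(n)$ preprocessing time, after which each of the $N = O(d\epsilon^{-2}\log n)$ samples can be drawn in $O(1)$ time. This yields total running time $O(n + N)$, which in the regime $n \le N$ equals $O(N) = O(d\epsilon^{-2}\log n)$, matching the first term inside the min.

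For the second bound, I would invoke Corollary~\ref{cor:binSampling} directly, setting the number of samples $m$ to $N = O(d\epsilon^{-2}\log n)$. This immediately gives running time $O(N^{1/2+o(1)} + n\log^2 n)$, exactly the second term inside the min. Running whichever procedure is cheaper in the given regime --- equivalently, taking the minimum of the two bounds --- establishes the claimed running time.

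One point that requires care is how the matrix $\tilde{\AA}$ is represented: in both procedures the output is stored implicitly by tracking, for each unique row of $\AA$, how many times it was selected, giving a list of at most $\min(n, N)$ (index, multiplicity) pairs together with the scaling factors $1/\pp_i$ dictated by \Cref{thm:lewisWeights}. This compressed representation is exactly what Lemma~\ref{lem:duplicateRowsForAtransposeA} and Corollary~\ref{cor:duplicateRowsRotAndInit} are set up to consume downstream in Lemma~\ref{lem:lewisAndRotate}, so no explicit expansion to an $N \times d$ matrix is needed and the $N \cdot d$ write cost is avoided. Since the only nontrivial algorithmic content --- the fast binomial sampler --- is already packaged in Corollary~\ref{cor:binSampling}, I do not anticipate a substantive obstacle beyond this bookkeeping observation.
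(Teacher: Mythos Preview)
Your proposal is correct and matches the paper's approach: the paper states this corollary without proof, intending it as an immediate consequence of Corollary~\ref{cor:binSampling} (for the second term in the $\min$) together with direct per-sample drawing (for the first term), and you spell out both procedures explicitly. Your added remark about storing $\tilde{\AA}$ as a list of (row index, multiplicity) pairs is exactly the point of the surrounding subsection and is what makes the stated time bound meaningful, since it avoids the $N\cdot d$ write cost and feeds directly into Lemma~\ref{lem:duplicateRowsForAtransposeA} and Corollary~\ref{cor:duplicateRowsRotAndInit}.
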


Finally, our application of \Cref{thm:katyushaRuntime} assumes that it is given an $N \times d$ matrix, but we assumed that the computational cost could assume $N = \min\{n,O(d\epsilon^{-2}\log{n})\}$. A closer examination of Algorithm 2 in \cite{AllenZhu17}, which is the routine for \Cref{thm:katyushaRuntime}, shows that the factor of $N$ comes from a full gradient calculation, which can be done more quickly by combining rows in an equivalent manner to the lemma and corollary above.

\subsection{Approximating the Optimal Objective Value}\label{subsec:binarySearch}

For ease of notation, we let $f^* = \min_{\xx} \norm{\AA\xx - \bb}_1$ and $f_2^* = \min_{\xx} \norme{\AA\xx - \bb}$ in this section.
In our proof of both Theorem~\ref{thm:ourSGD} and ~\ref{thm:applyKatyusha}, we assumed access to a constant approximation of $f^*$ with a runtime overhead of $\log{n}$.
We will obtain access to this value by giving polynomially approximate upper and lower bounds on $f^*$ and using our primary algorithm on $\log{n}$ guesses for $f^*$ within this range.
We start with the following lemma that gives upper and lower bounds on $f^*$:

\begin{lemma}\label{lem:approxOfObjectiveVal}
	Given a matrix $\AA \in \R^{n \times d}$ and a vector $\bb \in \R^n$, if $\xx_2^*$ minimizes $\norme{\AA\xx-\bb}$ then
	\[ 
	\norme{\AA\xx_2^* - \bb} \leq \norm{\AA\xx^* - \bb}_1 \leq \sqrt{n}\norme{\AA\xx_2^* - \bb}.
	\]
\end{lemma}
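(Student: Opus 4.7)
The plan is to prove the two inequalities separately, each by chaining an optimality argument with a standard comparison between the $\ell_1$ and $\ell_2$ norms on $\mathbb{R}^n$.

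For the lower bound $\norme{\AA\xx_2^* - \bb} \le \norm{\AA\xx^* - \bb}_1$, I would first apply optimality of $\xx_2^*$ with respect to the $\ell_2$ objective to get $\norme{\AA\xx_2^* - \bb} \le \norme{\AA\xx^* - \bb}$, and then apply the elementary inequality $\norme{\vv} \le \norm{\vv}_1$ (valid for every $\vv \in \mathbb{R}^n$, since the $\ell_2$ norm is dominated by the $\ell_1$ norm on any finite-dimensional space) to the vector $\vv = \AA\xx^* - \bb$.

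For the upper bound $\norm{\AA\xx^* - \bb}_1 \le \sqrt{n}\norme{\AA\xx_2^* - \bb}$, I would first apply optimality of $\xx^*$ with respect to the $\ell_1$ objective to obtain $\norm{\AA\xx^* - \bb}_1 \le \norm{\AA\xx_2^* - \bb}_1$, and then invoke the Cauchy--Schwarz inequality in the form $\norm{\vv}_1 \le \sqrt{n}\,\norme{\vv}$ (which follows from $\norm{\vv}_1 = \langle \mathrm{sgn}(\vv), \vv\rangle \le \norme{\mathrm{sgn}(\vv)}\,\norme{\vv} \le \sqrt{n}\,\norme{\vv}$) applied to $\vv = \AA\xx_2^* - \bb$.

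There is no genuine obstacle here; the lemma is a bookkeeping fact combining the optimality of the two minimizers with the standard $\ell_1$--$\ell_2$ norm comparison on $\mathbb{R}^n$. The only thing to be careful about is not to confuse the roles of $\xx^*$ and $\xx_2^*$: one uses the optimality of the $\ell_2$ minimizer to derive the lower bound and the optimality of the $\ell_1$ minimizer to derive the upper bound, in each case inserting the suboptimal point to set up the appropriate norm comparison.
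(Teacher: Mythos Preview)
Your proposal is correct and follows essentially the same approach as the paper: both proofs chain the optimality of $\xx_2^*$ (for the lower bound) and $\xx^*$ (for the upper bound) with the standard inequalities $\norme{\vv}\le\norm{\vv}_1\le\sqrt{n}\,\norme{\vv}$ on $\R^n$.
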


\begin{proof}
	By known properties of $\ell_1$ and $\ell_2$, for any $\xx \in \R^n$, we have $\norm{\xx}_2 \leq \norm{\xx}_1 \leq \sqrt{n}\norm{\xx}_2$.
	Accordingly, we must have 
	\[
	\norme{\AA\xx_2^* - \bb} \leq \norme{\AA\xx^* - \bb} \leq \norm{\AA\xx^* - \bb}_1,
	\]
	where the first inequality follows from $\xx_2^*$ being the $\ell_2$-minimizer. Similarly, we also have 
	\[
	\norm{\AA\xx^* - \bb}_1 \leq \norm{\AA\xx_2^* - \bb}_1 \leq \sqrt{n}\norme{\AA\xx_2^* - \bb},
	\]
	where the first inequality follows from $\xx^*$ being the $\ell_1$-minimizer.
\end{proof}

Since $\AA^T \AA = \II$, our initialization of $\AA^T b$ is equal to $\xx_2^*$. Then we can compute $\norme{\AA\xx_2^* - \bb}$ in $O(Nd)$ time.\footnote{Note that our sampled and rotated $\tilde{\AA}\UU$ from Lemma~\ref{lem:lewisAndRotate} loses any sparsity guarantees that $\AA$ may have had.} Consequently, if we let $f_2^*$ be the minimized objective function $\norme{\AA\xx - \bb}$, we can compute polynomially close upper and lower bounds, $f_2^*$ and $\sqrt{n}f_2^*$ respectively, for $f^*$.

\begin{lemma}\label{lem:binarySearch}
	In both variants of our primary algorithm for Theorem~\ref{thm:ourSGD} and ~\ref{thm:applyKatyusha}, we can run the respective algorithms with a constant approximation of $f^*$ by running them $\log{n}$ times using different approximations of $f^*$, which we will denote by $\tilde{f}^*$.
	Furthermore, the runtime of each is independent of the choice of $\tilde{f}^*$.
\end{lemma}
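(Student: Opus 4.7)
The plan is to do a geometric grid search over the interval for $f^*$ supplied by Lemma~\ref{lem:approxOfObjectiveVal}. First I would compute $f_2^* = \norme{\AA\xx_2^* - \bb}$ in $O(Nd)$ time; under our preconditioning, $\xx_2^* = \AA^T\bb$ has already been computed as the initialization, so this only costs a single matrix-vector product. Lemma~\ref{lem:approxOfObjectiveVal} then guarantees $f_2^* \le f^* \le \sqrt{n}\, f_2^*$. Next, for $i = 0, 1, \ldots, \lceil \tfrac{1}{2}\log_2 n \rceil$ (so $O(\log n)$ values), set $\tilde{f}^*_i = 2^i f_2^*$, so that the $\tilde{f}^*_i$'s form a doubling geometric grid covering $[f_2^*, \sqrt{n}\,f_2^*]$. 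For at least one index $i^\star$, we are guaranteed $\tilde{f}^*_{i^\star} \le f^* \le 2\tilde{f}^*_{i^\star}$, i.e., $\tilde{f}^*_{i^\star}$ is within a factor of $2$ of $f^*$.

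For each $i$, I would run the respective primary algorithm (of Theorem~\ref{thm:ourSGD} or Theorem~\ref{thm:applyKatyusha}) substituting $\tilde{f}^*_i$ wherever knowledge of $f^*$ was needed, namely in the initialization-distance bound $R \approx \sqrt{d/n}\, f^*$ (from Lemma~\ref{lem:easyInit}) for standard SGD, and in the corresponding $\Theta$ and $\Delta$ for the smoothing reduction to Katyusha in Lemma~\ref{lem:applyReduction}. For the correct index $i^\star$, the constant-factor error in $\tilde{f}^*_{i^\star}$ only affects the final bounds by constant factors, so the algorithm produces a $\tilde{\xx}_{i^\star}$ satisfying $\norm{\AA\tilde{\xx}_{i^\star} - \bb}_1 \le (1+\epsilon)\, f^*$. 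I would then compute $\norm{\AA\tilde{\xx}_i - \bb}_1$ in $O(nnz(\AA))$ time for each candidate and output the one with minimum objective; since $\tilde{\xx}_{i^\star}$ is among the candidates, the returned solution is at least as good.

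The key technical point is that the runtime of each individual call is independent of $\tilde{f}^*_i$. For standard SGD, Theorem~\ref{thm:standardSGD} yields an iteration count of $O(R^2 L^2/(\epsilon f^*)^2)$ to achieve relative error $\epsilon$; plugging $R = O(\sqrt{d/n})\, f^*$ and $L = O(\sqrt{nd})$, the factors of $f^*$ cancel and we obtain $O(d^2/\epsilon^2)$ iterations regardless of $\tilde{f}^*_i$. For Katyusha, the same cancellation occurs in Lemma~\ref{lem:applyReduction}: with $\delta = O(\epsilon f^*)$, the summand $\sqrt{N(Nd 2^t/\Delta)(d f^{*2} 2^t/(N\Delta))}$ evaluates to $d\sqrt{N}\, f^*/\Delta \cdot 2^t$, whose telescoped sum is $d\sqrt{N}\, f^*/\delta = d\sqrt{N}/\epsilon$, again free of $f^*$. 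Scaling $\tilde{f}^*_i$ by a constant therefore perturbs the parameters but leaves the asymptotic work unchanged. Aggregating across $O(\log n)$ guesses yields the $\log n$ overhead claimed.

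The main obstacle is the careful bookkeeping in the Katyusha analysis to confirm that every appearance of $f^*$ truly cancels after substituting the values of $\Delta$, $\Theta$, $\sigma_0$, and $\delta$; this requires retracing the substitutions in Lemma~\ref{lem:applyReduction} and verifying that replacing $f^*$ by $\tilde{f}^*_i = \Theta(1)\cdot f^*$ only shifts the constants in $R$, $\Theta$, $\Delta$ by constant factors, so that the error guarantee remains $(1+O(\epsilon))\, f^*$. Aside from this verification, the argument is otherwise routine.
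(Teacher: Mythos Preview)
Your proposal is correct and follows essentially the same approach as the paper: a geometric grid search over $[f_2^*,\sqrt{n}\,f_2^*]$ using Lemma~\ref{lem:approxOfObjectiveVal}, running the algorithm once per guess, and returning the candidate with smallest objective, with the $f^*$-independence of the runtime coming from the cancellation you describe. One small point the paper makes more explicit than you do: the error guarantee $f(\tilde\xx)-f^*\le \epsilon\,\tilde f^*$ is only valid when $\tilde f^*\ge f^*$ (since the algorithms require $R$ to be a genuine \emph{upper} bound on $\norme{\xx_0-\xx^*}$), so the ``good'' index is really the one with $\tilde f^*_{i^\star+1}\in[f^*,2f^*]$ rather than the underestimate $\tilde f^*_{i^\star}$; your constant-factor slack absorbs this, but it is worth stating.
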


\begin{proof}
	We first examine the latter claim and note that the gradient descent portion of both algorithms take upper bounds on $\norme{\xx_0 - \xx^*}$ as inputs.
	Therefore, given a certain $\tilde{f}^*$ we can input the upper bound $O(\sqrt{d/n})\tilde{f}^*$ and following the analysis of the proofs in Theorems~\ref{thm:ourSGD} and ~\ref{thm:applyKatyusha},
	in runtime $O(d^3\epsilon^{-2})$ and $O\left(d^{2.5}\epsilon^{-2} \sqrt{\log{n}}\right)$ respectively,
	we are guaranteed that we achieve $\tilde{\xx}$ such that $f(\tilde{\xx}) - f^* \leq \epsilon \tilde{f}^*$ with high probability.
	However, note that this is only true if $f^* \leq \tilde{f}^*$.
	Otherwise, we are given no guarantee on the closeness of $f(\tilde{\xx})$ to $f^*$.
	
	The runtime of each algorithm is then not affected by our approximation of $\tilde{f}^*$, however, the closeness guarantees are affected.
	Accordingly, we will run the gradient descent procedure in each respective algorithm $\log {n}$ times with $\tilde{f}^* = f_2^* \cdot 2^i$ for $i = 0$ to $\log {n}$, and whichever iteration produces $\tilde{\xx}$ that minimizes $f(\cdot)$ will be output. Lemma~\ref{lem:approxOfObjectiveVal} implies that there must exist some $i$ such that $f_2^* \cdot 2^i \leq  f^* \leq f_2^* \cdot 2^{i+1}$. Therefore, when we run our algorithm with $\tilde{f}^* = f_2^* \cdot 2^{i+1}$, the algorithm will succeed with high probability. Thus, the overall success probability is at least as high as any individual run of the algorithm. Moreover, the output $\tilde{\xx}$ is guaranteed to have $f(\tilde{\xx}) - f^* \leq 2\epsilon {f}^*$.
\end{proof}

\end{document}